\def\conference{0}
\def\shownotes{1}

\ifnum\conference=1

\documentclass[sigconf]{acmart}

\copyrightyear{2026}
\acmYear{2026}
\setcopyright{cc}
\setcctype{by}
\acmConference[CCS '26]{Proceedings of the 2026 ACM SIGSAC Conference on Computer and Communications Security}{November 15--19, 2026}{The Hague, Netherlands}
\acmBooktitle{Proceedings of the 2026 ACM SIGSAC Conference on Computer and Communications Security (CCS '26), November 15--19, 2026, The Hague, Netherlands}
\acmDOI{10.1145/3830454.3846678}
\acmISBN{979-8-4007-2871-6/2026/11}
\newcommand{\myparab}[1]{\subsubsection{#1}}
\else
\documentclass[twocolumn]{article}
\usepackage[margin=.75in]{geometry}
\newcommand{\myparab}[1]{\paragraph{#1}}
\usepackage{mathtools}
\usepackage{cite}
\usepackage{pgfplots}
\pgfplotsset{compat=1.17}
\usepackage{tikz}
\usepackage{balance}
\usepackage{authblk}
\usepackage{hyperref}
\usepackage{color}
\usepackage{boldline}
\fi

\usepackage{amsthm}
\usepackage{amssymb}
\usepackage{url}
\usepackage{framed}

\usepackage{makecell}

\usepackage{subcaption}
\usepackage{multirow}

\usepackage[inline]{enumitem}
\setlist[enumerate,1]{label={(\arabic*)}}

\newtheorem{theorem}{Theorem}
\newtheorem{definition}{Definition}
\newtheorem{lemma}{Lemma}

\newtheorem{corollary}{Corollary}

\DeclareMathOperator*{\Exp}{\mathbb{E}}

\newcommand{\id}{\operatorname{id}}
\newcommand{\tab}{\mathrm{tab}}

\newcommand{\cvr}{\mathrm{cvr}}
\newcommand{\dup}{\mathrm{dup}}
\newcommand{\coarse}{\mathrm{coarse}}

\newcommand{\act}{\mathrm{act}}
\newcommand{\obs}{\mathrm{obs}}

\newcommand{\size}{\mathsf{S}}

\newcommand{\winner}{\mathsf{W}}
\newcommand{\batchnum}{\beta}
\newcommand{\batches}{m}
\newcommand{\batch}{\mathsf{batch}}
\newcommand{\manifest}{\mathsf{Manifest}}
\providecommand{\TV}{d_{\mathrm{TV}}}
\newcommand{\loser}{\mathsf{L}}
\newcommand{\disc}{\mathsf{D}}
\newcommand{\obsdisc}{\mathsf{D}^{\Duplicates}}

\newcommand{\iter}{\mathsf{iter}}
\newcommand{\ballot}{\mathbf{b}}
\newcommand{\Ballots}{\mathbf{B}}
\newcommand{\sample}{\mathsf{Sample}}
\newcommand{\inAcc}{\Delta}
\newcommand{\inAccSmall}{\delta}

\newcommand{\Duplicates}{\mathsf{Duplicate}}

\newcommand{\Labels}{{L}}

\newcommand{\consistent}{\textsc{Consistent}}
\newcommand{\inconclusive}{\textsc{Inconclusive}}
\newcommand{\RLA}{\mathsf{RLA}}
\newcommand{\Stop}{\mathsf{Stop}}

\newcommand{\Stopdeltaalp}{\mathsf{Stop}_{\sigma, \alpha}}
\newcommand{\FirstStop}{\tau_{\Stop}}
\newcommand{\risk}{\alpha}
\newcommand{\Criterion}{\mathsf{R}}

\newcommand{\Criteriondeltaalp}{\mathsf{R}_{\sigma, \alpha}}

\newcommand{\Auditor}{\mathcal{C}}
\newcommand{\Adversary}{\mathcal{A}}
\newcommand{\N}{\mathbb{N}}

\newcommand{\procedure}[1]{\mathtt{#1}}
\newcommand{\DetectDuplicates}{\procedure{DetectDuplicates}}
\newcommand{\BoundSize}{\procedure{BoundSize}}
\newcommand{\BasicExperiment}{\procedure{BasicExperiment}}
\newcommand{\CheckManifest}{\procedure{CheckManifest}}
\newcommand{\Test}{\mathcal{T}}

\newcommand{\Bad}{\mathsf{Bad}}
\newcommand{\R}{\mathbb{R}}

\newcolumntype{L}[1]{>{\raggedleft\arraybackslash}p{#1}}

\ifnum\shownotes=1
\newcommand{\authnote}[2]{{\marginpar{\tiny \textcolor{red}{\textsf{#1 notes: }\textcolor{blue}{ #2}}}}}
\else
\newcommand{\authnote}[2]{}
\fi

\newcommand{\fullversionref}{the full version~\cite{fuller2026sublinear}}
\newcommand{\appref}[1]{\ifnum\conference=1 \fullversionref\else Appendix~\ref{#1}\fi}
\newcommand{\Appref}[1]{%
  \ifnum\conference=1
    The full version~\cite{fuller2026sublinear}%
  \else
    Appendix~\ref{#1}%
  \fi
}

\begin{document}

\title{Sublinear Risk-Limiting Audits from Direct Ballot Selection and Statistical Ballot Manifests}
\ifnum\conference=0
\author{Benjamin Fuller\thanks{\textcopyright{} Benjamin Fuller, Abigail Harrison,
Alexander Russell 2026. This is the author's version of The Work. It is posted
here for your personal use. Not for redistribution. The definitive version was
published in CCS 2026,
\href{https://doi.org/10.1145/3830454.3846678}{\texttt{https://doi.org/10.1145/3830454.3846678}}.}}
\author{Abigail Harrison}
\author{Alexander Russell}

\affil{
  {\small \texttt{\{benjamin.fuller,abigail.harrison,acr\}@uconn.edu}}\\
  Voting Technology Research Laboratory\\
  University of Connecticut}
\else
\author{Benjamin Fuller}
\email{benjamin.fuller@uconn.edu}
\affiliation{%
  \department{Voting Technology Research Laboratory}
  \institution{University of Connecticut}
  \city{Storrs}
  \state{Connecticut}
  \country{USA}}
\author{Abigail Harrison}
\email{abigail.harrison@uconn.edu}
\affiliation{%
  \department{Voting Technology Research Laboratory}
  \institution{University of Connecticut}
  \city{Storrs}
  \state{Connecticut}
  \country{USA}}
\author{Alexander Russell}
\email{acr@uconn.edu}
\affiliation{%
  \department{Voting Technology Research Laboratory}
  \institution{University of Connecticut}
  \city{Storrs}
  \state{Connecticut}
  \country{USA}}
\renewcommand{\shortauthors}{Fuller et al.}
\fi

\ifnum\conference=0
\maketitle
\fi

\begin{abstract}
  Risk-limiting audits (RLAs) are post-election auditing procedures that rigorously guarantee a specified maximum probability that an incorrect electoral outcome will not be detected. Aside from ready access to physical ballots, efficient RLA methods require a software-independent 
  count of the ballots in each batch, called a \emph{ballot manifest}. While electoral procedures can efficiently provide rough estimates for batch sizes, even slight inaccuracies (commensurate with the margin of the contest under audit) can invalidate conventional RLAs (Lindeman et al., EVT 2012). Thus, establishing a sufficiently accurate manifest often requires handling every ballot in the election and can dominate the cost of conducting an RLA.

With the goal of providing audits with sublinear effort across the full spectrum of election architectures,
we propose two new risk-limiting techniques. The first is a statistical test that checks a trusted, coarse manifest 
against an untrusted, tabulator-supplied manifest to certify that the aggregate error is small; this bounds both the error in the reported ballot total and the distortion of the ballot-sampling distribution. The second is a new approach for election architectures that do not provide efficient indexing of ballots by identifier, as is typical of voter-facing tabulators. We call this approach \emph{direct ballot selection}: it reverses the traditional comparison procedure by selecting physical ballots uniformly and comparing them to their corresponding cast vote records. This method also incorporates a new statistical test to check for identifier duplication.

Taken together, these techniques provide a striking reduction in the effort required to conduct RLAs across a wide swath of election configurations.
Our two main findings are as follows:
  \begin{enumerate}
  \item Manifest creation time can be reduced with only a modest increase in the number of ballots sampled in the audit. For California (the largest state by population in the United States) at a 3\% margin, our model indicates that the overall audit time for comparison, polling, and direct selection audits is reduced by factors of approximately $438$, $27$, and $10$, respectively.
  \item Direct ballot selection improves over state-of-the-art polling for small margins. For Connecticut (29th-largest by population) at a 1\% margin, it requires 55\% fewer ballots than the Minerva (USENIX Security 2021) and Providence (USENIX Security 2023) ballot polling methods.
  \end{enumerate}
\end{abstract}

\ifnum\conference=1
\begin{CCSXML}
<ccs2012>
<concept>
<concept_id>10003752.10003777.10003789</concept_id>
<concept_desc>Theory of computation~Cryptographic protocols</concept_desc>
<concept_significance>500</concept_significance>
</concept>
<concept>
<concept_id>10002950.10003648.10003671</concept_id>
<concept_desc>Mathematics of computing~Probabilistic algorithms</concept_desc>
<concept_significance>500</concept_significance>
</concept>
<concept>
<concept_id>10002978.10003029</concept_id>
<concept_desc>Security and privacy~Human and societal aspects of security and privacy</concept_desc>
<concept_significance>500</concept_significance>
</concept>
</ccs2012>
\end{CCSXML}

\ccsdesc[500]{Theory of computation~Cryptographic protocols}
\ccsdesc[500]{Mathematics of computing~Probabilistic algorithms}
\ccsdesc[500]{Security and privacy~Human and societal aspects of security and privacy}

\keywords{risk-limiting audits, election security, ballot manifests, direct ballot selection}

\maketitle
\fi

\section{Introduction}

Elections are fundamental civic processes that are challenging to analyze from a security perspective, 
as they rely on complex heterogeneous systems combining software, hardware, paper records, 
and human procedures at scale. Among the most important advances in election security has been the development of \emph{risk-limiting audits} (RLAs)---
post-election auditing procedures that provide rigorous statistical guarantees of election outcomes in a natural security model~\cite{lindeman2012gentle,RLAWorkbook,rla-working-group,bernhard2021risk,stark2009auditing,stark2010super,higgins2011sharper,lindeman2012bravo,ottoboni2018risk,ottoboni2019bernoulli,stark2020sets,waudby2021rilacs,blom2021assertion,zagorski2021minerva,harrison2022adaptive,jones2024scan,fuller2024decisive}: If the reported outcome of an election is incorrect, an RLA establishes a known, pre-specified maximum chance---called the \emph{risk limit}---that the audit will fail to detect that error.

A critical, defining property of RLAs is \emph{software independence}: their guarantees 
do not rely on the correctness of 
ballot tabulators or other electronic equipment, but on an external, physical record of the election.
This can be achieved by assuming that auditors have access to the paper ballots that reflect the intent of the voters. RLAs are typically designed to further guarantee \emph{transparency}; that is, a third party observing the audit proceedings can also verify the outcome of the audit.

RLAs call for auditors to hand count a collection of randomly selected ballots, with results that are incorporated into an appropriate statistical test. Thus, an important metric for evaluating the efficiency of an RLA is \emph{sample complexity}: the total number of ballots that must be counted by hand in order to guarantee the desired risk limit. This topic has been given extensive attention in the literature and depends on both the audit method and the \emph{margin} $\mu$ of the contest. (For a conventional two-candidate, first-past-the-post election, the margin is the difference between the vote totals for the candidates as a fraction of the total number of ballots cast.)

The prevailing methods have a further requirement: a \emph{ballot manifest}. In the typical setting where ballots are organized into physical batches, a ballot manifest specifies the number of ballots in each batch. Sample-efficient RLA methods are extremely sensitive to ballot manifest inaccuracies: in particular, multiplicative inaccuracies in the declared batch sizes of order $1 + \inAccSmall$, where $\inAccSmall \approx \mu$, are sufficient to invalidate the audit's risk guarantee (see Lindeman et al.~\cite{lindeman2012bravo} and Appendix~\ref{sec:inaccurate no work}). 
In order to maintain strict software independence, the conventional approach is to establish the manifest by manually counting the ballots in each batch, a process which scales linearly with the total number of ballots cast. Consequently, when the underlying statistical tests have moderate sample complexity, manifest creation can dominate the overall cost of conducting the audit.

An additional challenge arises for \emph{ballot comparison audits}, which provide the best sample efficiency among known audit methods. These audits require a detailed ``ballot-by-ballot'' tabulation of the election, consisting of a comprehensive table of ``cast vote record'' (CVR) entries; an individual CVR declares the votes appearing on a particular ballot. Ballots are sampled by uniformly picking a CVR, identifying the individual cast ballot associated with it, and then retrieving and interpreting the ballot. While modern tabulators can imprint identifiers onto ballots as they are cast for CVR association, voter-facing tabulators---that is, tabulators with which 
voters interact directly 
during ballot casting---intentionally randomize ballot order to preserve voter privacy~\cite{crimmins2024dvsorder}. As a result, 
locating 
a specific ballot in an unordered batch may require handling the entire batch, an independent requirement that can necessitate effort scaling linearly with the number of ballots cast. (A costly alternative method that does not 
require ballot identifiers is to hand interpret the whole batch, called batch comparison; see below.)

We propose two new techniques for risk-limiting audits that address these challenges. They improve RLA efficiency across a wide spectrum of election architectures and auditing methods.
\begin{itemize}
\itemsep0em
\item\textbf{Statistical Accuracy Testing for Ballot Manifests.}
Coarse ballot manifests---providing batch-size estimates accurate to
within, say, 10\%---can be obtained with low effort through bulk
methods such as weighing or measuring ballots. 
On the other hand, electronic tabulation provides an untrusted but 
typically accurate declaration of the batch sizes. We 
define a statistical test
that uses the trusted coarse manifest to assess the tabulator's
declaration. If the test 
succeeds, then the declared batch sizes have small total $\ell_1$ error,
except with a prescribed failure probability.
This guarantee simultaneously bounds the error
in the total number of ballots and the distortion of the
size-proportional sampling distribution. We incorporate both the
failure probability and the residual sampling distortion into the
risk limit of the resulting audit.
  \item\textbf{Comparison RLAs via Direct Ballot Selection.} In election settings with high-accuracy manifests  
  and ballot identifiers in random order, we propose a new RLA method. Procedurally, these audits---which we call \emph{direct ballot selection} audits---reverse the conventional ballot comparison procedure by randomly selecting a physical ballot within a sampled batch and comparing it with its declared CVR. The method incorporates a new statistical test for \emph{identifier duplication}.
    For this reason, direct ballot selection exhibits a rather different scaling than that of existing audits. In certain natural parameter regimes, these new audits provide striking efficiency improvements over state-of-the-art polling and batch comparison audits, which are the immediate alternatives when identifiers are 
    arbitrarily ordered.
\end{itemize}

\subsection{The Risk-Limiting Auditing Landscape}
Traditional RLAs can be roughly organized into three methods:
\begin{itemize}
\itemsep0em
\item \textbf{Ballot polling audits} randomly sample ballots and test whether their empirical statistics support the reported winner. Polling requires $O\left({\log(1/\alpha)}/{\mu^2}\right)$ sampled ballots, where $\alpha$ is the risk limit and $\mu$ is the tabulated margin.
\item \textbf{Batch comparison audits} assume that ballots are organized into batches, each of which has been separately tabulated. The audit selects and hand counts random batches, comparing these subtotals against the corresponding tabulator 
subtotals. This requires $O\left({B\log(1/\alpha)}/{\mu}\right)$ sampled ballots in expectation, where $B$ is the size-weighted average batch size.
\item \textbf{Ballot comparison audits} assume that the tabulation includes a CVR
---a declaration of how a ballot was interpreted by the voting tabulator---for each cast ballot.
The audit randomly selects CVRs and compares them to human interpretations of the corresponding ballots. Ballot comparison requires $O\left({\log(1/\alpha)}/{\mu}\right)$ sampled ballots and requires a method to locate a specific ballot within a batch.
\end{itemize}

\myparab{Ballot manifests.} An essential component of all these procedures is an authoritative \emph{ballot manifest}---a determination of the total number of ballots in the election and, 
in settings in which ballots are organized into batches, the number of ballots in each batch. As mentioned above, accurate, software-independent manifests are critical for correctness; as a result, 
creating a ballot manifest according to best practices requires handling every ballot in the election. There are certain electoral settings where such exhaustive handling of ballots is unavoidable, such as unpacking mailed envelopes and updating the corresponding voter's record in the pollbook; however, in settings where voters cast ballots directly into a voter-facing tabulator, manifest creation is an additional step that scales linearly with the overall size of the election. In these cases, manifest creation is a significant factor in the overall effort. 

Lindeman et al.~\cite{lindeman2012bravo} point out the necessity of an accurate ballot manifest and, in particular, discuss the threshold beneath which errors can be tolerated in conventional RLAs.

\subsection{Statistically Approximate Manifests}

As mentioned above, conventional RLA methods require sufficiently accurate
manifests. Our analysis needs a weaker guarantee: most of the reported
ballot mass must be in batches whose true and reported sizes are close, while
the remaining batches may have larger errors. This guarantee controls both
the total number of ballots and how much the ballot-sampling distribution can
change when reported batch sizes are used in place of the true sizes. The
latter change can be measured by total variation distance, or equivalently by
one half of the $\ell_1$ distance between the two distributions. Direct
selection requires one additional property: most ballots must retain a
lower bound on the probability of being selected. 
Typical election procedures inherently provide partial information about batch sizes, ranging from simple upper bounds based on the capacity of storage containers to 
a ceiling on the total number of ballots cast, provided by a hand-verified count of the voter check-ins.
This information can be supplemented by efficient bulk measurements, such as weighing ballots or measuring ballot stacks. We assume that these procedures provide a software-independent, certified multiplicative bound on each batch size---for example, accuracy to within 10\%. Electronic tabulation, by contrast, provides untrusted but typically 
more precise declarations of batch sizes. 
Although these coarse bounds alone may be insufficient when the contest margin is small, they can still rule out gross errors in the tabulator-supplied declarations.

Our test first compares the tabulator-supplied manifest against the trusted coarse estimates. It then samples batches with probability proportional to their declared sizes, 
requests an accurate hand count of the selected batches, and compares 
the batches' true and declared sizes. If every sampled batch passes this check, then, except with a prescribed failure probability, only a small fraction of the reported ballot mass lies in substantially inaccurate batches. This certificate controls both the error in the total ballot count and the distortion of the resulting ballot-sampling distribution. We use these guarantees to establish the soundness of polling, comparison, and direct ballot selection audits conducted with the statistically certified manifest.

\subsection{Direct Ballot Selection}
The asymptotic scalings of current RLA methods appear to establish ballot comparison audits as the dominant approach, especially since audits must typically be planned and provisioned under a worst-case assumption of small $\mu$ (e.g., $\approx 0.5\%$). However, ballot comparison audits make the strong assumption that auditors can efficiently locate the ballot associated with a specific CVR.

Locating a particular ballot is a noted challenge in settings with voter-facing tabulators, which allow voters to cast ballots directly into the tabulation system.
Voter-facing tabulation improves both the chain of custody of ballots and the forensic record of the election, as it requires no storage of untabulated ballots. However, such tabulators intentionally randomize ballot order (by dropping them into a large collection bin), which is crucial to protect voter privacy (see, e.g., Crimmins et al.~\cite{crimmins2024dvsorder}). (For this very same reason, rows in the CVR table generated by 
a voter-facing tabulator are intentionally exported in random order rather than the order in which ballots are cast.) Thus, there is no natural association between the CVRs in the table and the resulting physical order of ballots. Many modern tabulators can imprint identifiers on ballots as they are cast, which provides a ready method for associating a particular ballot with a particular CVR entry. These same voter privacy concerns dictate that such identifiers must be suitably randomized. 
These circumstances lead to two natural approaches to locate ballots associated with particular CVR entries:
\begin{itemize}
\item \textbf{Linear search.} In settings with imprinted identifiers, a linear search can locate a particular ballot. Of course, on average, locating a single ballot requires reading the identifiers of half of the batch. 
\item \textbf{Transitive tabulation.} Ballots can be retabulated by a process that preserves order. Physical position itself can then act as a method for locating ballots. If the (re-)tabulator can imprint (or read previously imprinted identifiers), the physical ordering is now consistent with a known ordering of identifiers and can be used to quickly locate ballots according to a requested identifier. This approach handles every ballot.
\end{itemize}

In these settings, sample complexity is not a comprehensive measure of efficiency. 
A full accounting must reflect additional factors, such as time to locate ballots, audit preparation, and the availability of appropriate equipment.
 As a result, jurisdictions with imprinting often use batch comparison and polling audits, which require significantly more ballot interpretations but without the need to locate a specific identifier. For example, 
conducting a batch comparison audit with a 1\% margin and 5\% risk limit in Connecticut requires over 500,000 interpretations according to \href{https://github.com/aeharrison815/Adaptive-RLA-Tools}{Harrison's auditing simulation tool}—orders of magnitude more than ballot comparison, which requires around 1,200 ballots according to McBurnett's \href{https://bcn.boulder.co.us/~neal/electionaudits/rlacalc.html}{\texttt{RLACalc}}.

We propose a new approach to ballot comparison audits in settings with voter-facing tabulators and imprinted ballots: \emph{direct ballot selection}. 
It avoids linear searches through batches or retabulation, instead opting for the random ballot selection used in ballot polling. The high-level principle reverses the indexing of a standard ballot comparison audit:
\begin{enumerate}
\itemsep0em
\item Randomly select a physical ballot (e.g., using \emph{k}-cut~\cite{sridhar2020k}),
\item Read its imprinted identifier,
\item Compare against the corresponding CVR.
\end{enumerate}
As the basic procedure calls simply for drawing a ballot uniformly, this avoids the operational challenges associated with locating a \emph{particular, distinguished} ballot in a large collection. An immediate concern about this high-level approach is that \emph{identifier duplication} can undermine risk guarantees. Fuller, Harrison, and Russell showed that duplicated identifiers do not impact risk if one samples CVR rows uniformly~\cite{harrison2022adaptive}. If several physical ballots share one identifier, however, they can all correspond to a single CVR row, leaving other CVR rows free to carry identifiers that appear on no physical ballot. Because these phantom identifiers can never be sampled via direct selection, an adversary may use them to change the reported outcome without detection when duplicates are sufficiently frequent relative to the margin.

Thus, direct ballot selection must introduce additional statistical testing in order to deliver a rigorous risk limit. We establish that if duplicate (or blank) identifiers are sufficiently rare, this ``reverse'' indexing yields an RLA; motivated by this, we develop a simple statistical test for duplicate identifier detection.
Ultimately, the audit calls for selecting $k$ ballots uniformly at random and invokes two statistical tests: 
\begin{itemize}
\item\textbf{Comparison.} Ballots are compared against the corresponding CVR to establish consistency;
\item\textbf{Uniqueness testing.} Sampled ballot identifiers are checked for blank and repeated values.
\end{itemize}
We provide a detailed analysis establishing explicit risk limits for the procedure (as functions of margin and sample complexity). The total sample complexity of the procedure is asymptotically
\[
  O\left( \max \left(\frac{\log(1/\alpha)}{\mu},\sqrt{\frac{N\log(1/\alpha)}{\mu}}\right)\right)\,,
\]
where $N$ is the size of the total population of ballots. The two terms here correspond to the two statistical tests (which can be run in parallel with each other, using the same ballot samples). The first term is exactly the asymptotic sample complexity of a conventional comparison audit. The second term---reflecting the sample complexity of ensuring sufficiently few duplicate identifiers---involves the total population size, a factor that does not appear in any of the conventional audit approaches. The favorable dependence on $\sqrt{N}$ (for uniqueness testing) arises from the classical ``birthday paradox,'' which asserts that the probability of observing no collisions among $k$ objects sampled with replacement is $\exp(-\Theta(k^2)/N)$, scaling quadratically in $k$. Compared against polling, these asymptotics indicate that direct ballot selection audits perform advantageously for total ballot populations $N=O(\log(1/\alpha)/\mu^3)$.

While such asymptotic statements are useful for intuition and understanding large-scale behavior, an evaluation suitable for practical RLAs demands explicit numerics for concrete parameters. 
Since polling and direct ballot selection require the same setup and conventions for sampling random ballots from prescribed batches, one can decide which method to use after computing the margin $\mu$ for any races to be audited.

Finally, we remark that by externalizing duplicate (collision) detection, the analysis of direct ballot selection can be framed with the familiar notion of ``discrepancy.'' In particular, the extension of such audits to more complex electoral contexts---e.g., contests involving multiple races, multiple candidates, or alternative voting rules---follows directly from existing techniques (see, e.g., Stark~\cite{stark2010super}) that suitably (re-)define discrepancy for these settings.

\myparab{Summary of Results}
We focus on six ballot-population sizes motivated by the US electoral landscape: 354,000 ballots, representing the average number of ballots cast in a Congressional district in 2024; $1$M ballots; and four state-level populations corresponding to ballots cast  
in 2024: $1.7$M (Connecticut), $5$M (Georgia), $11$M (Florida), and $16$M (California).
For the congressional-sized districts, direct ballot selection improves over the first round of Minerva~\cite{zagorski2021minerva}/Providence~\cite{broadrick2023providence} polling audits for margins from $0.5\%$ to $2.5\%$, requiring over $77\%$ fewer sampled ballots for margins from $0.5\%$ to $1\%$. 
Sub-1\% margins are where the $1/\mu^2$ asymptotic behavior of polling audits drastically increases the sample size. This expensive scaling, despite the relative infrequency of such margins, is what pushes many election officials to prefer 
comparison audits over polling.

We also show that statistically certified manifests with 
small aggregate error relative to the 
margin suffice for polling, ballot comparison, and direct ballot selection audits. 
These manifests can dramatically reduce total audit time by decreasing the total number of ballots handled in audit preparation. Under our effort model, manifest creation accounts for 95\%--99\% of the total audit time 
in large states at moderate margins. For example, 
in California at a 3\% margin, adopting a statistically certified manifest 
reduces the modeled 
time to conduct comparison and polling audits 
by factors of approximately $438$ and $27$, respectively. Gains are more modest for direct ballot selection, because the dependence on $N$ is reduced rather than eliminated; the resulting speedup is a factor of 10.

Our techniques can be immediately incorporated into stratified RLAs~\cite{higgins2011sharper,ottoboni2019bernoulli}.
For example, at small margins, one can 
conduct a direct ballot selection audit for ballots with out-of-order identifiers and ballot comparison for ballots with 
sequential identifiers (e.g., cast by mail). For large margins, one can use ballot polling.

\subsection{Related Work} 
\label{ssec:rel work}
Risk-limiting audits were first
articulated by Stark in 2008~\cite{starkconservative}, 
followed by seminal work by Stark and Lindeman~\cite{lindeman2012gentle}. Following this, a body of
work laid down the foundations, including key assumptions and
guarantees~\cite{bernhard2021risk,verifiedvotingprinciples,Hall2009,lindeman2012gentle}. There has been a long line of work on optimizing practical efficiency~\cite{lindeman2012gentle,stark2010super,lindeman2012bravo,starkconservative,checkoway,starkcast}. A significant body of literature has also developed around various generalizations and refinements, including
\begin{enumerate*}\item choice functions beyond plurality~\cite{stark2020sets,blom2021assertion}, \item
combining $p$-values from multiple 
jurisdictions~\cite{stark2009auditing,stark2010super,higgins2011sharper,ottoboni2018risk,stark2009efficient}, \item reducing sample complexity~\cite{higgins2011sharper,lindeman2012bravo,ottoboni2019bernoulli,waudby2021rilacs,Stark:Conservative,starkconservative,checkoway,banuelos2012limiting}
and \item addressing practical implementation challenges~\cite{verifiedvotingprinciples,Hall2009,bernhard2021risk,harrison2022adaptive,fuller2024decisive,jones2024scan}.\end{enumerate*} 

\myparab{Organization} The remainder of this article is organized as follows: Section~\ref{sec:preliminaries} lays out the modeling and main definitions; Section~\ref{sec:auditor} introduces our auditor and proves it is risk limiting; and Section~\ref{sec:efficiency} presents our simulation results for ballots examined, 
including the time required to conduct common RLA methods.

\section{Preliminaries}
\label{sec:preliminaries}
 For simplicity, we consider an audit of a single plurality race with two candidates, denoted $\winner$ and $\loser$.  By our naming convention, the candidate $\winner$ is \textbf{reported} to have received more votes in the race.
Throughout, we use boldface to refer to ``physical'' objects, such as
individual ballots (typically denoted $\mathbf{b}$) or the set of all ballots (typically $\mathbf{B}$). Variables determined by these physical objects are typically denoted with a super- or subscript ($X^{\mathbf{b}}$), indicating
their dependence on the physical object.

\ifnum\conference=0
 \begin{table}[t]
 \centering
 \small
 \begin{tabular}{l | l | r} 
 & Notation & Description\\\hline
 \multirow{12}{*}{concepts}& $\size$ & size\\
 & $\winner$ &  tabulated winner\\
 & $\loser$ &  tabulated loser \\
 & $\mathbf{b}, \mathbf{B}$ & ballot, set of ballots\\
 & $\disc$ & discrepancy\\
 & $\mu$ & diluted margin\\
 & $\cvr$ & Cast vote record table\\
 & $\manifest$ & ballot manifest\\
 & $\alpha$ & risk limit\\
 & $k$ & samples\\
 & $\batches$ & number of batches\\
 & $\inAcc$& large inaccuracy bound\\
 & $\inAccSmall$& small inaccuracy bound\\
 \hline
 \multirow{6}{*}{modifiers} & $\act$& on ballots\\
 & $\tab$ & in tabulation results\\
 & $\cvr$ & in CVR\\
 & $\size$ & size bounding\\
 & $\sample$ & audit\\
 & $\dup$ & duplicate detection\\
 & $\batchnum$ & batch index
 \end{tabular}
 \caption{Summary of Notation.}  
 \label{tab:notation}
 \end{table}
\fi

We define $\N = \{0, 1, \ldots\}$ to be the natural numbers (including
zero). For a natural number $k$, we define $[k] = \{1, \ldots, k\}$
(and $[0] = \emptyset$). We let $\Sigma = \{-2, -1, 0, 1, 2\}$, a set
that will play a special role in our setting.
In general, for a
finite set $X$, we define $X^*$ to be the set of all finite-length
sequences over $X$; that is,
$X^* = \{ (x_1, \ldots, x_k) \mid k \geq 0, x_i \in X \}$. Note that
this includes a sequence of length $0$ which we denote
$\bot$. Finally, we define $X^\N$ to be the set of all sequences
$\{ (x_0, x_1, \ldots) \mid x_i \in X\}$.

\subsection{Election Definitions}

We now set down the definitions of elections, manifests,
and CVRs. We build on the notation and formal modeling of~\cite{harrison2022adaptive}, though certain aspects are adapted to account for the details of our setting.

\begin{definition}[Ballot family; ballot conventions]
  \label{def:ballot-family}
  A \emph{ballot family} is a collection of physical \emph{ballots}
  partitioned into disjoint sets denoted
  $\Ballots_1, \ldots, \Ballots_\batches$. As a matter of notation, the
  ballot family is denoted
  $\Ballots = (\Ballots_1, \ldots, \Ballots_\batches)$ and the sets are
  referred to as ``batches.'' For the sake of brevity, we use
  $\ballot \in \Ballots$ as shorthand for
  $\ballot \in \bigcup \Ballots_\batchnum$ and use $|\Ballots|$
  as shorthand for $\sum |\mathbf{B}_\batchnum|$.  Throughout, we
  reserve the variable $\batches$ to refer to the number of batches.
  Physical ballots, typically denoted $\ballot$, have the following properties:
  \begin{enumerate}
  \item There is an immutable interpretation of the votes contained on
    the ballot:  Each $\ballot \in \Ballots$
    determines a pair $(\winner_{\mathbf{b}}, \loser_{\mathbf{b}})$,
    where each $\winner_{\mathbf{b}}, \loser_{\mathbf{b}}\in \{0,1\}$.
  \item Each ballot $\ballot \in \Ballots$ is associated with
    an indelible \emph{identifier} $\id_\ballot \in
    \{0,1\}^*$. If $\id_\ballot= \bot$ (the empty string in $\{0,1\}^*$), we say that $\ballot$ is \emph{unlabeled}; otherwise $\mathbf{b}$ is \emph{labeled}. 
    $\Labels_{\mathbf{B}} \subset \{0,1\}^*$ denotes the set of all identifiers appearing on ballots in $\Ballots $.
      \item Each ballot $\ballot$ is associated with an indelible batch $\batchnum$ for $1\le \batchnum \le \batches$. We let $\batch(\ballot)$ denote the batch number, so that $\ballot \in \Ballots_{\batch(\ballot)}$.
      \end{enumerate}
It is convenient to discuss physical ballots with such immutable votes ($\winner_{\ballot},\loser_{\ballot}$) before they are partitioned into batches or assigned labels; we refer to this ungrouped collection as ``raw ballots.''
\end{definition}

\begin{definition}[Cast-Vote Record Table (CVR)] A \emph{Cast-Vote Record Table (CVR)} is a sequence of triples
$    \cvr = ((\iota_1, \winner_1, \loser_1),$ $\ldots,$ $(\iota_s,
    \winner_s, \loser_s))
  $ where $\iota_r$ are distinct bitstrings in $\{0,1\}^* \setminus \{\bot\}$ and each
  $\winner_r, \loser_r$ is an element of $\{0,1\}$. We use the following language:
  \begin{enumerate}
 \itemsep0em
  \item The elements
    $\iota_r$ are \emph{identifiers}.
  \item The number $s$ is 
  the \emph{size} of the CVR.
  \item The \emph{$r$th row}  is the triple
  $\cvr_r = (\iota_r, \winner_r, \loser_r)$.
\item  We use $r_\iota$ to refer to the (unique) row with identifier $\iota$.\footnote{An auditor can use the identifiers $\bot_i$ to transform a CVR to one with unique labels; such labels would not appear on CVRs generated by tabulators. See discussion in Fuller, Harrison, and Russell on CVR transforms~\cite{harrison2022adaptive}.} (Recall that CVR identifiers are distinct.)
  \end{enumerate}
  \label{def:cvr format}
  \end{definition}

\paragraph{Remark} In the Introduction, we used CVR as an abbreviation for cast vote record. We now switch to using CVR as an abbreviation for the table. The term CVR is used to refer to both in RLA literature. 

  \begin{definition}[Tabulation; Election]
    We define a $\batches$-\emph{tabulation} to be a tuple $T = (\cvr^{(1)}, \ldots, \cvr^{(\batches)})$ of CVRs with globally unique identifiers. Thus, a tabulation declares the results of a ballot family organized into $\batches$ batches, with $\cvr^{(\beta)}$ providing a tabulation of those ballots in batch $\beta$. We simply use the word \emph{tabulation} when $\batches$ can be inferred from context. The CVR for batch $\batchnum$ is $\cvr^{(\beta)} = ((\iota_1^{\beta}, \winner_1^{\beta}, \loser_1^{\beta}), \ldots, (\iota_{s_\beta}^{\beta}, \winner_{s_\beta}^{\beta}, \loser_{s_\beta}^{\beta}))$. Such a tabulation implicitly determines, for each $1 \leq \beta \leq \batches$ the batchwise quantities
  \[
    \size^\tab_\beta = s_\beta\,,\quad \winner^{\tab}_\beta = \sum_j \winner^{\beta}_j\,,\quad \loser^\tab_\beta = \sum_j \loser^{\beta}_j
  \]
  and the aggregate quantities
  \[
    \size^\tab = \sum_\beta \size^{\tab}_\beta, \quad \winner^\tab = \sum_\beta \winner^{\tab}_\beta, \quad 
    \loser^\tab = \sum_{\beta} \loser^{\tab}_\beta\,.
  \]
  The \emph{tabulated totals} refer to the pair $\winner^{\tab}$ and $\loser^{\tab}$; the \emph{tabulated size} refers to $\size^{\tab}$. An \emph{election} $E$ is a pair $E = (\mathbf{B}, T)$ where $\mathbf{B}$ is a ballot family organized into $\batches$ batches and $T$ is a $\batches$-tabulation.

  We assume throughout the convention that $\size^{\tab} \geq \winner^\tab > \loser^\tab \geq 0$, so that the names $\winner$ and $\loser$ are reserved for the tabulated winner and loser of the election, respectively.

  We remark that while we insist on unique identifiers across the CVRs in a tabulation, this is merely a convenience; the only requirement is that identifiers be unique within a batch. By replacing every CVR identifier $\iota$ in batch $\beta$, and every corresponding nonempty ballot identifier, with an injective encoding of $(\beta,\iota)$, a tabulation can be transformed to one with globally distinct identifiers. We leave $\bot$ unchanged and adopt this convention throughout; consequently, looking up a ballot's identifier in $T$ is equivalent to looking it up in the CVR for that ballot's batch.
\end{definition}

\begin{definition}[Actual vote totals; ballot manifests]
  Let $E = (\mathbf{B}, T)$ be an election.
  Let
  $
  (\size^\act_1; \winner^\act_1, \loser^\act_1), \ldots, (\size^\act_\batches; \winner^\act_\batches, \loser^\act_\batches)
  $
  denote the actual size and vote totals where, for each $\beta$, $\size^\act_\beta = |\mathbf{B}_\beta|$ is the
  actual size of the set $\mathbf{B}_\beta$ and
  \[
    \winner^\act_\beta = \sum_{\mathbf{b} \in \mathbf{B}_\beta} \winner_{\mathbf{b}} \qquad\text{and}\qquad \loser^\act_\beta = \sum_{\mathbf{b} \in \mathbf{B}_\beta} \loser_{\mathbf{b}}
  \]
  are the total number of actual votes received by the candidate $\winner$ and candidate $\loser$ over the ballots in batch $\beta$. As above, we define the aggregate quantities
  \[
    \size^\act = \sum_\beta \size^{\act}_\beta, \quad \winner^\act = \sum_\beta \winner^{\act}_\beta, \quad \text{and} \quad
    \loser^\act = \sum_{\beta} \loser^{\act}_\beta\,.
  \]
  We assume throughout that $\size^\act>0$ and omit from the batch index any batch for which $\size_\beta^\act=\size_\beta^\tab=0$.
  The values $\manifest^\act = (\size_1^\act, \ldots, \size_\batches^\act)$ are referred to as the \emph{ballot manifest} of $E$. We occasionally refer to $\manifest^\tab = (\size_1^\tab, \ldots, \size_\batches^\tab)$ as the declared sizes of the batches. Finally, we define $|T| = \size^{\tab}$.
\end{definition}

\begin{definition}[Diluted margin; valid and invalid elections] The 
  \emph{tabulated diluted margin} of an election $E$ is
    \[\mu^\tab= (\winner^\tab - \loser^\tab)/\size^\tab\,.\]
  An election $E$ is \emph{invalid} if the tabulated winner is
  incorrect: $\loser^{\act} \geq \winner^{\act}$; otherwise, we say
  that $E$ is \emph{valid}.
  The \emph{actual diluted margin} is defined to be
$
    \mu^\act = | \winner^\act-\loser^\act |/\size^\act\,,
  $ which is determined only by the physical ballots.
\end{definition}

\begin{definition}[Discrepancy] Let\label{def:overall disc}
  $E = (\mathbf{B}, T)$ be an election. The \emph{discrepancy} of $E$ is 
  \[\disc = (\winner^\tab - \loser^{\tab}) - \left(\winner^\act - \loser^{\act}\right)\,.\]
\end{definition}
\noindent
For invalid elections $\loser^\act\ge \winner^\act$ and thus $\mu^\act = -(\winner^\act- \loser^\act)/\size^\act$. In this case
\ifnum\conference=1
$\disc = (\winner^\tab - \loser^{\tab}) - (\winner^\act - \loser^{\act}) = \mu^{\tab} \cdot \size^\tab+ \mu^\act \cdot \size^\act.$
\else
\[\disc = (\winner^\tab - \loser^{\tab}) - (\winner^\act - \loser^{\act}) = \mu^{\tab} \cdot \size^\tab+ \mu^\act \cdot \size^\act.\]
\fi

\begin{definition}
  [Ballot Discrepancy]
  \label{def:discrepancy}
  Let $(\mathbf{B},T)$ be an election, let $\ballot \in \mathbf{B}_\beta$ be a ballot, and let $\cvr$ be the CVR in $T$ for batch $\beta$. Then the 
  \emph{discrepancy} $\disc_T(\ballot)$ of the ballot $\ballot$ (with respect to $T$) is defined to be the value
  \begingroup
  \ifnum\conference=0
    \small
  \fi
  \begin{equation}
    \label{eq:row-disc}
    \disc_T(\ballot) = \begin{cases}
      \begin{aligned}
        &(\winner_{r} - \loser_{r}) -
          (\winner_\ballot - \loser_{\ballot} ),\\[-0.25ex]
        &\quad \text{if row $r$ of $\cvr$ has $\iota_r = \id_\ballot$};
      \end{aligned}\\[1ex]
      \begin{aligned}
        &1 - (\winner_\ballot - \loser_{\ballot} ),\\[-0.25ex]
        &\quad \text{otherwise}.
      \end{aligned}
    \end{cases}
  \end{equation}
  \endgroup
\end{definition}

A positive value for discrepancy $d$ is called a
\emph{$d$-vote overstatement}; likewise, a negative value of $-d$ is a \emph{$d$-vote understatement}. A $d$-vote overstatement for a ballot $\ballot$, for example, indicates that
the reported difference, $\winner_r - \loser_r$, is $d$ votes
 larger than the ground truth value $\winner_{\ballot} - \loser_{\ballot}$. Observe that 
Equation~\eqref{eq:row-disc} assigns a notion of discrepancy to a
particular ballot, which always takes a value in the set
$\Sigma = \{ -2, -1, 0, 1, 2\}$.  Note that there may be multiple ballots with the same 
identifier but yielding different values of discrepancy.

\myparab{Statistical Tests}
\label{ssec:statistical tests} A standard approach for designing RLAs
is to consider the discrepancy $\disc_T(\ballot)$ of a uniformly selected \emph{row} $r$ from the CVRs in a tabulation $T$
in comparison with some ballot $\ballot$ with the drawn identifier $r_\iota$. In this conventional setting, if the election is invalid, one has that
\[
\Exp_{\substack{\textrm{$\ballot$ matching}\\\textrm{random row}}}[\disc_T(\ballot)] \ge \frac{\disc}{\size^\tab} \ge \mu^\tab +\mu^\act\frac{\size^\act}{\size^\tab} \ge \mu^\tab\,.
\]
(One needs to handle labels $\iota$ that don't appear on any ballot~\cite{harrison2022adaptive}; this is straightforward by treating the missing ballot as the worst interpretation for the audit.)
Independently carrying out such observations results in
a sequence of discrepancy observations $\disc_1^\obs, \disc_2^\obs,\ldots$ taking values in
$ \{-2, \ldots,2\}$.  An RLA can then be given by a one-sided statistical test
for the null hypothesis that \emph{$E$ is invalid and hence} \[\Exp[\disc_i^\obs] \geq \frac{\disc}{\size^\tab} \geq \mu^\tab\] (which is guaranteed when the election is invalid); the alternative hypothesis is that the election is indeed correct. In this language, the risk of the audit is the probability that the null hypothesis is rejected when it is in fact true.

Observe that, in contrast, the quantity $\Exp_{\ballot \leftarrow \Ballots} [\disc_T(\ballot)]$ arising from the natural ``discrepancy of a randomly selected ballot'' experiment is not necessarily greater than, say, $\disc/\size^{\tab}$. One goal of the full analysis below is to establish a sufficient connection between these quantities (and hence to $\mu^\tab$). Ultimately, as mentioned above, we will be able to rely on standard one-sided statistical tests, which we define and discuss below.

\begin{definition}[$\sigma$-dominating distributions and random variables]
  A sequence of bounded (real-valued) random variables $X_1, \ldots$
  is said to be \emph{$\sigma$-dominating} if, for each $t \geq 1$,
  $\Exp[X_t \mid X_1, \ldots, X_{t-1}] \geq \sigma$ almost surely.
  We also use this terminology to apply to the distribution $\mathcal{D}$ corresponding to the random variables, writing $\sigma \unlhd \mathcal{D}$.
\end{definition}

\begin{definition}[Stopping rules and stopping times] Let $\Sigma = \{-2, -1, 0, 1,
  2\}$. A \emph{stopping rule} is a function
  $\Stop: \Sigma^* \rightarrow \{0,1\}$.    
  For a sequence of random variables $X_1, \ldots$ taking values in
  $\Sigma$, let
  \[
    \FirstStop(X_1,\ldots)
    :=\inf\{t\geq 1:\Stop(X_1,\ldots,X_t)=1\},
  \]
  with $\inf\emptyset=\infty$. This is the stopping time induced by
  $\Stop$. If $\FirstStop=\infty$, the test is deemed not to reject.
  \label{def:stop time}
\end{definition}

\begin{definition}[Adaptive Audit Test~\cite{harrison2022adaptive}]\label{def:audit-test}
  An \emph{adaptive audit test}, denoted $\Test = (\Stop, \Criterion)$, is
  described by two families of functions, $\Stopdeltaalp$ and $\Criteriondeltaalp$. For
  each $0 < \alpha \le 1$, $0 < \sigma \leq 2$,
  \begin{enumerate}
  \item $\Stopdeltaalp$ is a stopping rule, as in Definition~\ref{def:stop time}, and
  \item $\Criteriondeltaalp:\Sigma^* \rightarrow \{0,1\}$ is
    the \emph{rejection criterion}. 
  \end{enumerate}
  Let $\mathcal{D}$ be a probability distribution on $\Sigma^\N$; for
  such a distribution, define
  $\alpha_{\sigma,\alpha}[\mathcal{D}] =
  \Pr[\tau<\infty\text{ and }\Criteriondeltaalp(X_1,\ldots,X_\tau)=1]$
  where $X_1, \ldots$ are random variables
  distributed according to $\mathcal{D}$ and $\tau$ is determined by
  $\Stopdeltaalp$.
The test $\Test$ is $\alpha$-risk-limiting if
  \begin{equation}
  \label{def:audit test}
  \sup_{\substack{0 < \sigma \le 2\\ \sigma \unlhd \mathcal{D}}} \alpha_{\sigma,\alpha}[\mathcal{D}] \leq \alpha\,,
  \end{equation}
  where this supremum is taken over all $\sigma \in (0,2]$ and over all probability distributions
  $\mathcal{D}$ for which $\sigma \unlhd \mathcal{D}$.
\end{definition}

The Kaplan-Markov test from the ``super simple'' ballot comparison audit~\cite{stark2009efficient,stark2009auditing,stark2010super,Stark:Conservative} is an adaptive audit test~\cite[Claim 3]{harrison2022adaptive} which is used in all of our ballot comparison experiments.

Finally, we define two notions of closeness between probability distributions that will be useful in analyzing the auditor.

\begin{definition}\label{def:mult-close}
  Let $A$ be a probability distribution on a finite set $X$ and $\delta \ge 0$. We say that $A$ is \emph{$\delta$-close to uniform} if for all $x \in X$,
  \[
    \frac{1}{|X|(1 + \delta)} \leq A(x) \leq \frac{1 + \delta}{|X|}\,.
  \]
\end{definition}

\begin{definition}[Contaminated reweighting]\label{def:contaminated-reweighting}
Let $A$ and $B$ be probability distributions on a finite set $X$, and let
$0 \le \delta \le \Delta$ and $p \in [0,1]$.
We say that $B$ is a \emph{$(p,\delta,\Delta)$-contaminated reweighting} of $A$
if there exist a (weight) function $w:X\to(0,\infty)$ and a set $\Bad\subseteq X$ such that
\begin{align*}
  B(x)&=\frac{A(x)w(x)}{\sum_{y\in X}A(y)w(y)}
        &&\text{for all }x\in X\,,\\
  \frac{1}{1+\delta}&\le w(x)\le 1+\delta
&&\text{for all }x\in X\setminus \Bad\,,\\
  \frac{1}{1+\Delta}&\le w(x)\le 1+\Delta
&&\text{for all }x\in X\,,
\end{align*}
and $A(\Bad)\le p$.
\end{definition}

When \(A\) and \(B\) are ballot- or batch-sampling distributions, we also say that \(B\) is a \((p,\delta,\Delta)\)-manifest of \(A\).
We call an assertion that \(B\) is a \((p,\delta,\Delta)\)-manifest of \(A\) an \emph{statistical manifest certificate}; the certificate's failure probability is specified separately.

\subsection{The Formal Auditor--Adversary Model}
\label{sec:adversary model}

We build on the formal auditing model of~\cite{harrison2022adaptive}, adapted to our setting to account for random ballot selection and statistical manifests. The model introduces an abstract party---the adversary $\Adversary$---responsible for labeling ballots, grouping them into batches, and producing tabulations (and CVRs). The model does not require the adversary to assign unique identifiers to ballots; moreover, it permits the adversary to form batches after observing the ballots. As a result, the adversary may induce highly irregular allocations of votes across batches. Once the adversary has chosen identifiers and formed batches, these assignments are permanent throughout the audit.

Then, a \emph{coarse manifest}
is provided to the Auditor. Concretely, this coarse manifest gives an approximate size
\[
  \manifest^\coarse=(\size^\coarse_1,\ldots,\size^\coarse_\batches)
\]
for the batches: it must satisfy, for every batch \(\beta\),
\begin{equation}\label{eq:coarse}
  \size^\coarse_\beta \in
  \left[
    \frac{\size^\act_\beta}{1+\inAcc_0},
    (1+\inAcc_0)\size^\act_\beta
  \right].
\end{equation}
This reflects the practical point emphasized in the introduction: rough batch-size information can often be efficiently obtained through bulk methods such as weighing or stack measurement. Within the constraints given by~\eqref{eq:coarse}, the coarse manifest may be chosen adversarially. The quantity \(\inAcc_0\) is determined from a parameter of the model $(\inAcc)$ by the rule $(1+\inAcc_0)^2 = 1+\inAcc$; see below.
In particular, the experimental setting \(\inAcc=0.1\) requires \(\inAcc_0=\sqrt{1.1}-1\approx0.0488\).

The model does place one important structural constraint on the tabulation
(CVRs) produced by the adversary: each identifier only occurs once in the CVRs. This reflects the practical fact that CVRs are available to
the auditor and can therefore be checked for duplicate identifiers. (Such duplicates can alternatively be handled with ``CVR transformations'' that do not increase risk~\cite{harrison2022adaptive}.)

Beyond this, however, the tabulation produced by the adversary is not assumed
to satisfy any accuracy conditions. Instead, the auditor may compare the batch
sizes induced by the tabulation,
$
  \manifest^\tab=(\size^\tab_1,\ldots,\size^\tab_\batches),
$
against the coarse manifest. In particular, if the auditor verifies that, for
every batch \(\beta\), $\size^\tab_\beta \in
[
    {\size^\coarse_\beta}/{(1+\inAcc_0)},
    (1+\inAcc_0)\size^\coarse_\beta]$,
then, considering that
$\size^\coarse_\beta \in
[{\size^\act_\beta}/{(1+\inAcc_0)},$ $
    (1+\inAcc_0)\size^\act_\beta]$, it follows that
\begin{align*}
  \size^\tab_\beta &\in
  \left[
    \frac{\size^\act_\beta}{(1+\inAcc_0)^2},
    (1+\inAcc_0)^2 \size^\act_\beta
  \right]\\
  &=
  \left[
    \frac{\size^\act_\beta}{1+\inAcc},
    (1+\inAcc)\size^\act_\beta
  \right].
\end{align*}
Thus, a successful comparison against the coarse manifest certifies that the
batch sizes arising from \(T\) constitute a \(\inAcc\)-accurate manifest.

Our final auditing procedure provides guarantees against \emph{any adversary} satisfying these conditions. Because the adversary represents exactly those aspects of the election that are not assumed trustworthy or fixed in advance, the resulting guarantee applies in any practical setting adequately reflected by the model.

\myparab{The Auditor--Adversary Game}
The
\emph{Auditor}, denoted by $\Auditor$, and the \emph{Adversary}, denoted
by $\Adversary$, together interact with a family of ballots reflecting the results of the election to be audited. Figure~\ref{fig:auditing game} formalizes the game.
An important feature of the game is the mechanism by which the auditor obtains random ballot samples. As discussed above, statistical audits require sampling accuracy commensurate with the margin of the contest under audit. While the model allows the auditor to sample uniformly within any chosen batch, it does not directly allow sampling batches with probability proportional to their true sizes. Instead, the auditor samples batches according to their tabulated sizes, which may induce a substantially nonuniform distribution over physical ballots; the coarse manifest is used to check those sizes. The auditor is also given a mechanism for determining the true size of any batch on request.

A straightforward solution would be to determine the exact sizes of all batches, and thereby recover the correct size-proportional distribution; ballots could then be sampled accordingly. The practical obstacle is that obtaining the exact size of a batch has cost linear in that batch's size. 
Therefore, a key challenge is to limit the total size of the batches whose true counts are requested while still ensuring that the resulting sampling distribution is sufficiently close to uniform to support the desired risk bound.

\begin{figure}[th!]
  \begin{framed}
  Auditor ($\Auditor$)--Adversary ($\Adversary$) game for the raw ballots $\mathbf{B}$.
    \begin{enumerate}[noitemsep]
    \item \textbf{Formation of a Full Ballot Collection.} The adversary $\Adversary$ may inspect the raw physical ballots $\Ballots$ and determines the following:
      \begin{enumerate}[noitemsep]
      \item\textbf{Batches.}  A partition of the ballots into $\batches$ batches, for a parameter $m$ determined by $\Adversary$.
      \item\textbf{Labels.} A label assignment for each ballot. (The adversary may choose to leave some ballots
      unlabeled, that is, assign the label $\bot$.)
    \end{enumerate}
    \item \textbf{Coarse Manifest}. $\Adversary$ generates a coarse manifest $\manifest^\coarse = (\size^\coarse_1,\ldots,\size^\coarse_\batches)$ for which
    \[
       \size^\coarse_\batchnum \in \left[{\size^\act_\batchnum}/{(1+\inAcc_0)}, (1+\inAcc_0)\size^\act_{\batchnum}\right]
     \]
     (where $\inAcc_0$ satisfies $(1 + \inAcc_0)^2 = 1 + \inAcc$).\\
     $\manifest^\coarse$ is provided to $\Auditor$.
  \item \textbf{Tabulation}. $\Adversary$ generates a tabulation $T = (\cvr^{(1)}, \ldots, \cvr^{(\batches)})$.
    $T$ is provided to $\Auditor$.
  \item\textbf{The Audit.}
      \begin{enumerate}[noitemsep]
        \item \textbf{Count Batch}. $\Auditor$ repeatedly requests a batch $\beta$ ($1\le \beta\le \batches$) and is provided with $\size^\act_\beta = |\Ballots_\beta|$.
        \item \textbf{Sample Ballot}. $\Auditor$ repeatedly requests ballots to be drawn uniformly from selected batches: specifically, $\Auditor$ issues a batch number $\beta$ and is provided with a ballot $\ballot$ sampled uniformly from ${\mathbf{B}}_\beta$ (with replacement).
        \end{enumerate}
  \item \textbf{Conclusion}. $\Auditor$ returns one of the two values:
$\consistent$ or $\inconclusive$.
\end{enumerate}
\end{framed}
\caption{The $\RLA^{\inAcc}_{\Auditor,\Adversary}(\mathbf{B})$ auditing game with parameter $\inAcc$.}
\label{fig:auditing game}
\end{figure}

\begin{definition}[Risk; soundness]
For an auditor $\Auditor$, an adversary $\Adversary$, and a collection
of raw ballots $\mathbf{B}$, let
$\RLA^{\inAcc}_{\Auditor,\Adversary}(\mathbf{B})$
denote the conclusion of the audit described in
Figure~\ref{fig:auditing game}. The adversary's choices determine an
election $E_{\Adversary}(\mathbf{B})$ and a coarse manifest before the
audit begins.

An auditor $\Auditor$ has \emph{$\risk$-risk} (or
\emph{$\risk$-soundness}) if, for every collection of raw ballots
$\mathbf{B}$ and every adversary $\Adversary$ satisfying the conditions of the game for which
$E_{\Adversary}(\mathbf{B})$ is invalid,
\label{def:risk}
\[
  \Pr\!\left[
    \RLA^{\inAcc}_{\Auditor,\Adversary}(\mathbf{B})=\consistent
  \right]
  \leq \risk.
\]
The probability is over the random choices made during the audit; all
choices determining the election and coarse manifest are fixed before
the audit begins.
\end{definition}

\section{The Direct Ballot Selection Auditor}
\label{sec:auditor}
\noindent
The direct ballot selection RLA is designed to operate for any
tabulated diluted margin
$\mu \;:=\; (\winner^\tab-\loser^\tab)/\size^\tab,
$
which is computed from the tabulation provided to the auditor.
The auditor is parameterized by the tuple $(\inAccSmall,\rho_{\mathrm{tv}},\rho_{\dup},\alpha,\alpha_{\mathrm{tv}},\alpha_{\dup})$,
where:
\begin{enumerate}
\itemsep0em
\item \(\inAccSmall\in[0,\inAcc]\) is the sharper multiplicative manifest-accuracy
  threshold that the auditor seeks to certify for all but a small exceptional set of
  batches;
\item \(\rho_{\mathrm{tv}}\in[0,1]\) is the fraction of \(\mu\) reserved to
  absorb the loss arising from nonuniform ballot sampling caused by residual
  manifest inaccuracy;
\item \(\rho_{\dup}\in[0,1]\) is the fraction of \(\mu\) reserved to absorb the
  loss arising from duplicate or unlabeled identifiers;
\item \(\alpha\) is the overall risk limit;
\item \(\alpha_{\mathrm{tv}}\) is the portion of the risk budget allocated to
  manifest certification; and
\item \(\alpha_{\dup}\) is the portion of the risk budget allocated to duplicate
  detection.
\end{enumerate}
We assume throughout that
$0\le \inAccSmall \le \inAcc$, $\rho_{\mathrm{tv}}+\rho_{\dup}<1$, and $\alpha_{\dup}+\alpha_{\mathrm{tv}}<\alpha$. 
We further assume that \(\rho_{\dup}>0\), \(\alpha_{\dup}>0\), and \(\alpha_{\mathrm{tv}}>0\).

\myparab{Structure of Auditor}
The auditor first uses the coarse manifest to rule out gross inconsistencies between the tabulated and true batch sizes. Once that check succeeds, the manifest-certification step samples batches according to 
\(r^\tab_\beta := {\size^\tab_\beta}/{\size^\tab}\)
and counts each sampled batch to certify a manifest relation between the tabulation-induced sampling law and the uniform ballot law. Concretely, if the accepted certificate yields \((p_{\mathrm{tv}},\inAccSmall,\inAcc)\), then all but a declared mass \(p_{\mathrm{tv}}\) of batches satisfy the sharper \(\inAccSmall\)-criterion, while the full manifest remains within the coarser \(\inAcc\)-bound. The duplicate-detection stage then uses the same certificate to control the distortion between \(\sample\) and uniform ballot sampling.

The auditor next performs duplicate detection.  Because the applied ballot-sampling rule
is not exactly uniform, the duplicate-detection routine is calibrated
using the full triple \((p_{\mathrm{tv}},\inAccSmall,\inAcc)\) describing the distributional distortion. It then guarantees that except with small failure probability $\alpha_{\dup}$, the excess identifier multiplicity rate $\kappa$ is no more than $\kappa_\dup$. (The additional convention that any sampled unlabeled ballot causes the ``Collision'' outcome is conservative and affects completeness only.)

Although presented as separate stages for clarity, duplicate detection and ballot comparison may use the same ballot samples, so only \(\max\{k_{\dup},k_{\sample}\}\) ballots need be drawn; the union-bound analysis does not require the two tests to be independent.

If either preliminary stage fails, the audit returns \(\inconclusive\).
Otherwise, the remaining risk budget \(
\alpha_{\sample}:=\alpha-\alpha_{\dup}-\alpha_{\mathrm{tv}}
\)
is allocated to the adaptive audit test, which is run at effective margin
\(
\mu_{\sample}:=(1-\rho_{\mathrm{tv}}-\rho_{\dup})\mu\).
The conclusion of the procedure is either \(\consistent\) or
\(\inconclusive\). The formal description appears in
Figures~\ref{fig:adaptive-auditor} and~\ref{fig:adaptive-auditor-subroutines}.

\begin{figure}[t!]
\begin{framed}
\underline{Auditor \(\Auditor[(\Stop,\Criterion)](\inAccSmall,\rho_{\mathrm{tv}},\rho_{\dup},\alpha,\alpha_{\mathrm{tv}},\alpha_{\dup})\) for  \(E\)}
\begin{enumerate}[noitemsep]
\item Receive the coarse manifest
  \(\manifest^\coarse=(\size^\coarse_1,\ldots,\size^\coarse_\batches)\) and
  tabulation \(T=(\cvr^{(1)},\ldots,\cvr^{(\batches)})\).

\item Let
  $\manifest^\tab := (\size^\tab_1,\ldots,\size^\tab_\batches)$
  be the batch sizes induced by \(T\), and let
  \(\winner^\tab,\loser^\tab,\size^\tab\) denote the corresponding tabulated
  totals.  Define
$\mu := {(\winner^\tab-\loser^\tab)}/{\size^\tab}$.

\item Run
  \(\CheckManifest(\manifest^\coarse,\manifest^\tab)\).
  If it returns \(\mathtt{Error}\), return \(\inconclusive\).

\item Define $\alpha_{\sample} := \alpha-\alpha_{\dup}-\alpha_{\mathrm{tv}}$,     
    $\mu_{\sample} := (1-\rho_{\mathrm{tv}}-\rho_{\dup})\mu$, and
  $
    \kappa_{\dup} := {\rho_{\dup}\mu}/{2}$.
  
\item Compute
$    (p_{\mathrm{tv}},k_{\mathrm{tv}})
\leftarrow
\Psi(\mu,\rho_{\mathrm{tv}},\alpha_{\mathrm{tv}},\inAcc,\inAccSmall).
$
  If \(p_{\mathrm{tv}}\le 0\), return \(\inconclusive\). Define $\epsilon = (p_{\mathrm{tv}}\inAcc + (1-p_{\mathrm{tv}})\inAccSmall)$.

  \item Run
  \(\BoundSize(\manifest^\tab,k_{\mathrm{tv}},\inAccSmall)\).
  If it returns \(\mathtt{Error}\), return \(\inconclusive\).

\item Let \(\sample\) be the ballot-sampling rule obtained by:
  \begin{enumerate}[noitemsep]
  \item sampling batch \(j\) with probability \(\size^\tab_j/\size^\tab\), and
  \item sampling a ballot uniformly from batch \(j\).
  \end{enumerate}

\item Run
$    \DetectDuplicates(\sample,p_{\mathrm{tv}},\inAccSmall,\inAcc,\kappa_{\dup},$ $\alpha_{\dup},(1+\epsilon)\size^\tab).
  $ If it returns \(\mathtt{Collision}\), return \(\inconclusive\).

\item Initialize \(\iter=0\). Repeat:
  \begin{enumerate}[leftmargin=.5cm,nosep]
  \item Increment \(\iter := \iter+1\).
  \item \(\disc_{\iter}:=\BasicExperiment(\sample,T)\).
  \end{enumerate}
  until $\Stop_{\mu_{\sample},\alpha_{\sample}}(\disc_1,\ldots,\disc_{\iter})=1$.

\item If
  $\Criterion_{\mu_{\sample},\alpha_{\sample}}(\disc_1,\ldots,\disc_{\iter})=1$,
  
  return \(\consistent\); otherwise return \(\inconclusive\).
\end{enumerate}
\end{framed}
\caption{The auditor \(\Auditor[(\Stop,\Criterion)]\), parameterized by the
sharp manifest threshold \(\inAccSmall\), margin fractions
\(\rho_{\mathrm{tv}},\rho_{\dup}\), and risk allocations
\(\alpha_{\mathrm{tv}},\alpha_{\dup}\).}
\label{fig:adaptive-auditor}
\end{figure}

\begin{figure}[t!]
\begin{framed}
\underline{Auditor \(\Auditor[(\Stop,\Criterion)](\inAccSmall,\rho_{\mathrm{tv}},\rho_{\dup},\alpha,\alpha_{\mathrm{tv}},\alpha_{\dup})\) routines}

\underline{\(\CheckManifest(\manifest^\coarse,\manifest^\tab)\):}
\begin{enumerate}[noitemsep]
\item If there is a batch \(\beta\in[\batches]\) for which
    \[
      \size^\tab_\beta \not\in
      \left[
        \frac{\size^\coarse_\beta}{1+\inAcc_0},
        (1+\inAcc_0)\size^\coarse_\beta
      \right]\,,
    \]
then output \(\mathtt{Error}\); otherwise output \(\mathtt{NoError}\).
\end{enumerate}

\underline{\(\BoundSize(\manifest^\tab,k_{\mathrm{tv}},\inAccSmall)\):}
\begin{enumerate}[noitemsep]
\item For \(i=1\) to \(k_{\mathrm{tv}}\):
  \begin{enumerate}[noitemsep]
  \item Sample a batch \(j\) according to the tabulated batch law
      $\Pr[j=\beta]=\size^\tab_\beta/\size^\tab.$
  \item Invoke \textbf{Count Batch} on batch \(j\) and obtain its true size
    \(\size^\act_j\).
  \item If
    \(
      \size^\act_j \not\in
      [
        {\size^\tab_j}/{(1+\inAccSmall)},
        (1+\inAccSmall)\size^\tab_j]
    \),
    output \(\mathtt{Error}\).
  \end{enumerate}
\item Output \(\mathtt{NoError}\).
\end{enumerate}

\underline{\(\DetectDuplicates(\sample,p_{\mathrm{tv}},\inAccSmall,\inAcc,\kappa_{\dup},\alpha_{\dup},N)\):}
\begin{enumerate}[noitemsep]
\item Compute
$    \eta_{\dup}\leftarrow \eta_{\dup}(p_{\mathrm{tv}},\inAccSmall,\inAcc),
    \qquad
    k_{\dup}\leftarrow
    \Phi(\eta_{\dup},\kappa_{\dup},\alpha_{\dup},N).
$
\item Draw \(k_{\dup}\) ballots using the ballot-sampling rule \(\sample\),
  forming a collection \(\mathbf{C}\). (We treat these as drawn without
  replacement by resampling if necessary.)
\item Note the identifiers appearing on the ballots in \(\mathbf{C}\).
\item If any two ballots in \(\mathbf{C}\) have the same identifier, or if any
  of the ballots are unlabeled, return \(\mathtt{Collision}\); otherwise return
  \(\mathtt{NoCollision}\).
\end{enumerate}

\underline{\(\BasicExperiment(\sample,T)\):}
\begin{enumerate}[noitemsep]
\item Use \(\sample\) to select a batch \(\beta\), and invoke
  \textbf{Sample Ballot} on \(\beta\) to obtain a ballot
  \(\ballot\in \Ballots_\beta\) uniformly at random.
\item Let \(\iota\) be the identifier appearing on \(\ballot\).
\item If \(\iota\notin T\) or \(\iota=\bot\), set
$    \winner^\cvr_\iota := 1, \loser^\cvr_\iota := 0.
  $
  Otherwise let \(r\) be the unique row of \(T\) with identifier \(\iota\), and
  set
$
    \winner^\cvr_\iota := \winner_r,\ \loser^\cvr_\iota := \loser_r.
  $
\item Let \(\winner^\act,\loser^\act\in\{0,1\}\) denote the votes on
  \(\ballot\) for the reported winner and loser, respectively.
\item Return
    $(\winner^\cvr_\iota-\loser^\cvr_\iota)-(\winner^\act-\loser^\act).$
\end{enumerate}
\end{framed}
\caption{Auditor \(\Auditor[(\Stop,\Criterion)]\) subroutines.}
\label{fig:adaptive-auditor-subroutines}
\end{figure}

To see the necessity of the \(\mathtt{DetectDuplicates}\) step, consider an
extreme example in which each vote pattern that appears on a ballot is assigned
a separate identifier, along with a single CVR row. All ballots with the same
vote pattern are assigned the same identifier. Then, the rest of the CVR is
arbitrarily populated by the adversary with identifiers and vote patterns of
their choosing. Obviously, the CVR can be made consistent with nearly any
tabulation.

The final risk-limit guarantees arise from controlling three distinct losses:
\begin{enumerate}
  \itemsep0em
  \item \textbf{Residual manifest distortion.} After
  \(\CheckManifest\) and \(\BoundSize\), the auditor has certified that the
  uniform ballot law \(U\) and the tabulation-induced ballot distribution \(\sample\)
  are linked by a \((p_{\mathrm{tv}},\inAccSmall,\inAcc)\)-manifest certificate.
  \ifnum\conference=0
  Theorem~\ref{thm:disc-lb-mostly-good-manifest} of \appref{app:auditor-analysis} 
  \else
  \cite[Theorem 6]{fuller2026sublinear}
  \fi
  converts this
  certificate into the quantitative sampling-error bound used in the discrepancy
  analysis. 
\item \textbf{Duplicate identifiers.}
  \(\DetectDuplicates\) is parameterized by the same certificate
  \((p_{\mathrm{tv}},\inAccSmall,\inAcc)\). Except with probability at most
  \(\alpha_{\dup}\), the loss caused by repeated or missing identifiers is at
  most the budgeted quantity \(\rho_{\dup}\mu\). (Unlabeled sampled ballots are treated conservatively by immediate failure of \(\DetectDuplicates\); this can only reduce the chance of acceptance and therefore does not contribute to risk.)
\item \textbf{Comparison risk.} The final adaptive audit test is then run at the
  reduced margin \(\mu_{\sample}=(1-\rho_{\mathrm{tv}}-\rho_{\dup})\mu\) with
  the remaining risk budget \(\alpha_{\sample}\).
\end{enumerate}

\subsection{Overview of Auditor Analysis}

We now analyze a single execution of \(\BasicExperiment(\sample,T)\) from
Figure~\ref{fig:adaptive-auditor-subroutines}. Fix an election \(E=(\Ballots,T)\), where
\(T\) is the tabulation provided to the auditor, and let
  $\mu := (\winner^\tab-\loser^\tab)/\size^\tab$
denote its tabulated diluted margin.

\smallskip
\noindent
Recall from Definition~\ref{def:discrepancy} that for a ballot \(\ballot\in\Ballots\), its \emph{signed ballot discrepancy}
with respect to \(T\) is defined by
\[
  \disc_T(\ballot)
  :=
  \begin{cases}
    \begin{aligned}
      &(\winner_r-\loser_r)-(\winner_{\ballot}-\loser_{\ballot}),\\[-0.25ex]
      &\quad \text{if \(\id_{\ballot}\) appears in \(T\) at \(r\)};
    \end{aligned}\\[1ex]
    \begin{aligned}
      &1-(\winner_{\ballot}-\loser_{\ballot}),\\[-0.25ex]
      &\quad \text{otherwise}.
    \end{aligned}
  \end{cases}
\]
\(\BasicExperiment(\sample,T)\) is determined by drawing a ballot \(\ballot \sim \sample\) according to the sampling rule produced by
\(\BoundSize\); the value returned by \(\BasicExperiment\) is \(\disc_T(\ballot)\), which is in \(\Sigma=\{-2,-1,0,1,2\}\).

To articulate the guarantees of the auditor, we set down some notation. Let
\[
  \kappa := \frac{(|\Ballots|-|\Labels_{\Ballots}|)}{|\Ballots|} = \frac{1}{|\Ballots|} \sum_{\iota \in \Labels_{\Ballots}} \left(|\{ \ballot \in \Ballots \mid \id_\ballot = \iota\}|-1\right)
\]
denote the \emph{excess identifier multiplicity rate}: after retaining at most one ballot for each identifier value, a $\kappa$-fraction of ballots remain unpaired.
Let
\[
  \Bad
  :=
  \left\{
    i\in[\batches]:
    \size^\tab_i \notin
    \left[
      {\size^\act_i}/{(1+\inAccSmall)},
      (1+\inAccSmall)\size^\act_i
    \right]
  \right\}
\]
denote the set of \(\inAccSmall\)-inaccurate batches, and let
$
  p:=\sum_{i\in\Bad}{\size^\tab_i}/{\size^\tab}
$
denote their total \emph{tabulated} ballot mass.
Finally, define
\begin{align}
  \varepsilon(p) &:=(1-p)\inAccSmall+p\inAcc, \label{eq:eps}\\
  \tau(p) &:=
  \left(
    \frac{1-p}{1+\inAccSmall}
    +
    \frac{p}{1+\inAcc}
  \right)^{-1}
  -1,\label{eq:tau} \\
\intertext{and}
  \Gamma_{\mathrm{tv}}(p)
  &:=
  \frac{\inAccSmall}{2+\inAccSmall}
    +
    \frac{(1+\inAccSmall)p\inAcc}{1+\inAccSmall p}\,.\label{eq:Gamma}
\end{align}

The informal theorem below summarizes the manifest-distortion bound proved in \appref{app:auditor-analysis}.
\begin{theorem}[Informal Main Theorem]
Suppose that $\size_i^\tab/\size_i^\act\in[1/(1+\inAcc),1+\inAcc]$ for every batch $i$, and let $p$ be the total tabulated mass of the $\inAccSmall$-inaccurate batches, as defined above. Then, for any invalid election,
\begingroup
\ifnum\conference=0
\small
\fi
\begin{align}
  \Exp_{\sample}[\disc_T(\ballot)]
  &\ge
  \min\!\left\{
    \frac{\mu}{1+\varepsilon(p)},
    \,
    \mu(1+\tau(p))-\tau(p)
  \right\}\nonumber
  \\&-2\kappa
  -4\Gamma_{\mathrm{tv}}(p).\label{eq:disc-lower-main-text}
\end{align}
\endgroup
\end{theorem}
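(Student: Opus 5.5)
The plan is to split the bound into two pieces. First, lower-bound the expected discrepancy under the \emph{uniform} ballot law $U$ on $\Ballots$; this gives the $\min\{\cdot,\cdot\}$ term and the $-2\kappa$ term. Second, transfer that bound to $\sample$ by a total-variation argument, which costs $4\Gamma_{\mathrm{tv}}(p)$. The transfer works because $\disc_T(\ballot)\in\Sigma\subseteq[-2,2]$, so for any two laws $A,B$ on $\Ballots$ we have $|\Exp_A[\disc_T]-\Exp_B[\disc_T]|\le 4\,\TV(A,B)$. It therefore suffices to show
\[
\Exp_U[\disc_T(\ballot)]\ge \min\{\mu/(1+\varepsilon(p)),\ \mu(1+\tau(p))-\tau(p)\}-2\kappa
\quad\text{and}\quad
\TV(U,\sample)\le\Gamma_{\mathrm{tv}}(p).
\]

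\textbf{Step 1: a counting bound on $\sum_{\ballot}\disc_T(\ballot)$.} Write $v(\ballot)$ for $\winner_r-\loser_r$ when $\id_\ballot$ matches row $r$ of $T$, and $v(\ballot)=1$ otherwise. Then $\sum_\ballot\disc_T(\ballot)=\sum_\ballot v(\ballot)-(\winner^\act-\loser^\act)$.

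\begin{itemize}
\item Choose one representative ballot for each label in $\Labels_\Ballots$. The remaining $\kappa|\Ballots|$ ballots each contribute $v\ge -1$.
\item Among the representatives, those whose label matches a row contribute exactly that row's value. The others contribute $1$, which is at least any row value.
\item Unmatched rows of $T$ are at most $\size^\tab-|\Labels_\Ballots|+(\#\text{unmatched representatives})$ in number, and each contributes at most $1$ to $\winner^\tab-\loser^\tab$.
\end{itemize}

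Comparing $\sum_\ballot v$ with $\winner^\tab-\loser^\tab=\sum_r(\winner_r-\loser_r)$ then gives
\[
\sum_\ballot v(\ballot)\ge (\winner^\tab-\loser^\tab)-(\size^\tab-|\Labels_\Ballots|)_+-\kappa\size^\act .
\]
Since $|\Labels_\Ballots|=(1-\kappa)\size^\act$, this becomes
\[
\sum_\ballot\disc_T(\ballot)\ge \disc-(\size^\tab-\size^\act)_+-2\kappa\size^\act .
\]
For an invalid election $\disc\ge\mu\size^\tab$. Dividing by $\size^\act$ and setting $x=\size^\tab/\size^\act$ yields $\Exp_U[\disc_T]\ge \mu x-(x-1)_+-2\kappa$.

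\textbf{Step 2: bounding $x$ by $p$.} Summing the per-batch bounds gives both directions.
\begin{itemize}
\item Upper bound on $\size^\act$: $\size^\act_\beta\le(1+\inAccSmall)\size^\tab_\beta$ off $\Bad$ and $\size^\act_\beta\le(1+\inAcc)\size^\tab_\beta$ on $\Bad$. Summing gives $\size^\act\le(1+\varepsilon(p))\size^\tab$.
\item Lower bound on $\size^\act$: dividing the same ratios the other way and summing gives $\size^\act\ge\size^\tab\bigl(\tfrac{1-p}{1+\inAccSmall}+\tfrac{p}{1+\inAcc}\bigr)=\size^\tab/(1+\tau(p))$.
\end{itemize}
Thus $x\in[1/(1+\varepsilon(p)),\,1+\tau(p)]$. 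If $x\le1$, the bound is $\mu x\ge\mu/(1+\varepsilon(p))$. If $x>1$, the function $\mu x-x+1$ is decreasing in $x$, so the bound is at least $\mu(1+\tau(p))-\tau(p)$. Taking the minimum covers both cases.

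\textbf{Step 3: the total-variation bound.} Under $\sample$, a ballot in batch $\beta$ has probability $\size^\tab_\beta/(\size^\tab\size^\act_\beta)$. So $\sample$ is the reweighting of $U$ by $w(\ballot)\propto r_\beta:=\size^\tab_\beta/\size^\act_\beta$. Off $\Bad$ we have $r_\beta\in[1/(1+\inAccSmall),1+\inAccSmall]$, and everywhere $r_\beta\in[1/(1+\inAcc),1+\inAcc]$. Hence $\sample$ is a contaminated reweighting of $U$ in the sense of Definition~\ref{def:contaminated-reweighting}.

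I would bound $\TV$ through an intermediate law $\sample'$: keep the weights $r_\beta$ off $\Bad$ and set the weight to $1$ on $\Bad$, then renormalize. Then use the triangle inequality.
\begin{itemize}
\item For $\TV(U,\sample')$: all weights lie in a band of multiplicative width $(1+\inAccSmall)^2$. A reweighting with that spread is at total-variation distance at most $\tfrac{(1+\inAccSmall)^2-1}{2\,\cdot\,\text{stuff}}$ from $U$, which should come out to $\inAccSmall/(2+\inAccSmall)$. I would verify this by the extremal two-point configuration.
\item For $\TV(\sample',\sample)$: moving the bad mass from weight $1$ to weights in $[1/(1+\inAcc),1+\inAcc]$ changes at most a $\sample$-mass that I expect to be controlled by $(1+\inAccSmall)p\inAcc/(1+\inAccSmall p)$. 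The factors $(1+\inAccSmall)$ account for converting the tabulated mass $p$ into the normalizing constants.
\end{itemize}

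\textbf{Main obstacle.} Step 3 is the delicate part: tracking the normalizations so that the constants match $\Gamma_{\mathrm{tv}}(p)$ exactly. The main difficulty is that $p$ is measured under the \emph{tabulated} law, not under $U$. If the split above does not reproduce the constants exactly, I would instead compute $\TV=\sum_\ballot(\sample(\ballot)-U(\ballot))_+$ directly. I would maximize it as a linear program over batch ratios, subject to the good and bad constraints with tabulated bad mass $p$, and check that the optimum is at most $\Gamma_{\mathrm{tv}}(p)$.
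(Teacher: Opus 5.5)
Your Steps 1 and 2, and the transfer to $\sample$ at cost $4\,\TV$, follow the paper's argument. The counting bound is its uniform-ballot lemma, with $\lambda=\size^\tab/\size^\act$ in the role of your $x$. The case split on $x$ is its endpoint minimization of $\mu\lambda-(\lambda-1)_+$ over $[1/(1+\varepsilon(p)),\,1+\tau(p)]$, and the size-ratio bounds come from the same summation over good and bad batches.

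Step 3 is where you depart, and the constant you anticipate does not come out of your construction. The paper reweights from the tabulated law toward the actual one. Its intermediate law $q$ uses actual sizes on good batches and tabulated sizes on $\Bad$. The good step costs $\delta/(2+\delta)$. The bad step is charged at $q(\Bad)=p/A$ with $A\ge(1-p)/(1+\delta)+p$, and that is exactly where $(1+\delta)p/(1+\delta p)$ comes from.

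Your $\sample'$ is the mirror image: tabulated sizes off $\Bad$, actual sizes on $\Bad$. So that $(1+\delta)$ normalization never appears. Suppose you charge the bad step with $\sample'$ as the base, using the crude bound $s\max\{\ell,u\}$ of the exceptional-set lemma. The normalizer then involves $\sum_{\beta\in\Bad}\size^\act_\beta\le(1+\Delta)p\,\size^\tab$, and you get only $\frac{(1+\Delta)p}{1+\Delta p}\Delta$. This exceeds the second term of $\Gamma_{\mathrm{tv}}(p)$ whenever $\Delta>\delta$ and $0<p<1$, so that reading of your plan would not reach the stated bound.

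The clean fix is to take $\sample$ itself as the base. $\sample'$ is $\sample$ reweighted by $\size^\act_\beta/\size^\tab_\beta\in[1/(1+\Delta),1+\Delta]$ on $\Bad$ and by $1$ elsewhere, and $\sample(\Bad)=p$ exactly. Lemma~\ref{lem:exceptional-reweight} then gives $\TV(\sample,\sample')\le p\Delta$. Hence $\TV(U,\sample)\le\delta/(2+\delta)+p\Delta\le\Gamma_{\mathrm{tv}}(p)$, since $(1+\delta)/(1+\delta p)\ge 1$.

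Run from the right base, your mirrored decomposition closes the proof without the LP fallback. It even gives a slightly sharper total-variation bound than the paper's; the paper's orientation is what costs the extra factor $(1+\delta)/(1+\delta p)$.
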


The two preliminary routines in the auditor are calibrated precisely so that the
right-hand side of Equation~\eqref{eq:disc-lower-main-text} is at least the effective
margin used by the final adaptive audit test.

\begin{enumerate}
\itemsep0em
\item The duplicate-detection routine is parameterized by
\(
  \kappa_{\dup}:= {\rho_{\dup}\mu}/{2}
\).
Except with probability at most \(\alpha_{\dup}\), either
the test \(\DetectDuplicates\) returns \(\mathtt{Collision}\) or else
\(
  \kappa < \kappa_{\dup}
\).
Thus, on the event that duplicate detection is sound and the audit continues,
\(2\kappa \le 2\kappa_{\dup} = \rho_{\dup}\mu\). 

\item The manifest-certification routine is parameterized by $\mu$, $\inAcc$, $\inAccSmall$, $\rho_{\mathrm{tv}}$ and $\alpha_{\mathrm{tv}}$, and ensures that 
except with probability at most
\(\alpha_{\mathrm{tv}}\), either \(\BoundSize\) returns \(\mathtt{Error}\) or else
the accepted manifest has parameters satisfying
\begin{align}
\label{eq:manifest-budget-main-text}
\min\!\left\{
  \frac{\mu}{1+\varepsilon(p_{\mathrm{tv}})},\,
  \mu(1+\tau(p_{\mathrm{tv}}))-\tau(p_{\mathrm{tv}})
\right\}
-4\Gamma_{\mathrm{tv}}(p_{\mathrm{tv}})
\ifnum\conference=0
\nonumber
\\
\fi
\ge
(1-\rho_{\mathrm{tv}})\mu.
\end{align}
\end{enumerate}

Combining Equation~\eqref{eq:disc-lower-main-text} with
Equation~\eqref{eq:manifest-budget-main-text}, we conclude that whenever both
preliminary checks are sound and the audit reaches the comparison stage,
\[
  \Exp_{\sample}[\disc_T(\ballot)]
  \ge
  (1-\rho_{\mathrm{tv}})\mu - \rho_{\dup}\mu
  =
  (1-\rho_{\mathrm{tv}}-\rho_{\dup})\mu.
\]
Recalling the notation from Figure~\ref{fig:adaptive-auditor},
\(
  \mu_{\sample}:=(1-\rho_{\mathrm{tv}}-\rho_{\dup})\mu
\),
so the observations returned by \(\BasicExperiment\) are
\(\mu_{\sample}\)-dominating.
We summarize this as follows.

\begin{theorem}
\label{thm:one-step-domination-direct}
Fix an invalid election \(E=(\Ballots,T)\). Suppose that the manifest-certification
step and duplicate-detection step are both sound and that the auditor proceeds to the
comparison stage. Then each execution of \(\BasicExperiment(\sample,T)\) returns a
random variable \(\obsdisc\in\Sigma\) satisfying
\[
  \Exp[\obsdisc \mid \text{previous executions}] \ge \mu_{\sample},
\]
where
\(
  \mu_{\sample}=(1-\rho_{\mathrm{tv}}-\rho_{\dup})\mu
\).
Consequently, the resulting sequence of observations is
\(\mu_{\sample}\)-dominating.
\end{theorem}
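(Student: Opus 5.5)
The plan is to derive the domination property from the Informal Main Theorem (the lower bound \eqref{eq:disc-lower-main-text}) together with the two soundness events for the preliminary routines. Conditioning on "previous executions" does nothing, by the structure of the game. After the adversary has fixed the batches, labels, coarse manifest and tabulation, every invocation of $\BasicExperiment(\sample,T)$ does the same three things: it draws a batch from the fixed law $\size^\tab_j/\size^\tab$, draws a ballot uniformly and with replacement from that batch, and outputs the deterministic function $\disc_T(\ballot)\in\Sigma$. So successive observations are i.i.d. given the election. In particular, the conditional expectation given earlier executions equals $\Exp_{\sample}[\disc_T(\ballot)]$. The preliminary stages only decide whether the comparison stage is reached; they do not change the distribution of later draws. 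I will state this carefully: the soundness events are functions of the election parameters $(p,\kappa)$ and the certified $p_{\mathrm{tv}}$, and conditioning on them leaves the fresh draws with the same law.

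\textbf{Step 1: apply the main bound.} Since $\CheckManifest$ returned $\mathtt{NoError}$, the computation in Section~\ref{sec:adversary model} gives $\size^\tab_i/\size^\act_i\in[1/(1+\inAcc),1+\inAcc]$ for every batch $i$. This is the hypothesis of the Informal Main Theorem (proved as Theorem~\ref{thm:disc-lb-mostly-good-manifest} in the appendix). Because the election is invalid, that theorem gives \eqref{eq:disc-lower-main-text} at the true mass $p$ of $\inAccSmall$-inaccurate batches.

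\textbf{Step 2: replace $p$ by $p_{\mathrm{tv}}$.} The manifest-certification soundness event, given that $\BoundSize$ accepted, asserts $p\le p_{\mathrm{tv}}$. I will check that the function
\[
\min\{\mu/(1+\varepsilon(p)),\ \mu(1+\tau(p))-\tau(p)\}-4\Gamma_{\mathrm{tv}}(p)
\]
is nonincreasing in $p$, case by case:
\begin{itemize}
\item $\varepsilon(p)$ is increasing in $p$, since $\inAcc\ge\inAccSmall$.
\item $\tau(p)$ is increasing in $p$, since $1/(1+\inAcc)\le 1/(1+\inAccSmall)$. Also $\mu(1+\tau)-\tau=\mu-\tau(1-\mu)$ decreases in $\tau$ because $\mu\le 1$.
\item $\Gamma_{\mathrm{tv}}(p)$ is increasing in $p$: the derivative of $p/(1+\inAccSmall p)$ is $1/(1+\inAccSmall p)^2>0$.
\end{itemize}
With monotonicity in hand, the bound at $p$ is at least the bound at $p_{\mathrm{tv}}$, and by \eqref{eq:manifest-budget-main-text} the latter is at least $(1-\rho_{\mathrm{tv}})\mu$. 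If the certification event is instead phrased directly as "the budget inequality holds at the true $p$", this step is immediate.

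\textbf{Step 3: duplicates and conclusion.} The duplicate-detection soundness event, given $\mathtt{NoCollision}$, yields $\kappa<\kappa_{\dup}=\rho_{\dup}\mu/2$, so $2\kappa\le\rho_{\dup}\mu$. Combining the three steps gives
\[
\Exp[\obsdisc\mid\text{previous}]\ge(1-\rho_{\mathrm{tv}})\mu-\rho_{\dup}\mu=\mu_{\sample}.
\]
Boundedness holds because $\obsdisc\in\Sigma$, so the sequence is $\mu_{\sample}$-dominating.

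\textbf{Main obstacle.} I expect the delicate point to be Step 2: making sure the certificate guarantees the inequality at the \emph{true} $p$ rather than only at the declared $p_{\mathrm{tv}}$. The monotonicity check handles this, provided the soundness event of $\BoundSize$ is exactly $\{p\le p_{\mathrm{tv}}\}\cup\{\mathtt{Error}\}$. I will verify that this matches how $\Psi$ chooses $k_{\mathrm{tv}}$: the probability that $k_{\mathrm{tv}}$ draws all miss a set of tabulated mass greater than $p_{\mathrm{tv}}$ is at most $(1-p_{\mathrm{tv}})^{k_{\mathrm{tv}}}\le\alpha_{\mathrm{tv}}$.
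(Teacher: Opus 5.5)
Your proposal is correct and follows the paper's argument: apply the discrepancy bound of Theorem~\ref{thm:disc-lb-mostly-good-manifest} at the true bad mass $p$, use $p\le p_{\mathrm{tv}}$ together with the monotonicity of $L_\mu$ to invoke \eqref{eq:manifest-budget-main-text}, and absorb the duplicate loss via $2\kappa\le\rho_{\dup}\mu$. You also make explicit two points the paper leaves implicit: the monotonicity step from $p$ to $p_{\mathrm{tv}}$, and the fact that the comparison-stage draws are fresh and i.i.d.\ given the election, so conditioning on earlier executions or on the preliminary tests passing does not change their law.
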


\begin{corollary}
\label{cor:direct-auditor-sound}
Let
\(
  \alpha_{\sample}:=\alpha-\alpha_{\dup}-\alpha_{\mathrm{tv}}.
\)
If \((\Stop,\Criterion)\) is an \(\alpha_{\sample}\)-risk-limiting adaptive audit test,
then the auditor
\[
  \Auditor[(\Stop,\Criterion)]
  (\inAccSmall,\rho_{\mathrm{tv}},\rho_{\dup},\alpha,\alpha_{\mathrm{tv}},\alpha_{\dup})
\]
has risk at most \(\alpha\).
\end{corollary}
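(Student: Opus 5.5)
The proof is a union bound over three failure events, with Theorem~\ref{thm:one-step-domination-direct} supplying the key domination property. Fix raw ballots $\Ballots$ and an adversary $\Adversary$ whose election $E=(\Ballots,T)$ is invalid. All adversarial choices (batches, labels, coarse manifest, tabulation) are fixed before the audit, so the only randomness is the auditor's sampling. If $\CheckManifest$ returns $\mathtt{Error}$, or $\Psi$ returns $p_{\mathrm{tv}}\le 0$, the output is $\inconclusive$. We may therefore assume $\CheckManifest$ succeeds. Combined with the coarse-manifest constraint~\eqref{eq:coarse}, this gives $\size^\tab_i/\size^\act_i\in[1/(1+\inAcc),1+\inAcc]$ for every batch, which is the hypothesis of the informal main theorem.

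We define three events:
\begin{itemize}
\item $F_{\mathrm{tv}}$: $\BoundSize$ returns $\mathtt{NoError}$, yet the true mass $p$ of $\inAccSmall$-inaccurate batches exceeds $p_{\mathrm{tv}}$, so that~\eqref{eq:manifest-budget-main-text} fails at the actual $p$.
\item $F_{\dup}$: $\DetectDuplicates$ returns $\mathtt{NoCollision}$ although $\kappa\ge\kappa_{\dup}$.
\item $F_{\sample}$: the final adaptive test returns $\consistent$, evaluated on the run in which neither $F_{\mathrm{tv}}$ nor $F_{\dup}$ occurs.
\end{itemize}
Since $\{\consistent\}\subseteq F_{\mathrm{tv}}\cup F_{\dup}\cup(\{\consistent\}\cap\overline{F_{\mathrm{tv}}}\cap\overline{F_{\dup}})$, it suffices to bound the three terms by $\alpha_{\mathrm{tv}}$, $\alpha_{\dup}$ and $\alpha_{\sample}$ respectively. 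These sum to $\alpha$.

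The first two bounds are the calibration guarantees of $\Psi$ and $\Phi$ stated above, which I take as given.
\begin{itemize}
\item For $\BoundSize$: if $p>p_{\mathrm{tv}}$, each of the $k_{\mathrm{tv}}$ independent draws from the tabulated batch law lands in $\Bad$ with probability $p$ and triggers $\mathtt{Error}$. Hence $\Pr[F_{\mathrm{tv}}]\le(1-p_{\mathrm{tv}})^{k_{\mathrm{tv}}}\le\alpha_{\mathrm{tv}}$ by the choice of $k_{\mathrm{tv}}$. One should also check that the left side of~\eqref{eq:manifest-budget-main-text} is monotone nonincreasing in $p$, so that $p\le p_{\mathrm{tv}}$ suffices.
\item For $\DetectDuplicates$: $\Pr[F_{\dup}]\le\alpha_{\dup}$ by the stated property of $\Phi$. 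This holds for the distorted sampler $\sample$ because $\eta_{\dup}$ accounts for the $(p_{\mathrm{tv}},\inAccSmall,\inAcc)$ distortion and $N=(1+\epsilon)\size^\tab$ upper-bounds the relevant population. This bound must hold uniformly, including on the event $\overline{F_{\mathrm{tv}}}$.
\end{itemize}

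The main step is bounding the third term. On the complement of $F_{\mathrm{tv}}\cup F_{\dup}$, the election has deterministic properties $p\le p_{\mathrm{tv}}$ and $\kappa<\kappa_{\dup}$, whenever the audit reaches the comparison stage. Theorem~\ref{thm:one-step-domination-direct} then says each $\BasicExperiment$ output satisfies $\Exp[\obsdisc\mid\text{past}]\ge\mu_{\sample}$, because each draw is a fresh sample from $\sample$ independent of earlier draws. The subtlety, which I expect to be the main obstacle, is that the comparison samples may reuse the duplicate-detection samples. The cleanest fix is to condition on nothing random. Whether $p\le p_{\mathrm{tv}}$ and $\kappa<\kappa_{\dup}$ hold is a property of the fixed election, not of the samples. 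So we split into cases on the fixed election:
\begin{itemize}
\item If the election satisfies both properties, the comparison observations, viewed as an unconditioned stream drawn from $\sample$, are $\mu_{\sample}$-dominating. Since $(\Stop,\Criterion)$ is $\alpha_{\sample}$-risk-limiting with $\mu_{\sample}\in(0,2]$, the probability of $\consistent$ is at most $\alpha_{\sample}$, whatever the preliminary tests did.
\item If $p>p_{\mathrm{tv}}$, then $\Pr[\consistent]\le\Pr[\BoundSize\text{ passes}]\le\alpha_{\mathrm{tv}}$.
\item If $\kappa\ge\kappa_{\dup}$, then $\Pr[\consistent]\le\Pr[\mathtt{NoCollision}]\le\alpha_{\dup}$.
\end{itemize}
Each case gives a bound at most $\alpha_{\mathrm{tv}}+\alpha_{\dup}+\alpha_{\sample}=\alpha$. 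This case analysis on the fixed election avoids any independence requirement between the stages, matching the remark that only a union bound is needed.
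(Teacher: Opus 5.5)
Your proof is correct, and its skeleton is the paper's. You allocate $\alpha$ across manifest certification, duplicate detection, and the comparison test, and Theorem~\ref{thm:one-step-domination-direct} supplies $\mu_{\sample}$-domination in the good case.

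What you do differently is the case split on the fixed election. The paper's proof conditions on ``the preliminary checks are sound and the auditor reaches the comparison stage,'' and reaching that stage is a random event. The paper explicitly permits the comparison to reuse the $\DetectDuplicates$ draws. In that case, conditioning on $\mathtt{NoCollision}$ can tilt the law of those draws, so the paper's one-line step is only immediate when the comparison samples are fresh. You instead observe that $p\le p_{\mathrm{tv}}$ and $\kappa<\kappa_{\dup}$ are deterministic properties of the adversary's election. This lets you bound $\Pr[\consistent]$ by the rejection probability of the adaptive test on the unconditioned i.i.d.\ stream from $\sample$, which needs no independence between stages.

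Your argument also proves more than you state. Each invalid election falls into a case controlled by a single test, so the risk is at most $\max\{\alpha_{\mathrm{tv}},\alpha_{\dup},\alpha_{\sample}\}$ (the intersection--union principle), whereas the paper's union bound pays the sum. The paper's version is shorter and matches its additive risk budgeting, but it is looser.

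One correction: the $\alpha_{\dup}$ calibration of $\DetectDuplicates$ relies on the manifest certificate. It uses $N=(1+\epsilon)\size^{\tab}\ge|\Ballots|$ and the $\eta_{\dup}$-closeness of $\sample$, and both require $p\le p_{\mathrm{tv}}$. So your third case, and the event $F_{\dup}$, should be restricted to $p\le p_{\mathrm{tv}}$. This costs nothing, since your second case already covers $p>p_{\mathrm{tv}}$.
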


\begin{proof}
By Theorem~\ref{thm:one-step-domination-direct}, conditioned on the event that the two
preliminary checks are sound and the auditor reaches the comparison stage, the sequence
of outputs from \(\BasicExperiment\) is \(\mu_{\sample}\)-dominating. Therefore, the
final adaptive audit test returns \(\consistent\) with probability at most
\(\alpha_{\sample}\).

The manifest-certification step can fail to detect a bad manifest with probability at
most \(\alpha_{\mathrm{tv}}\), and the duplicate-detection step can fail to detect an
excess of duplicate identifiers with probability at most \(\alpha_{\dup}\). A union
bound therefore gives total risk at most
$\alpha_{\sample}+\alpha_{\dup}+\alpha_{\mathrm{tv}}=\alpha. \qedhere$
\end{proof}

\subsection{Duplicate Detection -- \(\Phi\)}
\label{sec:narrative-dup}

The duplicate-detection step is parameterized by the risk allocation
\(\alpha_{\dup}\) and by a duplicate budget
$  \kappa_{\dup} := \rho_{\dup}\mu/2,$
where \(\mu\) is the diluted margin of the tabulation. The role of \(\kappa_{\dup}\) is
discussed above 
\ifnum\conference=0(and treated formally in the proof of Theorem~\ref{thm:disc-lb-mostly-good-manifest}): 
\else
(and treated formally in the proof of \cite[Theorem 6]{fuller2026sublinear}): 
\fi if the excess identifier multiplicity rate
is at most \(\kappa_{\dup}\), then the loss in the discrepancy
lower bound arising from duplicate identifiers is at most
\(2\kappa_{\dup}=\rho_{\dup}\mu\).

The manifest-certification step supplies a certificate
\(
  (p_{\mathrm{tv}},\inAccSmall,\inAcc)
\)
with the property that the uniform ballot law \(U\) is a
\((p_{\mathrm{tv}},\inAccSmall,\inAcc)\)-manifest of the
tabulation-induced sampling law \(\sample\). For simplicity, we present a duplicate detector that uses this certificate through a conservative reduction that just depends on closeness in the sense of Def.~\ref{def:mult-close}.

Let
\(
  \varepsilon(p):=(1-p)\inAccSmall+p\inAcc
\) (as in Equation~\eqref{eq:eps}).
\ifnum\conference=0
By Theorem~\ref{thm:batchwise-smooth} from \appref{app:auditor-analysis},
\else
By \cite[Theorem 5]{fuller2026sublinear}
\fi if \(U\) is a
\((p,\inAccSmall,\inAcc)\)-manifest of \(\sample\), then
\(\sample\) is \(\eta_{\dup}\)-close to uniform, where
\begin{equation}
\label{eq:eta-dup-main}
\eta_{\dup} = \eta_{\dup}(p,\inAccSmall,\inAcc)
  :=
  (1+\inAcc)(1+\varepsilon(p))-1.
\end{equation}
Applying this with \(p=p_{\mathrm{tv}}\), we obtain that \(\sample\) is
\(\eta_{\dup}(p_{\mathrm{tv}}, \inAccSmall, \inAcc)\)-close to uniform.
This calibration is fully derived in
\appref{app:asymmetric-manifests}.
Accordingly, the procedure
\[  \DetectDuplicates(\sample,p_{\mathrm{tv}},\inAccSmall,\inAcc,\kappa_{\dup},\alpha_{\dup},N)
\]
takes as input the ballot-sampling rule \(\sample\), the manifest certificate
\((p_{\mathrm{tv}},\inAccSmall,\inAcc)\), the duplicate budget
\(\kappa_{\dup}\), a risk allocation \(\alpha_{\dup}\), and an upper bound
\(N\ge |\Ballots|\) on the total number of ballots (which will arise from the statistical manifest; see~\eqref{eq:size-bound} below). It first computes
\(\eta_{\dup}\) from Equation~\eqref{eq:eta-dup-main}, and then draws
\[
  k_{\dup}:=\Phi(\eta_{\dup},\kappa_{\dup},\alpha_{\dup},N)
\]
ballots without replacement according to \(\sample\), records their identifiers,
and returns \(\mathtt{Collision}\) if either
\begin{enumerate}
\itemsep0em
\item some sampled ballot is unlabeled, i.e., has identifier \(\bot\), or
\item some pair of sampled ballots have the same identifier.
\end{enumerate}
Otherwise it returns \(\mathtt{NoCollision}\).

Thus if the excess identifier multiplicity rate satisfies $\kappa \geq \kappa_{\dup}$, the
procedure returns \(\mathtt{Collision}\) with probability at least
\(1-\alpha_{\dup}\). The additional rule that unlabeled ballots also cause
immediate failure is conservative: it can only decrease the probability that the audit returns \consistent, so it affects completeness but not soundness. (Ballots
that aren't run through a tabulator are organized in hand-counted batches with  a manual $\cvr$.)
We calibrate \(\Phi\) using the following bound.

\begin{lemma}[Duplicate detection under sequential sampling]
\label{lem:dup-advertise}
Let \(A\) be a finite set with \(|A|=n\), let \(f:A\to B\) be a function with
\(|B|=n-\ell\), and let \(\Pi\) be a probability distribution on \(A\) such that
\[
\Pi(x)\ge \frac{1}{n(1+c)}
\qquad\text{for all }x\in A,
\]
for some \(c\ge 0\).

Let \(X_1,\dots,X_k\) be sampled without replacement according to the sequential law
\(
X_1\sim \Pi,
\)
and for \(t\ge 1\),
\[
\Pr(X_{t+1}=x\mid X_1,\dots,X_t)
=
\frac{\Pi(x)}
     {1-\sum_{j=1}^t \Pi(X_j)}
\]
for every $x\notin\{X_1,\dots,X_t\}$.

Define the no-collision event
\[
E_k
:=
\Bigl\{
\forall i\neq j\le k,\ f(X_i)\neq f(X_j)
\Bigr\}.
\]
Thus \(E_k\) is the event that no duplicate value $f(X)$ is
detected among the \(k\) draws. Then, for $k \leq n$,
\begin{align}
  \label{eq:dup-advertise-strong}
  \Pr[E_k] &\le 2\exp\!\left(-\ell \left[ 1-\exp\left(\frac{-k}{n(1+c)}\right)\right]^2\right)\\
\label{eq:dup-advertise}
           &\le 2\exp\!\left(-\frac{k^2\ell}{4n^2(1+c)^2}\right).
\end{align}
\end{lemma}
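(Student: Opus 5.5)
The plan is to reduce the collision probability to a product of independent per-fiber terms by realizing the sequential sampling law with independent exponential clocks. The second inequality \eqref{eq:dup-advertise} will then follow from the first by an elementary estimate.

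\textbf{Combinatorial reduction.} Write \(A=\bigsqcup_{y\in B}F_y\) with \(F_y=f^{-1}(y)\) and \(s_y=|F_y|\), so that \(\sum_y (s_y-1)=n-(n-\ell)=\ell\). The event \(E_k\) says that every fiber \(F_y\) contributes at most one element to \(\{X_1,\dots,X_k\}\).

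\textbf{Exponential clocks.} Attach to each \(x\in A\) an independent clock \(\xi_x\sim\mathrm{Exp}(\Pi(x))\) and list the elements in increasing order of their clocks. By the memoryless property and the competing-exponentials formula, this ordering has exactly the sequential law in the statement: given the first \(t\) elements, the next one is \(x\) with probability \(\Pi(x)/(1-\sum_{j\le t}\Pi(X_j))\). So we may take \(\{X_1,\dots,X_k\}\) to be the \(k\) elements with the smallest clocks.

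Fix a threshold time \(t\) and let \(S_t=\{x:\xi_x<t\}\). Define the event \(G=\{|S_t|\le k\}\). On \(G\), the set \(S_t\) is contained in the first \(k\) draws, so \(E_k\) forces \(S_t\) to hit every fiber at most once. Hence
\[
\Pr[E_k]\le \Pr[G^c]+\prod_{y}\Pr\bigl[\,|S_t\cap F_y|\le 1\,\bigr],
\]
and the product form uses the independence of the clocks.

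\textbf{Bounding each fiber term.} Each \(x\) lies in \(S_t\) independently with probability \(p_x=1-e^{-\Pi(x)t}\ge 1-e^{-t/(n(1+c))}=:q\). The probability that at most one of independent Bernoulli\((p_i)\) variables fires is \(\prod_i(1-p_i)\bigl(1+\sum_i p_i/(1-p_i)\bigr)\). This quantity is decreasing in each \(p_i\), so it is at most \((1-q)^{m}(1+mq)\) with \(m=s_y-1\).

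I will check, for instance by induction on \(m\) or by comparing logarithms, that \((1-q)^m(1+mq)\le e^{-mq^2}\); the case \(m=1\) is \(1-q^2\le e^{-q^2}\). Multiplying over fibers then gives \(\prod_y\Pr[\,|S_t\cap F_y|\le1\,]\le e^{-\ell q^2}\).

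\textbf{Main obstacle: the term \(\Pr[G^c]\) and the value of \(t\).} We have \(\Exp|S_t|\le\sum_x\Pi(x)t=t\), so \(|S_t|\) is a sum of independent Bernoullis with mean at most \(t\). I would first try a threshold \(t\) slightly below \(k\) and bound \(\Pr[|S_t|>k]\) by a Chernoff bound, aiming for \(\Pr[G^c]\le e^{-\ell q^2}\); together with the product bound this accounts for the factor \(2\) in the statement.

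The difficulty is that the statement has \(q\) evaluated exactly at \(t=k\), with no slack. So the step I expect to be hardest is matching the constants: the Chernoff term must be dominated using only that \(\ell\le n\) and \(q\le 1\). If this does not close, my fallback is to argue directly at time \(t=k\) by a stochastic-domination or negative-association argument for the indicators \(\mathbf 1[x\in\{X_1,\dots,X_k\}]\). By the sequential law each element is missed with probability at most \((1-\tfrac1{n(1+c)})^k\le e^{-k/(n(1+c))}\), so each inclusion probability is at least \(q\), and the aim would be to show the fiber events are sufficiently negatively correlated to reproduce the same product bound.

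\textbf{Second inequality.} Put \(x=k/(n(1+c))\le 1\), which holds since \(k\le n\). On \([0,1]\) we have \(1-e^{-x}\ge x/2\), so \(q^2\ge k^2/(4n^2(1+c)^2)\), and substituting into \eqref{eq:dup-advertise-strong} yields \eqref{eq:dup-advertise}.
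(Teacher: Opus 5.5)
\textbf{There is a genuine gap: the additive combination with \(G\) cannot produce the stated bound.} Your exponential-clock realization of the sequential law is correct. So is the per-fiber estimate \((1-q)^m(1+mq)\le(1-q^2)^m\le e^{-mq^2}\), provided you apply it only to nonempty fibers, whose excesses sum to some \(\tilde\ell\ge\ell\); an empty fiber would have \(m=-1\).

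The trouble is the bound \(\Pr[E_k]\le\Pr[G^c]+e^{-\ell q^2}\). It cannot give \(2e^{-\ell q^2}\), because \(\Pr[G^c]=\Pr[|S_t|>k]\) depends only on \(\Pi\), not on \(f\) or \(\ell\), and at \(t=k\) it is typically of constant order. For uniform \(\Pi\), \(|S_k|\sim\mathrm{Bin}(n,1-e^{-k/n})\) has mean \(k-\Theta(k^2/n)\) and standard deviation about \(\sqrt{k}\). So with \(k=n^{0.6}\) you get \(\Pr[G^c]\to 1/2\), while for \(\ell=n/2\) the target \(e^{-\ell q^2}\approx e^{-n^{0.2}/2}\) tends to \(0\).

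Lowering \(t\) to shrink the Chernoff term replaces \(q\) by the smaller \(1-e^{-t/(n(1+c))}\). That weakens the exponent, so it does not recover the stated inequality.

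Your fallback does not repair this either. Within a fiber you need a \emph{lower} bound on joint inclusion probabilities, whereas negative dependence supplies upper bounds. In fact the product bound without the factor 2 is false for sampling without replacement: take \(n=2\), \(\ell=1\), uniform \(\Pi\), \(c=0\), \(k=1\); then \(\Pr[E_1]=1>e^{-(1-e^{-1/2})^2}\). The factor \(2\) has to come from a multiplicative step, not an additive one.

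\textbf{The missing idea, which is the paper's route, is to use a sample count that is independent of the draws.}

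First, compare with i.i.d.\ sampling. The sequential law is the law of the first \(k\) distinct values of an i.i.d.\ \(\Pi\)-sequence \(Y_1,Y_2,\dots\). Any pair \(i\ne j\le k\) with \(Y_i\ne Y_j\) but \(f(Y_i)=f(Y_j)\) is a collision among those first \(k\) distinct values. Hence \(\Pr[E_k]\le p_k\), where \(p_k\) is the no-collision probability for \(k\) i.i.d.\ draws.

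Second, Poissonize. Take \(N\sim\mathrm{Pois}(k)\) independent of the \(Y\)'s. Then the indicators \(\mathbf{1}[x\in\{Y_1,\dots,Y_N\}]\) are independent with success probability \(1-e^{-k\Pi(x)}\ge q\). Your fiber computation therefore bounds the no-collision probability \(\Pr[E^{\mathrm{P}}]\) for \(Y_1,\dots,Y_N\) by \(e^{-\ell q^2}\).

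Third, de-Poissonize multiplicatively. Since \(p_m\) is nonincreasing in \(m\) and \(N\) is independent of the draws, \(\Pr[E^{\mathrm{P}}]\ge p_k\Pr[N\le k]\ge p_k/2\). This is where the factor \(2\) comes from.

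Your clock \(\xi_x\) is exactly the first time the mark \(x\) appears in a rate-\(1\) Poisson process with i.i.d.\ \(\Pi\)-marks. So your product bound at \(t=k\) is already a bound on \(\Pr[E^{\mathrm{P}}]\). What your version lacks is the independence of the count from the order: \(|S_k|\) is not independent of the clock ordering when \(\Pi\) is nonuniform, so the multiplicative comparison is not available directly there. Your derivation of the second inequality from the first is fine.
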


The proof of Lemma~\ref{lem:dup-advertise} is deferred to
\appref{sec:dup detection}. In the setting given by the auditor, we take \(A=\Ballots\)
and define
$  f(\ballot):=\id_{\ballot}$,
restricted to the event that all sampled ballots are labeled. If some sampled
ballot is unlabeled, then \(\DetectDuplicates\) immediately returns
\(\mathtt{Collision}\), so no further analysis is needed. Otherwise,
\(
  \ell = |\Ballots|-|\Labels_{\Ballots}|
\),
and the event
\(
  \forall i\neq j,\ f(X_i)\neq f(X_j)
\)
is exactly the event that no duplicate identifier is observed in the sample.
Accordingly,
\[
  \Phi(\eta_{\dup},\kappa_{\dup},\alpha_{\dup},N)
\]
may be defined as the smallest integer value \(k\) for which the right-hand side of Equation~\eqref{eq:dup-advertise} is at most \(\alpha_{\dup}\) for
\(
  n = N\) and \(\ell = \kappa_{\dup} N\). Invoking the weaker bound of~\eqref{eq:dup-advertise}, $k$ can be taken to be
\[
  2 N (1+\eta_{\dup})\sqrt{\frac{\ln(2/\alpha_{\dup})}{\ell}}=
  2 (1+\eta_{\dup})\sqrt{\frac{N \ln(2/\alpha_{\dup})}{\kappa_{\dup}}}
\]
(capped at $|\Ballots|$, since examining all ballots detects any existing collision with certainty). When generating the concrete bounds, we use the slightly stronger bound given by~\eqref{eq:dup-advertise-strong}.

\paragraph{Refinements.} We emphasize that duplicate detection can, in principle, exploit the full certificate \((p_{\mathrm{tv}},\inAccSmall,\inAcc)\) which mandates that most of the distribution has tighter ($\inAccSmall$) multiplicative guarantees. For simplicity, the development here depends only on \(\eta_{\dup}\).

\subsection{Manifest Accuracy -- \(\Psi\)}

The purpose of \(\BoundSize\) is to certify that the ballot-sampling rule induced by the tabulated batch sizes is sufficiently accurate for the final ballot-discrepancy test. (As noted above, the accuracy also affects the number of samples drawn for duplicate detection.) The certificate we want is stated in terms of the two batch probability laws that matter downstream:
\[
  r^\act_\beta := \frac{\size^\act_\beta}{\size^\act},
  \qquad
  r^\tab_\beta := \frac{\size^\tab_\beta}{\size^\tab}\,.
\]
Here \(r^\act\) is the true size-proportional batch probability law and \(r^\tab\) is the
tabulation-induced batch probability law. (The coarse manifest is used only in
\(\CheckManifest\) to certify the global \(\inAcc\)-accuracy condition.)

Define the set of \emph{bad batches}
\[
  \Bad
  :=
  \left\{
    \beta\in[\batches]:
    \size^\tab_\beta \notin
    \left[
      \frac{\size^\act_\beta}{1+\inAccSmall},
      (1+\inAccSmall)\size^\act_\beta
    \right]
  \right\},
\]
and define their total \emph{tabulated ballot mass} by
\[
  p:=\sum_{\beta\in\Bad} r^\tab_\beta
  =
  \sum_{\beta\in\Bad}\frac{\size^\tab_\beta}{\size^\tab}.
\]
Because \(\BoundSize\) samples batches directly from the tabulated batch probability law
\(r^\tab\), a sampled batch lands in the bad set \(\Bad\) with probability
exactly $p$.
Therefore, for a particular target threshold $p^\star$, the probability that
\(\BoundSize\) accepts after \(k\) independent samples when $p \geq p^\star$ is at most
\[
  (1-p^\star)^{k}
  \le
  \exp(-k p^\star).
\]
When $p \leq p^\star$, it is immediate that the true batch probability law \(r^\act\) is a \((p^\star,\inAccSmall,\inAcc)\)-manifest of
the tabulated batch probability law \(r^\tab\). The remaining question is what $p^\star$ should be selected in order to establish the distributional guarantees necessary for the auditor's later invocations of $\sample$.

It is shown in 
\ifnum\conference=0
Theorem~\ref{thm:batchwise-smooth} of \appref{app:auditor-analysis}
\else
\cite[Theorem 5]{fuller2026sublinear}
\fi
 that if the bad set has tabulated mass at most \(p\) then
\begin{equation}
  \label{eq:size-bound}
\frac{1}{1+\tau(p)}
\le
\frac{\size^{\act}}{\size^{\tab}}
\le
1+\varepsilon(p)
\end{equation}
and
\[
d_{\mathrm{TV}}(r^{\act},r^{\tab}) \le \Gamma_{\mathrm{tv}}(p)\,,
\]
where $d_{\mathrm{TV}}(\cdot,\cdot)$ denotes the distance in 
\ifnum\conference=0
total variation (see Def.~\ref{def:tv} of \appref{app:reweighting}).
\else
total variation.
\fi
Consequently, the retained margin in the final
ballot-discrepancy lower bound is controlled by the functions
\(\varepsilon(p)\), \(\tau(p)\), and \(\Gamma_{\mathrm{tv}}(p)\) defined above in equations~\eqref{eq:eps}, \eqref{eq:tau}, and~\eqref{eq:Gamma}.

For a fixed tabulated diluted margin
\(\mu\)
define the \emph{retained margin under manifest error \(p\)} by
\[
  L_\mu(p)
:=
\min\!\left\{
  \frac{\mu}{1+\varepsilon(p)},
  \,
  \mu(1+\tau(p))-\tau(p)
\right\}
-
4\Gamma_{\mathrm{tv}}(p).
\]
As each of \(\varepsilon(p)\), \(\tau(p)\), and \(\Gamma_{\mathrm{tv}}(p)\) is
nondecreasing in \(p\), the function \(L_\mu(p)\) is nonincreasing.

Thus, the manifest-certification routine should ensure that 
the total tabulated mass of the bad set does not exceed the largest
amount consistent with the retained margin remaining above the budget 
allocated for the manifest step. Define
\[
  p^\star_{\mathrm{tv}}
  :=
  \sup\left\{
    p\in[0,1]:
    L_\mu(p)\ge (1-\rho_{\mathrm{tv}})\mu
  \right\}.
\]
If this set is empty, define \(p^\star_{\mathrm{tv}}=0\).
If \(p^\star_{\mathrm{tv}}=0\), then the manifest step cannot certify enough
accuracy at the chosen budget split and the auditor should return
\(\inconclusive\).

It follows that if
$k_{\mathrm{tv}}
  :=
  \left\lceil
\ln(1/\alpha_{\mathrm{tv}})
         /p^\star_{\mathrm{tv}}
  \right\rceil,$
then, except with probability at most \(\alpha_{\mathrm{tv}}\), acceptance of
\(\BoundSize\) implies $p\le p^\star_{\mathrm{tv}}$. Accordingly, we define
\[  \Psi(\mu,\rho_{\mathrm{tv}},\alpha_{\mathrm{tv}},\inAcc,\inAccSmall)
  :=
  \left(
    p^\star_{\mathrm{tv}},
    \left\lceil
      \frac{\ln(1/\alpha_{\mathrm{tv}})}
           {p^\star_{\mathrm{tv}}}
    \right\rceil
  \right).
\]
When \(p^\star_{\mathrm{tv}}=0\), we define \(\Psi=(0,0)\).

In summary, except with probability at most
\(\alpha_{\mathrm{tv}}\), acceptance of \(\BoundSize\) implies that the bad set
has tabulated mass at most
$p_{\mathrm{tv}}=p^\star_{\mathrm{tv}}$ and that the true batch law \(r^\act\) is a
\((p^\star_{\mathrm{tv}},\inAccSmall,\inAcc)\)-manifest of the
tabulated batch law \(r^\tab\), as desired.

\myparab{Completeness}
The above test returns $\inconclusive$ if it encounters a single batch with size distortion beyond $\inAccSmall$. We focused on the strict test as the number of ballots examined for the manifest is $k_{\mathrm{tv}}$ multiplied by the average size of the sampled batch.  If a small batch of size $100$ is sampled, our setting of $\inAccSmall$ demands that the actual size is the same as the tabulated size. Other statistical methods (e.g., Wald's SPRT) can soften the strict demand that the tabulated manifest is $\inAccSmall$-accurate on each batch. 

\Appref{app:asymmetric-manifests} discusses how to analyze polling and comparison audits using the concept of a statistical manifest.
\section{Efficiency Improvements Over Existing Methods}
\label{sec:efficiency}
Our efficiency numbers focus on evaluating two primary questions: \begin{enumerate} 
\item Does direct ballot selection avoid the high cost of ballot polling
(Minerva/Providence) for small margins?
\item Does using a statistically accurate manifest meaningfully reduce the time required for direct ballot selection, ballot comparison, and ballot polling audits?\footnote{We do not take into account batch comparison audits in our analysis, because they require too many ballots. For example, for the 2024 General Election, Georgia conducted a (MACRO) batch comparison audit of the presidential race,
which had a 2.2\% margin, and audited a total of 753,651 ballots~\cite{ga2024audit}. 
At our presumed 25 seconds per ballot, this would take a total of 5,234 hours before we factor in 
manifest creation. We regard batch comparison as most effective for small ballot populations where a full hand recount is likely, because the audit has already entailed most of the work required for a recount.}\end{enumerate} 

We build on the election simulation of Fuller, Harrison, and Russell~\cite{harrison2022adaptive}, which supports ballot polling and comparison. The tool takes the specified margins ($\mu$), risk limits ($\alpha$), discrepancy rates ($o1, u1, o2, u2$), and ballot populations ($\size$) as inputs. For all methods, we assume the goal is a $90\%$ probability of completion in a single round. Across 1,000 simulations, we record the $90$th-percentile sample size for the Kaplan--Markov super-simple ballot comparison audit. To calculate Minerva sample sizes, we use the normal approximation implemented in Arlo~\cite{arlo}:
\begin{align*}
\left\lceil \left( 
\frac{
z_a \cdot \sqrt{p(1 - p)} - \tfrac{1}{2}\,z_b
}{
p - \tfrac{1}{2}
}
\right)^2 \right\rceil,
\quad 
\end{align*}
where $p=(1+\mu)/2$, $q$ is the desired probability of stopping in the next round, and $\alpha$ is the risk limit. Here $z_a=\Phi_{\mathrm N}^{-1}(1-q)$ and $z_b=\Phi_{\mathrm N}^{-1}(1-\alpha q)$, where $\Phi_{\mathrm N}$ is the standard normal CDF. Equivalently, Arlo computes these as the standard normal inverse survival function at $q$ and $\alpha q$, respectively. In our experiments, $q=0.90$ and $\alpha=0.05$.

The simulated one-vote understatement and overstatement rates are $0.1\%$, and the corresponding two-vote rates are $0.01\%$; these values are standard in the literature.
Our ballot populations are $354$K, $1$M, and the number of ballots cast in the 2024 General Election for Connecticut (1.7M), Georgia (5.0M), Florida (11.0M), and California (16.1M). 

Generating results for direct ballot selection requires splitting $\mu$ and $\alpha$ between the duplicate detection and Kaplan-Markov test. Additionally, when using a statistically accurate manifest, we split $\mu$ and $\alpha$ among three components: manifest-size certification ($\rho_{\mathrm{tv}}, \alpha_{\mathrm{tv}}$), duplicate detection ($\rho_{\dup}, \alpha_{\dup}$), and the final adaptive audit test ($\mu_{\sample}, \alpha_{\sample}$).
We compute Kaplan--Markov sample sizes only for a discrete set of $(\alpha_\sample, \mu_\sample)$ pairs, so in some cases we overestimate the number of ballots required.

For each $\size$, $\mu$, and $\alpha$ combination, we consider three possible optimizations when dividing risk and margin:  
\begin{enumerate}
\itemsep0em
\item The maximum of $\{k_\dup, k_\sample\}$,
\item The time to conduct a one-race audit of a plurality race, and
\item The time to conduct a ten-race audit of plurality races.
\end{enumerate}
We report on the best solution according to the above metrics with different allocations of $\mu$ and $\alpha$. We assign different constant fractions to $\rho_{\mathrm{tv}},\alpha_\mathrm{tv}/\alpha, \rho_\dup, \alpha_\dup/\alpha$ and keep the best solution in terms of the above objective. 
Let $k_\size:= k_\mathrm{tv} B_\sample$. 
We set limits of $k_\mathrm{\size} \le 3\size/4$ and $k_\dup \le \min\{10^6,$ $\size\}$. 
Times adopted are as follows: 
\begin{enumerate}
\itemsep0em
\item $35$ seconds to pull a ballot (median time using the scale method~\cite{RI-Pilot}), 
\item $10$ seconds to read and record the identifier (for duplicate checking), and
\item $25$ seconds to interpret one race.
\end{enumerate}
We make a few assumptions when it comes to the practical time to create an accurate ballot manifest.
We imagine a setting where auditors need to set up stacks of ballots for efficient manifest creation. Ballots are often transported in ballot bags; while many of these ballots remain stacked
neatly and can be pulled out in one large stack, some get scattered throughout the bag and will need to be picked up individually.
We assume 90\% of the ballots remain stacked and 10\% of the ballots become scattered within the bag. We assume an auditor can remove a stack of 50 ballots within 60 seconds, and it takes 6 seconds to retrieve a scattered ballot and 
add it to the stack. Saputra et al.~\cite{sapturahandcount} report 1.43 seconds to manually count a sheet of A4 paper. Combining these assumptions, setup and
counting take $0.9(60/50) + 0.1(6) + 1.43 = 3.11$ seconds per ballot, or 1,158 ballots/hr for setup and accurate manifest creation time.
This is within the range of Connecticut's pilot time of 895-1,469 ballots/hr~\cite{ctworkinggroup} and significantly slower than Rhode Island's pilot
time of 4,800 ballots/hour~\cite{RI-Pilot}, which was not software independent. We assume that the average size $B_\sample$ of a selected batch is $900$, the average precinct size in the 2020 presidential election.\footnote{There were 161M cast ballots and 176K precincts according to the \href{https://www.eac.gov/sites/default/files/document_library/files/2020_EAVS_Report_Final_508c.pdf}{EAC 2020 report}.}

\subsection{Results}
\myparab{Direct Ballot Selection and Polling}
We present three items that compare direct ballot selection to ballot polling with an accurate manifest. 
Figure~\ref{fig:main results} shows the number of sampled ballots for Minerva and direct ballot selection for different population sizes.
\ifnum\conference=1
The full version~\cite{fuller2026sublinear} reports the ballots sampled per batch across different methods and the total sample sizes.
\else
In the Appendix, Table~\ref{tab:main results} shows the ballots sampled per batch across different methods and
Table~\ref{tab:main results raw ballots} contains the total sample sizes.
\fi
All three results optimize the tradeoff between $k_\dup$ and $k_\sample$ by minimizing the total number of sampled ballots.

\begin{figure}[t]
\centering
\ifnum\conference=1
\includegraphics[width=.95\columnwidth]{figures/direct-sample-sizes}
\else
\begin{tikzpicture}
\begin{axis}[
    width=.95\columnwidth,
    height=.70\columnwidth,
    xlabel={$\mu$ (\%)},
    ylabel={Sample Size $k$},
    legend pos=north east,
    grid=both,
    mark size=2.5pt, 
    ymax=200,
    ymin=0 
]
\addplot[color=blue, mark=*] coordinates {
(0.50,180) (0.60,159) (0.70,147) (0.75,140) (0.80,134) (0.90,126) (1.00,120) (1.10,114) (1.20,107) (1.30,102) (1.40,98) (1.50,95) (1.60,92) (1.70,89) (1.80,86) (1.90,84) (2.00,82)
};
\addlegendentry{CA}

\addplot[color=red, mark=square*] coordinates {
(0.50,149) (0.60,132) (0.70,122) (0.75,117) (0.80,112) (0.90,104) (1.00,99) (1.10,94) (1.20,88) (1.30,84) (1.40,81) (1.50,78) (1.60,76) (1.70,74) (1.80,72) (1.90,70) (2.00,68)
};
\addlegendentry{FL}

\addplot[color=green!60!black, mark=triangle*] coordinates {
(0.50,101) (0.60,90) (0.70,82) (0.75,80) (0.80,77) (0.90,71) (1.00,68) (1.10,64) (1.20,62) (1.30,58) (1.40,56) (1.50,54) (1.60,52) (1.70,50) (1.80,48) (1.90,48) (2.00,46)
};
\addlegendentry{GA}

\addplot[color=black, mark=diamond*] coordinates {
(0.50,61) (0.60,54) (0.70,49) (0.75,47) (0.80,46) (0.90,42) (1.00,40) (1.10,38) (1.20,36) (1.30,34) (1.40,33) (1.50,32) (1.60,30) (1.70,30) (1.80,28) (1.90,28) (2.00,27)
};
\addlegendentry{CT}

\addplot[color=orange, mark=square*] coordinates {
(0.50,48) (0.60,42) (0.70,39) (0.75,37) (0.80,36) (0.90,33) (1.00,31) (1.10,30) (1.20,28) (1.30,26) (1.40,26) (1.50,24) (1.60,24) (1.70,23) (1.80,22) (1.90,22) (2.00,21)
};
\addlegendentry{1M}

\addplot[color=purple, mark=*] coordinates {
(0.50,30) (0.60,26) (0.70,24) (0.75,24) (0.80,22) (0.90,20) (1.00,20) (1.10,18) (1.20,17) (1.30,16) (1.40,16) (1.50,15) (1.60,14) (1.70,14) (1.80,14) (1.90,14) (2.00,13)
};
\addlegendentry{Cong.}

\addplot[dashed, ultra thick, color=gray!70!black, mark=none, smooth] coordinates {
(0.50,354) (0.60,246) (0.70,181) (0.75,158) (0.80,138) (0.90,109) (1.00,89) (1.10,73) (1.20,62) (1.30,52) (1.40,45) (1.50,39) (1.60,35) (1.70,31) (1.80,27) (1.90,25) (2.00,22)
};
\addlegendentry{Minerva}

\end{axis}
\end{tikzpicture}
\fi
\caption{Direct ballot selection sample sizes, $\max\{k_{\dup},k_{\sample}\}$, compared with Minerva sample sizes.}
\label{fig:main results}
\end{figure}

We stress that polling is particularly problematic for small elections and small margins.
According to the results in the
\ifnum\conference=1
full version~\cite{fuller2026sublinear},
\else
Appendix, Table~\ref{tab:main results},
\fi
at a 0.5\% margin, the sample size of a Minerva polling audit is the same as the size of a Congressional election. Direct ballot 
selection reduces the number of ballots per batch to 76; roughly $30$K or 8\% of the ballots. At larger sizes, the 
improvements are not as drastic but still significant. At a 0.5\% margin, direct ballot selection 
reduces the ballots sampled by a factor of $2.4$ in Florida and a factor of 2 in California. 
At a 1\% margin in Connecticut (ranked 29th by population), direct ballot selection reduces the number of ballots pulled by $55\%$. For ballot populations of practical interest, direct ballot selection substantially improves upon polling's performance at small margins. Of course, audit software can still choose to revert to polling audits for large margins 
without the need for additional setup or equipment.

\myparab{Making manifests accurate without a full hand count.}
We now discuss how one can build an accurate enough manifest for direct ballot selection and ballot comparison without a full hand count of the ballots.
Creating an accurate manifest dominates the time of every audit method, especially for larger states.  
For California, the manifest requires nearly 14,000 hours to prepare. Even for smaller states such as Connecticut, preparing the manifest requires roughly 1,450 hours of work. \textbf{Meanwhile, the time difference for ballot sampling and analysis between a $3\%$ and $0.5\%$ margin is only $38$ hours.}

For ballot comparison, a sufficiently accurate statistical manifest can be certified by examining relatively few ballots. Figure~\ref{fig:comp inacc savings} shows the time savings from using a statistical manifest rather than a fully accurate manifest.
\ifnum\conference=1
The full version~\cite{fuller2026sublinear} reports the corresponding numbers of ballots required to construct the manifests.
\else
Table~\ref{tab:comp inacc savings} in the Appendix reports the corresponding numbers of ballots required to construct the manifests.
\fi
At a 3\% margin, California can certify a statistical manifest by counting $k_\size \approx 21$K ballots rather than counting all $16$M ballots.
As a result, the number of ballots to audit increases from $263$ to $836$, making the overall audit take  $32$ hours. For Connecticut at a $1\%$ margin, $k_\size \approx 73$K ballots (instead of $1.7$M), which results in a sample that increases from $1020$ to $3496$ and an audit that takes approximately $121$ hours.
\ifnum\conference=1
Polling, also discussed in the full version~\cite{fuller2026sublinear}, behaves similarly.
\else
Polling behaves similarly but with less graceful degradation with margin. 
\fi

\begin{figure}[th!]
\centering
\begin{subfigure}{0.9\columnwidth}
\centering
\ifnum\conference=1
\includegraphics[width=\columnwidth]{figures/manifest-savings-comparison}
\else
\begin{tikzpicture}
\begin{axis}[
    width=\columnwidth,
    height=0.75\columnwidth,
    xlabel={$\mu$ (\%)},
    ylabel={Ratio},
    ymode=log,
    xtick={2, 4, 6, 8, 10},
    legend style={at={(0.98,0.02)}, anchor=south east},
    grid=both,
]

\pgfplotstableread[col sep=comma]{time_results_by_sizecomparison.csv}\datatable

\addplot+[mark=*, thick] table[x=Margin, y=CA] {\datatable};
\addlegendentry{CA}

\addplot+[mark=square*, thick] table[x=Margin, y=FL] {\datatable};
\addlegendentry{FL}

\addplot+[mark=triangle*, thick] table[x=Margin, y=GA] {\datatable};
\addlegendentry{GA}

\addplot+[mark=diamond*, thick] table[x=Margin, y=CT] {\datatable};
\addlegendentry{CT}
\end{axis}
\end{tikzpicture}
\fi
\caption{Ballot comparison (log scale). 
\ifnum\conference=0 Details in Table~\ref{tab:comp inacc savings}.\fi}
\label{fig:comp inacc savings}
\end{subfigure}

\vspace{1em}

\begin{subfigure}{0.9\columnwidth}
\centering
\ifnum\conference=1
\includegraphics[width=\linewidth]{figures/manifest-savings-direct}
\else
\begin{tikzpicture}
\begin{axis}[
    width=\linewidth,
    height=0.75\textwidth,
    xlabel={$\mu$ (\%)},
    ylabel={Ratio},
    xtick={2, 4, 6, 8, 10},
    legend style={at={(0.02,0.98)}, anchor=north west},
    grid=both,
]

\pgfplotstableread[col sep=comma]{time_results_by_sizedirect.csv}\datatable

\addplot+[mark=*, thick] table[x=Margin, y=CA] {\datatable};
\addlegendentry{CA}

\addplot+[mark=square*, thick] table[x=Margin, y=FL] {\datatable};
\addlegendentry{FL}

\addplot+[mark=triangle*, thick] table[x=Margin, y=GA] {\datatable};
\addlegendentry{GA}

\addplot+[mark=diamond*, thick] table[x=Margin, y=CT] {\datatable};
\addlegendentry{CT}

\end{axis}
\end{tikzpicture}
\fi
\caption{Direct ballot selection. \ifnum\conference=0 Details in Table~\ref{tab:inacc direct savings}.\fi}
\label{fig:inacc direct savings}
\end{subfigure}

\vspace{1em}
\begin{subfigure}{0.9\columnwidth}
\centering
\ifnum\conference=1
\includegraphics[width=\linewidth]{figures/manifest-savings-polling}
\else
\begin{tikzpicture}
\begin{axis}[
    width=\linewidth,
    height=0.75\textwidth,
    xlabel={$\mu$ (\%)},
    ylabel={Ratio},
    ymode=log,
    xtick={2, 4, 6, 8, 10},
    legend style={at={(0.98,0.02)}, anchor=south east},
    grid=both,
]

\pgfplotstableread[col sep=comma]{time_results_by_sizepolling.csv}\datatable

\addplot+[mark=*, thick] table[x=Margin, y=CA] {\datatable};
\addlegendentry{CA}

\addplot+[mark=square*, thick] table[x=Margin, y=FL] {\datatable};
\addlegendentry{FL}

\addplot+[mark=triangle*, thick] table[x=Margin, y=GA] {\datatable};
\addlegendentry{GA}

\addplot+[mark=diamond*, thick] table[x=Margin, y=CT] {\datatable};
\addlegendentry{CT}

\end{axis}
\end{tikzpicture}
\fi
\caption{Minerva polling (log scale). \ifnum\conference=0 Details in Table~\ref{tab:polling inacc savings}.\fi}
\label{fig:inacc polling savings}
\end{subfigure}

\caption{Ratio of time to conduct audit with full manifest versus statistical manifest across audit methods.}
\label{fig:combined_inacc_savings}
\end{figure}

For direct ballot selection, the improvements from using a statistical manifest rather than a fully accurate manifest are more modest. 
\ifnum\conference=0
These can be seen in Figure~\ref{fig:inacc direct savings} and Table~\ref{tab:inacc direct savings}.
\else
This is shown in  Figure~\ref{fig:inacc direct savings}.
\fi
For large states, the savings from a statistical manifest grow with the margin: audit time falls by a factor of four or more once the margin reaches $3\%$, reaching a factor of $11$ in Florida and $14$ in California at a $5\%$ margin. This allows every component of the audit to scale with the margin instead of allowing manifest creation to dominate the total audit time. 
For example, at a 2\% margin, Georgia can reduce its audit time from 5,000 hours to 1,300 hours by hand-counting $k_\size = 355$K ballots to construct a statistical manifest rather than counting all $5$M ballots.
This causes $k_\dup$ to increase from roughly $50$K to $75$K and $k_\sample$ to increase from roughly $5$K to $11$K as part of the direct ballot selection audit.

For direct ballot selection, the method reverts to a full manifest for margins below $0.5\%$ at almost all population sizes. For ballot polling with populations below $1$M, a full manifest is used for margins at most $1\%$.
Ballot comparison, which is already efficient in the number of ballots examined, still achieves a 41-fold reduction in time for California at a $0.5\%$ margin. Statistically accurate manifests accomplish the goal that all aspects of the audit scale with $\mu$.

\paragraph{Comparison to Lindeman et al.~\cite{lindeman2012bravo}}
We also evaluate Lindeman et al.'s technique for supporting statistically accurate manifests. The technique treats each CVR as having $\size(1+\inAcc)$ entries and records a discrepancy of $2$ whenever one of the $\size\inAcc$-added rows is selected. Across our experiments, this technique converges when $\inAcc <\mu/5$. One can use our $\BoundSize$ technique to produce tighter estimates for the statistical size bounds. Lindeman et al.'s technique does not naturally extend to polling or direct ballot selection. For comparison audits, we obtain better results by removing the added discrepancy-$1$ rows and adjusting the margin accordingly; see \appref{app:asymmetric-manifests}.

\subsection{Manifest-Accuracy Models}
\label{subsec:manifest-comparison}
In the main text we focus on a symmetric multiplicative model where the guarantee is that $\size^\cvr$ is within a multiplicative $(1+\Delta)$ factor of $\size^\act$. We present two other models as well in
\ifnum\conference=0 Appendix~\ref{app:auditor-analysis}. Its general directional certificate treats the two size distortions separately, and Appendix~\ref{app:upper-bound-manifests} gives the upper-bound-only specialization, so the model assumes that
\else
the full version of this work~\cite{fuller2026sublinear}. The directional model generalizes the model in the main text by assuming that
\fi \[\frac{1}{1+\Delta_{\mathrm{lower}}}\le \frac{\size^\act_i}{\size^\cvr_i}\le 1+\Delta_{\mathrm{upper}}\] for each batch $i$. In addition, we also show a total variation bound when $\Delta_{\mathrm{upper}}=0$ and no bound is known for $\Delta_{\mathrm{lower}}$. Setting both parameters to $0$ yields an accurate manifest. The upper-bound-only method requires sampling the most ballots across our results for direct selection and polling. However, for comparison audits, it is enough: the comparison analysis depends only on an upper bound on the total number of ballots, so once that upper bound is established, no additional manifest-certification work is needed. In particular, comparison does not require the batchwise lower-threshold certification used by the polling and direct ballot selection analyses.

The regime requiring the fewest sampled ballots sets $\Delta_{\mathrm{upper}}=0$ and $\Delta_{\mathrm{lower}}=\Delta$:
\[
\frac{S^{\cvr}_\beta}{1+\Delta} \le S^{\act}_\beta \le S^{\cvr}_\beta.
\]
Its quantitative advantages over the symmetric multiplicative distortion model are modest when $\Delta$ is small, but for larger values the two bounds can reduce the manifest workload noticeably in the polling experiments.

\ifnum\conference=1
\begin{acks}
The work of all authors was supported through grants by the Connecticut
Secretary of State's Office and the Department of Homeland Security. In
addition, the work of B.F. was supported by NSF Grants \#2141033 and
\#2232813.
\end{acks}
\else
\section*{Acknowledgments}
The work of all authors was supported through grants by the Connecticut Secretary of State's Office and the Department of Homeland Security. In addition, the work of B.F. was supported by NSF Grants \#2141033 and \#2232813.
\fi

\ifnum\conference=1
\bibliographystyle{ACM-Reference-Format}
\else
\bibliographystyle{unsrt}
\fi
\bibliography{RLA}

\appendix
\section{Open Science}
Code to reproduce our efficiency experiments is available online at: 
\url{https://github.com/VoterCenter/direct-rla}.
This code allows for the complete reproduction of the results in this paper.
\section{Ethical Considerations}
This work proposes new risk-limiting audits for elections that are usable in various municipalities throughout the world. We use data from publicly available sources, including websites maintained by secretaries of state throughout the United States. In addition, we remark on the necessity of accurate ballot manifests in currently used risk-limiting audits. Vendors (and previous literature) already stress the need for an accurate manifest, so we do not believe this exposes a vulnerability in current systems.
\section{Counterexamples with Inaccurate Manifests}
\label{sec:inaccurate no work}
We provide simple examples showing that polling, ballot comparison, and direct ballot selection audits are not risk-limiting when each batch size is distorted by an amount $\inAcc$ comparable to the margin $\mu$.  This discussion applies to methods that use reported batch sizes to sample ballots uniformly.  This description covers most methods in each class but does not apply to taint-tracking methods~\cite{higgins2011sharper}.  Throughout the discussion below, batch sizes are manipulated by less than the $\inAcc = 0.1$ used in our experimental results.

\paragraph{Ballot polling.} Consider an election with $\winner^\act$ votes for the reported winner $\winner$ and $\loser^\act$ votes for the reported loser $\loser$.  Furthermore, the tabulated margin is $\mu^\tab$.  In simplified terms, a polling audit uses uniformly sampled ballots to estimate the signed vote margin
\[
\frac{\winner^\act - \loser^\act}{\size}
\]
and tests whether the samples support the reported winner.  Recall that since   $\winner^\act = \sum_{\mathbf{b} \in \mathbf{B}} \winner_{\mathbf{b}}$ and $\loser^\act = \sum_{\mathbf{b} \in \mathbf{B}} \loser_{\mathbf{b}}$ one can estimate these quantities by repeatedly sampling ballots $\mathbf{b} \in \mathbf{B}$ uniformly and computing $ \winner_{\mathbf{b}}-  \loser_{\mathbf{b}}$.  With an accurate manifest each $\mathbf{b}$ is selected with probability $1/\size$ as $\size = |\mathbf{B}|$.  Define $\mu^\obs_{\mathbf{b}}:=\winner_{\mathbf{b}}-\loser_{\mathbf{b}}$. The goal of the statistical test is to lower-bound the expectation $
\Exp_{\mathbf{b}\leftarrow \mathbf{B}}\mu^\obs_{\mathbf{b}}\,.$ 
This expectation is highly sensitive to the sampling of $\mathbf{b}\leftarrow \mathbf{B}$ which can be adjusted by manipulating the size of batches.  We use $U$ to denote the proper uniform sampling of ballots.  Consider the following simplified example where there are two batches of size $49$ and $51$ denoted as $\winner$ and $\loser$ respectively, where all ballots in $\winner$ have a vote for candidate $\winner$ and all ballots in $\loser$ have a vote for $\loser$. The correct $\Exp_{\mathbf{b}\leftarrow \mathbf{B}_U}\mu^\obs_{\mathbf{b}}=
-.02$  for this election.  One can adjust the diluted margin from $.02$ for the $\loser$ to $.02$ for the winner  by adjusting the sizes to $51$ and $49$ respectively. This creates a new distribution of sampled ballots $D_\Adversary$ where $\Exp_{\mathbf{b}\leftarrow \mathbf{B}_{D_\Adversary}}\mu^\obs_{\mathbf{b}}=
.02$ .  

The above example is very simplified; the quality of the manipulation is proportional to the number of batches that have strong margins for one candidate or another. However, the polarization displayed does mirror current electoral trends in the United States. Often, large cities have overwhelming Democratic preferences while rural areas have overwhelming Republican preferences.  Taking the 2024 Presidential Election in Connecticut as an example, 26\% of batches had a margin that was 30\% different than the overall statewide margin. 

\paragraph{Ballot comparison.} A traditional ballot comparison audit tests the null hypothesis that the reported outcome is incorrect, equivalently,
\[
\frac{\disc}{\size^\tab} \ge \mu^\tab.
\]
Operationally, Fuller, Harrison, and Russell~\cite{harrison2022adaptive} define sampled discrepancy with respect to CVR rows rather than physical ballots (again assuming that identifiers are unique within the CVR).
If the manifest underestimates $|\mathbf{B}|$, the CVR can omit rows corresponding to physical ballots, thereby biasing the audit sample.  We continue with the same example of two batches from the ballot polling above.  By reporting the $\loser$ batch as having $47$ ballots instead of $51$, the tabulated diluted margin becomes $2.1\%$ for $\winner$. The adversary can then construct a $\cvr$ containing accurate rows for all $49$ $\winner$ ballots but only $47$ of the $51$ $\loser$ ballots. Because the omitted ballots have no corresponding CVR rows, the audit can never observe their discrepancies. Unlike in ballot polling, an adversary cannot beneficially inflate batch sizes in a ballot comparison audit, because CVR identifiers with no corresponding physical ballots increase discrepancy~\cite{harrison2022adaptive}. This example does not depend on how ballots are distributed among batches: the adversary can omit CVR entries for ballots favoring the actual winner in any batch.

\paragraph{Direct ballot selection.} Direct ballot selection is similarly vulnerable. Instead of reducing the size of the $\cvr$, the adversary increases it. Continuing the same example, the adversary reports that the $\winner$ batch contains $53$ ballots instead of $49$ by adding four $\cvr$ rows that record votes for $\winner$ and have identifiers that do not correspond to any physical ballot. Sampling batches in proportion to their reported sizes then selects the $\winner$ batch with probability $53/104$ and the $\loser$ batch with probability $51/104$, producing an expected sampled margin of $(53-51)/104 \approx 0.0192$ for $\winner$. Because the four phantom rows can never be sampled through physical-ballot selection, the audit does not observe their discrepancies. This illustrates the dual risks: ballot-based sampling can miss phantom CVR rows, whereas CVR-row sampling can miss physical ballots omitted from the CVR.

\section{Additional Concrete Results}
\ifnum\conference=0
Table~\ref{tab:main results} and 
Table~\ref{tab:main results raw ballots} show the main results of our election simulation: the 
average number of ballots sampled per batch and the total number of ballots needed to conduct a
direct ballot selection audit, respectively. These results are also discussed in detail in Section~\ref{sec:efficiency}. 
\else
Table~\ref{tab:main results} shows the 
average number of ballots sampled per batch to conduct a
direct ballot selection audit. These results are also discussed in detail in Section~\ref{sec:efficiency}. 
\fi

\ifnum\conference=0
\begin{table}[t!]
\else
\begin{table}[th!]
\fi
  \centering
  \small
  \ifnum\conference=0
\begin{tabular}{V{3}l V{3}r|r|r|r|r|rV{3}} \hlineB{3}
\else
\begin{tabular}{|l|r|r|r|r|r|r|} \hline\fi
& & \multicolumn{4}{c|}{Direct Ballot Selection } &\\
& $\mu$ & \multicolumn{2}{c|}{$\dup$ } & \multicolumn{2}{c|}{$\sample$ } & Minerva \\
State& (\%) & $k$ & $\alpha$ & $k$ & $\alpha$ & ~\cite{zagorski2021minerva} \\
 \ifnum\conference=0
\hlineB{3}
\else
\hline
\fi
\multirow{7}{*}{\makecell[l]{Cong.}}
 & 0.50 & 76 & 90 & 71 & 10 & 900 \\ \cline{2-7}
 & 0.75 & 60 & 70 & 58 & 30 & 401 \\ \cline{2-7}
 & 1.00 & 50 & 70 & 42 & 30 & 225 \\ \cline{2-7}
 & 1.50 & 38 & 83 & 37 & 17 & 100 \\ \cline{2-7}
 & 2.00 & 33 & 80 & 23 & 20 & 56 \\ \cline{2-7}
 & 2.50 & 29 & 78 & 28 & 22 & 36 \\ \cline{2-7}
 & 3.00 & 27 & 70 & 22 & 30 & 25 \\ \cline{2-7}
 \ifnum\conference=0
\hlineB{3}
\else
\hline
\fi

\multirow{7}{*}{\makecell[l]{1M}}
 & 0.50 & 43 & 88 & 36 & 12 & 319 \\ \cline{2-7}
 & 0.75 & 33 & 78 & 33 & 22 & 142 \\ \cline{2-7}
 & 1.00 & 28 & 98 & 23 & 2 & 80 \\ \cline{2-7}
 & 1.50 & 22 & 98 & 15 & 2 & 35 \\ \cline{2-7}
 & 2.00 & 19 & 83 & 16 & 17 & 20 \\ \cline{2-7}
 & 2.50 & 17 & 93 & 14 & 7 & 13 \\ \cline{2-7}
 & 3.00 & 15 & 88 & 8 & 12 & 9 \\ \cline{2-7}
 \ifnum\conference=0
\hlineB{3}
\else
\hline
\fi

\multirow{7}{*}{\makecell[l]{CT}}
 & 0.50 & 33 & 75 & 32 & 25 & 190 \\ \cline{2-7}
 & 0.75 & 25 & 88 & 21 & 12 & 84 \\ \cline{2-7}
 & 1.00 & 21 & 85 & 21 & 15 & 47 \\ \cline{2-7}
 & 1.50 & 17 & 90 & 8 & 10 & 21 \\ \cline{2-7}
 & 2.00 & 14 & 88 & 10 & 12 & 12 \\ \cline{2-7}
 & 2.50 & 13 & 90 & 6 & 10 & 8 \\ \cline{2-7}
 & 3.00 & 12 & 88 & 5 & 12 & 5 \\ \cline{2-7}
 \ifnum\conference=0
\hlineB{3}
\else
\hline
\fi

\multirow{7}{*}{\makecell[l]{GA}}
 & 0.50 & 18 & 98 & 18 & 2 & 64 \\ \cline{2-7}
 & 0.75 & 14 & 98 & 14 & 2 & 28 \\ \cline{2-7}
 & 1.00 & 12 & 90 & 7 & 10 & 16 \\ \cline{2-7}
 & 1.50 & 10 & 90 & 7 & 10 & 7 \\ \cline{2-7}
 & 2.00 & 8 & 93 & 4 & 7 & 4 \\ \cline{2-7}
 & 2.50 & 7 & 95 & 3 & 5 & 3 \\ \cline{2-7}
 & 3.00 & 7 & 78 & 7 & 22 & 2 \\ \cline{2-7}
 \ifnum\conference=0
\hlineB{3}
\else
\hline
\fi

\multirow{7}{*}{\makecell[l]{FL}}
 & 0.50 & 12 & 98 & 8 & 2 & 29 \\ \cline{2-7}
 & 0.75 & 10 & 83 & 10 & 17 & 13 \\ \cline{2-7}
 & 1.00 & 8 & 98 & 6 & 2 & 7 \\ \cline{2-7}
 & 1.50 & 6 & 98 & 6 & 2 & 3 \\ \cline{2-7}
 & 2.00 & 6 & 98 & 2 & 2 & 2 \\ \cline{2-7}
 & 2.50 & 5 & 98 & 1 & 2 & 1 \\ \cline{2-7}
 & 3.00 & 4 & 88 & 3 & 12 & 1 \\ \cline{2-7}
 \ifnum\conference=0
\hlineB{3}
\else
\hline
\fi

\multirow{7}{*}{\makecell[l]{CA}}
 & 0.50 & 10 & 98 & 6 & 2 & 20 \\ \cline{2-7}
 & 0.75 & 8 & 90 & 7 & 10 & 9 \\ \cline{2-7}
 & 1.00 & 7 & 98 & 4 & 2 & 5 \\ \cline{2-7}
 & 1.50 & 5 & 98 & 4 & 2 & 2 \\ \cline{2-7}
 & 2.00 & 5 & 98 & 1 & 2 & 1 \\ \cline{2-7}
 & 2.50 & 4 & 83 & 4 & 17 & 1 \\ \cline{2-7}
 & 3.00 & 4 & 93 & 3 & 7 & 1 \\ \cline{2-7}
 \ifnum\conference=0
\hlineB{3}
\else
\hline
\fi
\end{tabular}
  \caption{Simulated sample sizes per batch for direct ballot selection and
  Minerva~\cite{zagorski2021minerva}. 
 $\alpha_\dup, \alpha_\sample$ are normalized percentages of overall risk $\alpha$. Direct ballot selection optimization minimizes the max of $k_\dup, k_\sample$. 
 \ifnum\conference=0 Total numbers of ballots drawn are given in Table~\ref{tab:main results raw ballots}.\fi
  }
  \label{tab:main results}
\end{table}

\ifnum\conference=0

\begin{table}[th!]
  \centering
  \small
\begin{tabular}{V{3}l V{3}r|r|r|r|r|rV{3}} \hlineB{3}
& & \multicolumn{4}{c|}{Direct Ballot Selection } & \\
& $\mu$ & \multicolumn{2}{c|}{$\dup$ } & \multicolumn{2}{c|}{$\sample$ } & Minerva \\
State& (\%) & $k$ & $\alpha$ & $k$ & $\alpha$ & ~\cite{zagorski2021minerva} \\\hlineB{3}
\multirow{7}{*}{\makecell[l]{Cong.}}
 & 0.50 & 30 & 90 & 28 & 10 & 354 \\ \cline{2-7}
 & 0.75 & 24 & 70 & 23 & 30 & 158 \\ \cline{2-7}
 & 1.00 & 20 & 70 & 17 & 30 & 89 \\ \cline{2-7}
 & 1.50 & 15 & 83 & 15 & 17 & 39 \\ \cline{2-7}
 & 2.00 & 13 & 80 & 9 & 20 & 22 \\ \cline{2-7}
 & 2.50 & 12 & 78 & 11 & 22 & 14 \\ \cline{2-7}
 & 3.00 & 10 & 70 & 9 & 30 & 10 \\ \cline{2-7}
\hlineB{3}

\multirow{7}{*}{\makecell[l]{1M}}
 & 0.50 & 48 & 88 & 40 & 12 & 354 \\ \cline{2-7}
 & 0.75 & 37 & 78 & 37 & 22 & 158 \\ \cline{2-7}
 & 1.00 & 31 & 98 & 25 & 2 & 89 \\ \cline{2-7}
 & 1.50 & 24 & 98 & 17 & 2 & 39 \\ \cline{2-7}
 & 2.00 & 21 & 83 & 18 & 17 & 22 \\ \cline{2-7}
 & 2.50 & 18 & 93 & 16 & 7 & 14 \\ \cline{2-7}
 & 3.00 & 17 & 88 & 9 & 12 & 10 \\ \cline{2-7}
\hlineB{3}

\multirow{7}{*}{\makecell[l]{CT}}
 & 0.50 & 61 & 75 & 60 & 25 & 354 \\ \cline{2-7}
 & 0.75 & 47 & 88 & 40 & 12 & 158 \\ \cline{2-7}
 & 1.00 & 40 & 85 & 38 & 15 & 89 \\ \cline{2-7}
 & 1.50 & 32 & 90 & 15 & 10 & 39 \\ \cline{2-7}
 & 2.00 & 27 & 88 & 18 & 12 & 22 \\ \cline{2-7}
 & 2.50 & 24 & 90 & 12 & 10 & 14 \\ \cline{2-7}
 & 3.00 & 22 & 88 & 9 & 12 & 10 \\ \cline{2-7}
\hlineB{3}

\multirow{7}{*}{\makecell[l]{GA}}
 & 0.50 & 101 & 98 & 100 & 2 & 354 \\ \cline{2-7}
 & 0.75 & 80 & 98 & 76 & 2 & 158 \\ \cline{2-7}
 & 1.00 & 68 & 90 & 40 & 10 & 89 \\ \cline{2-7}
 & 1.50 & 54 & 90 & 40 & 10 & 39 \\ \cline{2-7}
 & 2.00 & 46 & 93 & 20 & 7 & 22 \\ \cline{2-7}
 & 2.50 & 41 & 95 & 14 & 5 & 14 \\ \cline{2-7}
 & 3.00 & 38 & 78 & 37 & 22 & 10 \\ \cline{2-7}
\hlineB{3}

\multirow{7}{*}{\makecell[l]{FL}}
 & 0.50 & 149 & 98 & 100 & 2 & 354 \\ \cline{2-7}
 & 0.75 & 116 & 83 & 117 & 17 & 158 \\ \cline{2-7}
 & 1.00 & 99 & 98 & 76 & 2 & 89 \\ \cline{2-7}
 & 1.50 & 78 & 98 & 76 & 2 & 39 \\ \cline{2-7}
 & 2.00 & 68 & 98 & 25 & 2 & 22 \\ \cline{2-7}
 & 2.50 & 60 & 98 & 14 & 2 & 14 \\ \cline{2-7}
 & 3.00 & 54 & 88 & 40 & 12 & 10 \\ \cline{2-7}
\hlineB{3}

\multirow{7}{*}{\makecell[l]{CA}}
 & 0.50 & 180 & 98 & 100 & 2 & 354 \\ \cline{2-7}
 & 0.75 & 140 & 90 & 131 & 10 & 158 \\ \cline{2-7}
 & 1.00 & 120 & 98 & 76 & 2 & 89 \\ \cline{2-7}
 & 1.50 & 95 & 98 & 76 & 2 & 39 \\ \cline{2-7}
 & 2.00 & 82 & 98 & 25 & 2 & 22 \\ \cline{2-7}
 & 2.50 & 73 & 83 & 72 & 17 & 14 \\ \cline{2-7}
 & 3.00 & 66 & 93 & 50 & 7 & 10 \\ \cline{2-7}
\hlineB{3}
\end{tabular}
  \caption{Simulated sample sizes in thousands for direct ballot selection and
  Minerva~\cite{zagorski2021minerva}. 
 $\alpha_\dup, \alpha_\sample$ are normalized percentages of overall risk $\alpha$. Direct ballot selection optimization minimizes the max of $k_\dup, k_\sample$.
  }
  \label{tab:main results raw ballots}
  
\end{table}

\ifnum\conference=0
\begin{table*}[t!]
\centering
\resizebox{\textwidth}{!}{
\begin{tabular}{V{3}l V{3}r V{3}r|r|rV{3}r|r|rV{3}r|r|rV{3}r|r|rV{3}r|r|rV{3}r|r|rV{3}} \hlineB{3}
&  & \multicolumn{3}{cV{3}}{Accurate} & \multicolumn{3}{cV{3}}{$\Delta_{Upp}=0$} & \multicolumn{3}{cV{3}}{$\Delta=0.5$} & \multicolumn{3}{cV{3}}{$\Delta_{Low}=0.5\wedge \Delta_{Upp}=0$} & \multicolumn{3}{cV{3}}{$\Delta=0.1$} & \multicolumn{3}{cV{3}}{$\Delta_{Low}=0.1 \wedge \Delta_{Upp}=0$} \\
State & $\mu\ (\%)$ & $k_{\mathrm{dup}}$ & $k_{\mathrm{comp}}$ & $k_{\mathrm{size}}$ & $k_{\mathrm{dup}}$ & $k_{\mathrm{comp}}$ & $k_{\mathrm{size}}$ & $k_{\mathrm{dup}}$ & $k_{\mathrm{comp}}$ & $k_{\mathrm{size}}$ & $k_{\mathrm{dup}}$ & $k_{\mathrm{comp}}$ & $k_{\mathrm{size}}$ & $k_{\mathrm{dup}}$ & $k_{\mathrm{comp}}$ & $k_{\mathrm{size}}$ & $k_{\mathrm{dup}}$ & $k_{\mathrm{comp}}$ & $k_{\mathrm{size}}$ \\\hlineB{3}
\multirow{7}{*}{Cong.} & 0.5 & 35 & 11 & 354 & 48 & 43 & 354 & 62 & 63 & 354 & 48 & 48 & 354 & 72 & 65 & 354 & 48 & 46 & 354 \\\cline{2-20}
 & 1.0 & 22 & 6 & 354 & 23 & 20 & 354 & 21 & 18 & 354 & 23 & 20 & 354 & 24 & 23 & 354 & 23 & 23 & 354 \\\cline{2-20}
 & 2.0 & 15 & 3 & 354 & 14 & 10 & 354 & 12 & 10 & 354 & 14 & 10 & 354 & 14 & 10 & 354 & 24 & 7 & 219 \\\cline{2-20}
 & 3.0 & 12 & 2 & 354 & 11 & 9 & 354 & 9 & 7 & 354 & 11 & 9 & 354 & 19 & 5 & 165 & 19 & 5 & 139 \\\cline{2-20}
 & 5.0 & 9 & 2 & 354 & 8 & 6 & 354 & 7 & 4 & 354 & 17 & 4 & 216 & 14 & 4 & 98 & 13 & 3 & 87 \\\cline{2-20}
 & 8.0 & 7 & 2 & 354 & 7 & 6 & 354 & 16 & 3 & 192 & 12 & 3 & 139 & 11 & 2 & 61 & 10 & 2 & 54 \\\cline{2-20}
 & 10.0 & 6 & 1 & 354 & 13 & 3 & 224 & 13 & 3 & 165 & 10 & 2 & 121 & 9 & 2 & 50 & 9 & 2 & 47 \\\cline{2-20}
\hlineB{3}
\multirow{7}{*}{1 mil} & 0.5 & 55 & 14 & 1000 & 74 & 63 & 1000 & 94 & 91 & 1000 & 75 & 70 & 1000 & 107 & 101 & 1000 & 76 & 55 & 1000 \\\cline{2-20}
 & 1.0 & 34 & 9 & 1000 & 37 & 31 & 1000 & 33 & 27 & 1000 & 37 & 30 & 1000 & 38 & 35 & 1000 & 65 & 16 & 492 \\\cline{2-20}
 & 2.0 & 23 & 4 & 1000 & 23 & 22 & 1000 & 20 & 12 & 1000 & 46 & 12 & 597 & 39 & 10 & 281 & 37 & 8 & 232 \\\cline{2-20}
 & 3.0 & 19 & 3 & 1000 & 18 & 15 & 1000 & 45 & 9 & 572 & 33 & 8 & 415 & 29 & 6 & 189 & 26 & 6 & 176 \\\cline{2-20}
 & 5.0 & 14 & 2 & 1000 & 31 & 6 & 476 & 31 & 5 & 355 & 24 & 4 & 247 & 22 & 3 & 104 & 20 & 3 & 104 \\\cline{2-20}
 & 8.0 & 11 & 1 & 1000 & 22 & 5 & 303 & 23 & 4 & 219 & 18 & 3 & 159 & 16 & 2 & 73 & 15 & 2 & 64 \\\cline{2-20}
 & 10.0 & 10 & 1 & 1000 & 20 & 3 & 242 & 20 & 3 & 180 & 16 & 3 & 121 & 14 & 2 & 61 & 14 & 2 & 50 \\\cline{2-20}
\hlineB{3}
\multirow{7}{*}{CT} & 0.5 & 67 & 19 & 1680 & 91 & 86 & 1680 & 116 & 101 & 1680 & 94 & 87 & 1680 & 134 & 131 & 1680 & 94 & 94 & 1680 \\\cline{2-20}
 & 1.0 & 44 & 9 & 1680 & 47 & 40 & 1680 & 42 & 35 & 1680 & 47 & 30 & 1680 & 92 & 20 & 659 & 76 & 16 & 565 \\\cline{2-20}
 & 2.0 & 29 & 5 & 1680 & 29 & 28 & 1680 & 73 & 15 & 938 & 55 & 12 & 641 & 50 & 10 & 281 & 44 & 9 & 263 \\\cline{2-20}
 & 3.0 & 24 & 3 & 1680 & 52 & 10 & 835 & 53 & 9 & 630 & 42 & 8 & 415 & 37 & 6 & 189 & 34 & 5 & 176 \\\cline{2-20}
 & 5.0 & 18 & 2 & 1680 & 37 & 7 & 495 & 37 & 6 & 373 & 30 & 5 & 256 & 27 & 4 & 111 & 25 & 4 & 104 \\\cline{2-20}
 & 8.0 & 14 & 2 & 1680 & 27 & 4 & 315 & 28 & 4 & 230 & 23 & 3 & 165 & 20 & 2 & 85 & 19 & 2 & 73 \\\cline{2-20}
 & 10.0 & 13 & 1 & 1680 & 24 & 3 & 257 & 24 & 3 & 199 & 20 & 2 & 135 & 17 & 2 & 66 & 16 & 2 & 61 \\\cline{2-20}
\hlineB{3}
\multirow{7}{*}{GA} & 0.5 & 114 & 19 & 5008 & 152 & 130 & 5008 & 186 & 189 & 5008 & 153 & 149 & 5008 & 216 & 211 & 5008 & 253 & 55 & 1746 \\\cline{2-20}
 & 1.0 & 73 & 11 & 5008 & 182 & 30 & 3278 & 198 & 45 & 2478 & 150 & 30 & 1481 & 135 & 23 & 789 & 118 & 17 & 659 \\\cline{2-20}
 & 2.0 & 50 & 5 & 5008 & 104 & 21 & 1372 & 111 & 16 & 1041 & 86 & 13 & 691 & 75 & 11 & 355 & 69 & 11 & 303 \\\cline{2-20}
 & 3.0 & 39 & 5 & 5008 & 79 & 11 & 909 & 85 & 10 & 662 & 63 & 8 & 502 & 57 & 8 & 237 & 53 & 8 & 203 \\\cline{2-20}
 & 5.0 & 30 & 3 & 5008 & 57 & 8 & 539 & 59 & 6 & 413 & 46 & 5 & 298 & 42 & 4 & 151 & 38 & 5 & 138 \\\cline{2-20}
 & 8.0 & 23 & 3 & 5008 & 41 & 5 & 360 & 43 & 4 & 284 & 35 & 4 & 192 & 32 & 3 & 100 & 30 & 3 & 92 \\\cline{2-20}
 & 10.0 & 21 & 2 & 5008 & 36 & 4 & 297 & 40 & 3 & 210 & 31 & 3 & 158 & 28 & 2 & 85 & 25 & 3 & 78 \\\cline{2-20}
\hlineB{3}
\multirow{7}{*}{FL} & 0.5 & 163 & 24 & 11005 & 220 & 211 & 11005 & 266 & 230 & 11005 & 402 & 96 & 5139 & 497 & 122 & 4023 & 330 & 79 & 1973 \\\cline{2-20}
 & 1.0 & 107 & 12 & 11005 & 246 & 50 & 3320 & 265 & 55 & 2648 & 199 & 27 & 1691 & 185 & 30 & 873 & 162 & 22 & 719 \\\cline{2-20}
 & 2.0 & 71 & 9 & 11005 & 146 & 19 & 1452 & 148 & 18 & 1167 & 115 & 16 & 781 & 108 & 11 & 387 & 98 & 9 & 355 \\\cline{2-20}
 & 3.0 & 57 & 6 & 11005 & 107 & 11 & 1008 & 111 & 11 & 782 & 89 & 10 & 523 & 82 & 7 & 259 & 76 & 6 & 237 \\\cline{2-20}
 & 5.0 & 44 & 3 & 11005 & 75 & 8 & 619 & 81 & 8 & 461 & 65 & 8 & 311 & 59 & 5 & 165 & 56 & 5 & 138 \\\cline{2-20}
 & 8.0 & 34 & 3 & 11005 & 58 & 5 & 381 & 60 & 4 & 320 & 48 & 4 & 219 & 44 & 4 & 110 & 41 & 4 & 110 \\\cline{2-20}
 & 10.0 & 31 & 2 & 11005 & 48 & 5 & 331 & 53 & 5 & 250 & 41 & 4 & 189 & 40 & 3 & 85 & 37 & 3 & 85 \\\cline{2-20}
 \hlineB{3}
\multirow{7}{*}{CA} & 0.5 & 197 & 24 & 16141 & 595 & 162 & 11068 & 320 & 279 & 16141 & 464 & 122 & 5139 & 600 & 122 & 4023 & 381 & 65 & 2266 \\\cline{2-20}
 & 1.0 & 129 & 12 & 16141 & 272 & 56 & 3546 & 310 & 46 & 2841 & 239 & 30 & 1691 & 223 & 30 & 873 & 190 & 30 & 719 \\\cline{2-20}
 & 2.0 & 86 & 9 & 16141 & 160 & 21 & 1598 & 174 & 16 & 1240 & 139 & 16 & 781 & 127 & 16 & 387 & 113 & 12 & 387 \\\cline{2-20}
 & 3.0 & 69 & 6 & 16141 & 125 & 15 & 1008 & 131 & 10 & 830 & 103 & 9 & 572 & 94 & 11 & 284 & 89 & 10 & 237 \\\cline{2-20}
 & 5.0 & 54 & 3 & 16141 & 88 & 8 & 649 & 95 & 8 & 489 & 74 & 7 & 355 & 70 & 5 & 182 & 66 & 5 & 165 \\\cline{2-20}
 & 8.0 & 41 & 3 & 16141 & 65 & 4 & 443 & 73 & 4 & 320 & 56 & 5 & 230 & 52 & 5 & 123 & 49 & 5 & 110 \\\cline{2-20}
 & 10.0 & 37 & 2 & 16141 & 57 & 5 & 349 & 62 & 5 & 266 & 50 & 4 & 180 & 47 & 3 & 94 & 44 & 3 & 94 \\\cline{2-20}
\hlineB{3}
\end{tabular}
}
\caption{Direct selection concrete parameters of accurate versus inaccurate manifest. $\Delta=.1$ is used to compute time savings in  Figure~\ref{fig:inacc direct savings}. All sizes are in thousands.}
\label{tab:inacc direct savings}
\end{table*}

\begin{table*}[t!]
\centering
\begin{tabular}{V{3}rV{3}r|rV{3}r|rV{3}r|rV{3}r|rV{3}}
\hlineB{3}
$\mu$ & \multicolumn{2}{cV{3}}{Accurate} & \multicolumn{2}{cV{3}}{$\Delta_{Upp}=0$} & \multicolumn{2}{cV{3}}{$\Delta=0.5$} & \multicolumn{2}{cV{3}}{$\Delta=0.1$} \\
$(\%)$ & $k_{\mathrm{comp}}$ & $k_{\mathrm{size}}$ & $k_{\mathrm{comp}}$ & $k_{\mathrm{size}}$ & $k_{\mathrm{comp}}$ & $k_{\mathrm{size}}$ & $k_{\mathrm{comp}}$ & $k_{\mathrm{size}}$ \\
\hlineB{3}
0.5 & 2537 & 16141 & 2537 & 0 & 17139 & 724 & 9221 & 216 \\\hline
1.0 & 1020 & 16141 & 1020 & 0 & 6049 & 269 & 3496 & 73 \\\hline
2.0 & 395 & 16141 & 395 & 0 & 2874 & 114 & 1352 & 34 \\\hline
3.0 & 263 & 16141 & 263 & 0 & 1469 & 76 & 836 & 21 \\\hline
5.0 & 157 & 16141 & 157 & 0 & 841 & 44 & 529 & 11 \\\hline
8.0 & 81 & 16141 & 81 & 0 & 563 & 27 & 563 & 6 \\\hline
10.0 & 65 & 16141 & 65 & 0 & 563 & 19 & 563 & 5 \\\hline
\hlineB{3}
\end{tabular}
\caption{Ballot comparison concrete parameters of accurate versus inaccurate manifest for the CA size. $k_\mathrm{size}$ is in thousands. The only size dependence on these methods is when to resort to a full count of size or a full hand recount. $\Delta=.1$ is used to compute time savings in  Figure~\ref{fig:comp inacc savings}.}
\label{tab:comp inacc savings}
\end{table*}

\begin{table*}[t!]
\centering
\resizebox{.9\textwidth}{!}{
\begin{tabular}{V{3}rV{3}r|rV{3}r|rV{3}r|rV{3} L{1.5cm} | L{1.5cm} V{3}r|rV{3} L{1.5cm} | L{1.5cm} V{3}}
\hlineB{3}
$\mu$ & \multicolumn{2}{cV{3}}{Accurate} & \multicolumn{2}{cV{3}}{$\Delta_{Upp}=0$} & \multicolumn{2}{cV{3}}{$\Delta=0.5$} & \multicolumn{2}{cV{3}}{$\Delta_{Low}=0.5, \Delta_{Upp}=0$} & \multicolumn{2}{cV{3}}{$\Delta=0.1$} & \multicolumn{2}{cV{3}}{$\Delta_{Low}=0.1, \Delta_{Upp}=0$} \\
$(\%)$ & $k_{\mathrm{comp}}$ & $k_{\mathrm{size}}$ & $k_{\mathrm{comp}}$ & $k_{\mathrm{size}}$ & $k_{\mathrm{comp}}$ & $k_{\mathrm{size}}$ & $k_{\mathrm{comp}}$ & $k_{\mathrm{size}}$ & $k_{\mathrm{comp}}$ & $k_{\mathrm{size}}$ & $k_{\mathrm{comp}}$ & $k_{\mathrm{size}}$ \\
\hlineB{3}
0.5 & 355 & 16141 & 783 & 16141 & 1105 & 16141 & 1112 & 7132 & 1302 & 4023 & 958 & 3203 \\\hline
1.0 & 89 & 16141 & 297 & 4028 & 261 & 2841 & 225 & 2071 & 199 & 1107 & 176 & 977 \\\hline
2.0 & 23 & 16141 & 81 & 1435 & 67 & 938 & 57 & 748 & 46 & 355 & 43 & 327 \\\hline
3.0 & 10 & 16141 & 40 & 826 & 31 & 547 & 27 & 430 & 22 & 189 & 18 & 218 \\\hline
5.0 & 4 & 16141 & 17 & 428 & 13 & 276 & 11 & 216 & 8 & 98 & 8 & 87 \\\hline
8.0 & 2 & 16141 & 8 & 236 & 6 & 148 & 6 & 113 & 4 & 51 & 3 & 54 \\\hline
10.0 & 1 & 16141 & 6 & 180 & 5 & 110 & 4 & 85 & 3 & 36 & 3 & 35 \\\hline
\hlineB{3}
\end{tabular}
}
\caption{Polling concrete parameters of accurate versus inaccurate manifest for the CA size. Sizes are in thousands. The only size dependence on these methods is when to resort to a full count of size or a full hand recount. $\Delta=.1$ is used to compute time savings in Figure~\ref{fig:inacc polling savings}.}
\label{tab:polling inacc savings}
\end{table*}
\fi
\ifnum\conference=0
\section{Duplicate detection}
    \label{sec:dup detection}

    We return to the proof of Lemma~\ref{lem:dup-advertise}. We first establish a bound for independent sampling with replacement, then transfer it to sequential sampling without replacement by a coupling argument.
    
\begin{theorem}[Duplicate-detection bound under i.i.d.\ sampling]
\label{thm:dup-iid-simple}
Let \(A\) be a finite set with \(|A|=n\), let \(f:A\to B\) be a function with
\(|B|=n-\ell\), and let \(\Pi\) be a probability distribution on \(A\) such that
\[
\Pi(x)\ge \frac{1}{n(1+c)}
\qquad\text{for all }x\in A,
\]
for some \(c\ge 0\).

Let \(Y_1,\dots,Y_k\) be i.i.d.\ from \(\Pi\), and define the no-collision event
\[
E_k^{\mathrm{iid}}
:=
\Bigl\{
\forall i\neq j\le k,\ f(Y_i)=f(Y_j)\implies Y_i=Y_j
\Bigr\}.
\]
Thus \(E_k^{\mathrm{iid}}\) is the event that no duplicate identifier is
detected among the \(k\) draws, where repeated draws of the \emph{same}
element are not counted as collisions. If \(k\le n(1+c)\) then
\begin{align}
\nonumber
\Pr(E_k^{\mathrm{iid}}) &\le 2\exp\!\left(-\ell \left[1 - \exp\left(\frac{-k}{n(1+c)}\right)\right]^2\right)\\
\label{eq:dup-iid}
&\leq 2\exp\!\left(-\frac{k^2\ell}{4n^2(1+c)^2}\right).
\end{align}
\end{theorem}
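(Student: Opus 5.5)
We have to prove Theorem~\ref{thm:dup-iid-simple}: under i.i.d.\ sampling from \(\Pi\), the probability of seeing no collision is at most the stated bound. The plan is to find many disjoint collision pairs, show that a constant fraction of each pair's mass is likely to be hit, and then apply a concentration bound.

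\textbf{Step 1: collision pairs.} Since \(|B| = n-\ell\), the fibers \(f^{-1}(b)\) contain \(\ell\) excess elements. From each fiber of size \(s \geq 2\), take \(\lfloor s/2\rfloor\) disjoint pairs, and also single out one pair inside the fiber. The aim is a family of at least \(\ell/2\) disjoint pairs, or some other structure that captures \(\ell\) "excess" elements. A cleaner route avoids pairing: for each fiber, choose one element as the representative, leaving \(\ell\) non-representative elements \(x_1,\dots,x_\ell\), each with a partner \(r(x_i)\ne x_i\) in its fiber. A collision occurs if some \(x_i\) and its representative are both drawn.

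\textbf{Step 2: split the sample.} Divide the draws into two halves of size \(k/2\). Let \(R\) be the set of distinct elements drawn in the first half. Then no collision implies that none of the second-half draws lands on a non-representative \(x\) whose representative lies in \(R\), and also that \(R\) itself contains no fiber pair. The event \(r(x_i)\in R\) has probability \(1-(1-\Pi(r(x_i)))^{k/2} \geq 1-\exp(-k/(2n(1+c)))\).

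This suggests the exponent \([1-e^{-k/(n(1+c))}]\) should arise with each half contributing one factor. I expect the intended argument is Poissonization or a direct computation: group elements by fiber. For a fiber \(F\) of size \(s\), the probability that at most one distinct element of \(F\) appears is bounded using \(q = 1-e^{-k/(n(1+c))}\), the lower bound on the probability that a given element is hit under the Poisson approximation. Independence across fibers holds after Poissonizing, that is, taking the number of draws to be \(\mathrm{Poisson}(k)\), which makes element hit counts independent Poisson variables with mean \(k\Pi(x) \geq k/(n(1+c))\).

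\textbf{Step 3: the per-fiber bound and assembly.} For a fiber of size \(s\), with each element hit independently with probability at least \(q\),
\[
\Pr[\le 1 \text{ hit}] \leq (1-q)^s + s q (1-q)^{s-1}.
\]
This should be at most \(\exp(-(s-1)q^2)\). For \(s=2\) the claim is \(1-q^2 \leq e^{-q^2}\). For general \(s\), an induction on \(s\), or the binomial tail bound that at least two successes fail to occur, should give it. Taking the product over fibers yields \(\exp(-\ell q^2)\) for the Poissonized sample.

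The factor \(2\) and the hypothesis \(k \leq n(1+c)\) come from de-Poissonizing. Use Poisson mean \(k' = k\) and note that the no-collision event is monotone decreasing in the sample. Then
\[
\Pr[E_k] \leq \Pr[E_{\mathrm{Pois}}]/\Pr[\mathrm{Pois}(k) \geq k] \leq 2\Pr[E_{\mathrm{Pois}}],
\]
using that \(\Pr[\mathrm{Pois}(k)\geq k] \geq 1/2\), a known median fact.

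The final inequality, \eqref{eq:dup-iid}, follows from \(1-e^{-x} \geq x/2\) for \(0 \leq x \leq 1\), which is where \(k \leq n(1+c)\) is used.

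\textbf{Main obstacle.} I expect the main obstacle to be the per-fiber inequality for general \(s\), together with getting the right Poisson mean. With mean \(k\), a single element is hit with probability \(1-e^{-k\Pi(x)}\) rather than \(1-e^{-k/(n(1+c))}\), and monotonicity in \(\Pi(x)\) must be used carefully, since the per-fiber bound is monotone in each hit probability.
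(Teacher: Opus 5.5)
Your final route is the paper's: Poissonize so that the hit indicators $I_x=\mathbf{1}\{M_x>0\}$ are independent with $\Pr[I_x=1]=1-e^{-k\Pi(x)}\ge q:=1-e^{-k/(n(1+c))}$, and bound each fiber's decreasing no-collision event by $\Pr[\mathrm{Bin}(s_b,q)\le 1]$. Then multiply over the independent fibers, de-Poissonize at the cost of a factor $2$, and finish with $1-e^{-x}\ge x/2$ on $[0,1]$. The pairing and sample-splitting ideas of your Steps 1--2 are not needed.

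The per-fiber inequality you flag as the main obstacle closes in one line, and this is how the paper does it: $(1-q)^{s-1}\bigl(1+(s-1)q\bigr)\le(1-q)^{s-1}(1+q)^{s-1}=(1-q^2)^{s-1}\le e^{-(s-1)q^2}$, by Bernoulli's inequality $1+mq\le(1+q)^m$. Note also that the total excess $\sum_b(s_b-1)=n-|\operatorname{im}f|$ is at least $\ell$, with equality only when $f$ is onto; this is the harmless direction.

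The de-Poissonization, however, is stated backwards. Write $p_m:=\Pr[E_m^{\mathrm{iid}}]$, which is nonincreasing in $m$. Then $\Pr[E^{\mathrm P}]=\sum_m p_m\Pr[N=m]\ge\sum_{m\le k}p_m\Pr[N=m]\ge p_k\Pr[N\le k]$. So the correct denominator is $\Pr[\mathrm{Pois}(k)\le k]$, and the median fact you need is $\Pr[\mathrm{Pois}(k)\le k]\ge 1/2$.

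On the range $N\ge k$ that you condition on, monotonicity gives $p_m\le p_k$. That bounds $\Pr[E^{\mathrm P}]$ in the useless direction, so your displayed inequality does not follow from the argument you give. It happens to be true for integer $k$, because $\Pr[N\ge k]\le\Pr[N\le k]$, but that is a separate fact you would have to prove.

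Finally, the hypothesis $k\le n(1+c)$ plays no role in de-Poissonizing; the first bound holds for every $k$. It is used only in the last step, as you correctly say at the end.
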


\begin{proof}
For each \(b\in B\), let
$F_b:=f^{-1}(b)$ and $
s_b:=|F_b|.$ 
Let
$B_+:=\{b\in B:s_b>0\}$ 
be the set of nonempty fibers. Then $\sum_{b\in B_+} s_b=n,$
\begin{align*}
\sum_{b\in B_+}(s_b-1)
=
n-|B_+|
\ifnum\conference=0
\\
\fi
=
n-|\operatorname{im}(f)|
\ge
n-|B|
=
\ell.
\end{align*}
Define
\[
\tilde\ell:=\sum_{b\in B_+}(s_b-1),
\]
so that \(\tilde\ell\ge \ell\). Further define
\[
  L := \frac{1}{n(1+c)}\,.
\]
The inequality can be strict when \(f\) is not surjective onto \(B\).
\medskip\noindent
\textbf{Step 1: Poissonization.}
Let \(N\sim \mathrm{Pois}(k)\), independent of everything else, and let
\(Z_1,\dots,Z_N\) be i.i.d.\ from \(\Pi\). Let \(E^{\mathrm P}\) denote the
event that no duplicate identifier is detected among these \(N\) draws.
For each \(x\in A\), let
$M_x:=\#\{i:Z_i=x\}.$ 
Then the random variables \((M_x)_{x\in A}\) are independent and satisfy
\[
M_x\sim \mathrm{Pois}(k\Pi(x)).
\]
Now define
\[
I_x:=\mathbf 1_{\{M_x>0\}}.
\]
Then the \(I_x\) are independent Bernoulli random variables with
\[
\Pr[I_x=1] =1-e^{-k\Pi(x)}\ge 1-e^{-kL}=:r.
\]

Fix \(b\in B_+\). The event that the fiber \(F_b\) causes no detected collision
is
\[
A_b:=\left\{\sum_{x\in F_b} I_x\le 1\right\}.
\]
This is a coordinatewise decreasing event in the family \((I_x)_{x\in F_b}\).
Since each \(I_x\) stochastically dominates a Bernoulli\((r)\), monotonicity of
product measures implies
\[
\Pr[A_b]\le \Pr\!\bigl[\mathrm{Bin}(s_b,r)\le 1\bigr].
\]
Now
\begin{align*}
\Pr\!\bigl[\mathrm{Bin}(s_b,r)\le 1\bigr]
&=(1-r)^{s_b}+s_b r(1-r)^{s_b-1}\\
&=(1-r)^{s_b-1}\bigl(1+(s_b-1)r\bigr).
\end{align*}
Using \(1+mr\le (1+r)^m\) for every integer \(m\ge 0\), we obtain
\[
\Pr[A_b]
\le
(1-r)^{s_b-1}(1+r)^{s_b-1}
=
(1-r^2)^{s_b-1}.
\]

Because different fibers involve disjoint collections of independent variables,
the events \((A_b)_{b\in B_+}\) are independent. Therefore
\[
\Pr\left[E^{\mathrm P}\right]
=
\prod_{b\in B_+}\Pr[A_b]
\le
\prod_{b\in B_+}(1-r^2)^{s_b-1}
=
(1-r^2)^{\tilde\ell}.
\]
Using \(1-u\le e^{-u}\) and \(\tilde\ell\ge \ell\),
\[
\Pr[E^{\mathrm P}]
\le
e^{-\tilde\ell r^2}
\le
e^{-\ell r^2}
=
\exp\!\bigl(-\ell(1-e^{-kL})^2\bigr).
\]

\medskip\noindent
\textbf{Step 2: De-Poissonization.}
For each \(m\ge 0\), let
\[
p_m:=\Pr\left[E_m^{\mathrm{iid}}\right],
\]
where \(E_m^{\mathrm{iid}}\) denotes the no-collision event for exactly \(m\)
i.i.d.\ draws from \(\Pi\). The sequence \((p_m)\) is nonincreasing in \(m\),
since adding more draws can only create more opportunities for collision.
Because \(N\sim \mathrm{Pois}(k)\),
\[
\Pr[E^{\mathrm P}]=\sum_{m\ge 0} p_m\,\Pr[N=m].
\]
Since \(p_m\ge p_k\) for all \(m\le k\), it follows that
\[
\Pr[E^{\mathrm P}]\ge p_k\,\Pr[N\le k].
\]
Hence
\[
p_k\le \frac{\Pr[E^{\mathrm P}]}{\Pr[N\le k]}.
\]
Combining this with the bound from Step 1 yields
\[
\Pr[E_k^{\mathrm{iid}}]
=
p_k
\le
\frac{\exp\!\bigl(-\ell(1-e^{-kL})^2\bigr)}
     {\Pr[\mathrm{Pois}(k)\le k]}\,.
\]

Recalling that \(\Pr[\mathrm{Pois}(k)\le k]\ge 1/2\), we conclude that
\[
\Pr[E_k^{\mathrm{iid}}]
=
p_k
\le
2 \exp\!\bigl(-\ell(1-e^{-kL})^2\bigr)\,.
\]
Finally, if \(kL\le 1\), then
\[
1-e^{-kL}\ge \frac{kL}{2}.
\]
Thus
\[
\Pr[E_k^{\mathrm{iid}}]
\le
2\exp\!\left(-\ell\left(\frac{kL}{2}\right)^2\right)
=
2\exp\!\left(-\frac{k^2\ell}{4n^2(1+c)^2}\right),
\]
which is \eqref{eq:dup-iid}.
\end{proof}

We now prove Lemma~\ref{lem:dup-advertise} as a corollary of the previous theorem.

\begin{corollary}[Reduction to sequential weighted sampling without replacement]
\label{cor:dup-seq}
Assume the hypotheses of Theorem~\ref{thm:dup-iid-simple}. Let
\(X_1,\dots,X_k\) be sampled without replacement according to the sequential law
\(
X_1\sim \Pi,
\)
and for \(t\ge 1\),
\[
\Pr[X_{t+1}=x\mid X_1,\dots,X_t]
=
\frac{\Pi(x)}
     {1-\sum_{j=1}^t \Pi[X_j]}
\]
for every $x\notin\{X_1,\dots,X_t\}$.
Define
\[
E_k^{\mathrm{seq}}
:=
\{\forall i\neq j,\ f(X_i)\neq f(X_j)\}.
\]
Then, for $k \leq n$,
\begin{align}
\label{eq:dup-seq-strong}
\Pr[E_k^{\mathrm{seq}}] &\le
2\exp\!\left(-\ell \left[1 - \exp\left(\frac{-k}{n(1+c)}\right)\right]^2\right)\\
\label{eq:dup-seq}
&\le 2\exp\!\left(-\frac{k^2\ell}{4n^2(1+c)^2}\right).
\end{align}
\end{corollary}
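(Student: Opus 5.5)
The plan is to couple the sequential without-replacement process with an i.i.d.\ process from \(\Pi\) and then apply Theorem~\ref{thm:dup-iid-simple}. Draw an infinite i.i.d.\ sequence \(Y_1,Y_2,\dots\sim\Pi\). Let \(X_1,X_2,\dots\) be the successive \emph{distinct} values in this sequence, in order of first appearance. A standard computation identifies the law of this sequence: conditioned on \(X_1,\dots,X_t\), the next new value is \(x\notin\{X_1,\dots,X_t\}\) with probability
\[
\frac{\Pi(x)}{1-\sum_{j\le t}\Pi(X_j)}.
\]
This is exactly the sequential law in the statement. So \((X_1,\dots,X_k)\) has the required distribution. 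Since \(\Pi(x)>0\) for all \(x\) and \(k\le n\), the first \(k\) distinct values appear almost surely after finitely many draws.

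The key step is a containment of events. Consider the first \(k\) i.i.d.\ draws \(Y_1,\dots,Y_k\). The set \(\{Y_1,\dots,Y_k\}\) of distinct values among them is contained in \(\{X_1,\dots,X_k\}\), because the distinct values in the first \(k\) draws are the first \(k'\le k\) distinct values overall. Suppose \(E_k^{\mathrm{seq}}\) holds, so \(f\) is injective on \(\{X_1,\dots,X_k\}\). Then \(f\) is injective on the distinct values among \(Y_1,\dots,Y_k\). Hence \(f(Y_i)=f(Y_j)\) implies \(Y_i=Y_j\), which is exactly \(E_k^{\mathrm{iid}}\). Thus \(E_k^{\mathrm{seq}}\subseteq E_k^{\mathrm{iid}}\) under the coupling, and \(\Pr[E_k^{\mathrm{seq}}]\le\Pr[E_k^{\mathrm{iid}}]\).

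It remains to invoke Theorem~\ref{thm:dup-iid-simple}. Its hypothesis \(k\le n(1+c)\) follows from \(k\le n\) and \(c\ge 0\). This gives \eqref{eq:dup-seq-strong}. For \eqref{eq:dup-seq}, set \(kL=k/(n(1+c))\le 1\). Then \(1-e^{-kL}\ge kL/2\), exactly as at the end of that theorem's proof.

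The main obstacle is verifying that the "distinct values in order of appearance" coupling reproduces the stated sequential law, including the conditional normalization. I would do this by conditioning on the prefix up to the time the \(t\)-th new value appears. The subsequent draws are i.i.d.\ \(\Pi\) and independent of that prefix. The first draw outside \(\{X_1,\dots,X_t\}\) is then distributed as \(\Pi\) restricted to the complement and renormalized. Everything else is immediate.
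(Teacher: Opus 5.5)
Your proposal is correct and follows essentially the same route as the paper, which also realizes the sequential law as the successive distinct values of an infinite i.i.d.\ sequence (via stopping times $\tau_m$), checks the conditional law by a geometric-series computation, and proves $E_k^{\mathrm{seq}}\subseteq E_k^{\mathrm{iid}}$ before invoking Theorem~\ref{thm:dup-iid-simple}. Your containment argument, that the distinct values among $Y_1,\dots,Y_k$ lie in $\{X_1,\dots,X_k\}$, is simply the direct form of the paper's contrapositive.
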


\begin{proof}
Let \(Y_1,Y_2,\dots\) be an infinite i.i.d.\ sequence with law \(\Pi\). Define
stopping times $\tau_1:=1,$
\[
\tau_{m+1}:=\min\{t>\tau_m:\ Y_t\notin\{Y_{\tau_1},\dots,Y_{\tau_m}\}\}.
\]
Since \(\Pi(x)>0\) for every \(x\in A\), these stopping times are almost surely
finite for \(m\le n\). Set
\[
\widetilde X_m:=Y_{\tau_m},
\qquad m=1,\dots,k.
\]
Then \((\widetilde X_1,\dots,\widetilde X_k)\) has exactly the sequential
without-replacement law above: conditional on
\(\widetilde X_1=x_1,\dots,\widetilde X_t=x_t\), the next accepted value is the
first future i.i.d.\ draw that lands outside \(\{x_1,\dots,x_t\}\), so for
\(x\notin\{x_1,\dots,x_t\}\),
\begin{align*}
&\Pr\left[\widetilde X_{t+1}=x\mid
\widetilde X_1=x_1,\dots,\widetilde X_t=x_t\right]\\
&=
\sum_{m\ge 1}\left(\sum_{j=1}^t\Pi(x_j)\right)^{m-1}\Pi(x)\\
&=
\frac{\Pi(x)}{1-\sum_{j=1}^t \Pi(x_j)}.
\end{align*}
Now, if \(E_k^{\mathrm{iid}}\) fails for the first \(k\) i.i.d.\ draws
\(Y_1,\dots,Y_k\), then there exist \(i\neq j\le k\) such that \(Y_i\neq Y_j\)
but \(f(Y_i)=f(Y_j)\). Those two distinct elements have both appeared by time
\(k\), so they are among the first \(k\) distinct values
\(\widetilde X_1,\dots,\widetilde X_k\). Hence \(E_k^{\mathrm{seq}}\) also
fails for \(\widetilde X_1,\dots,\widetilde X_k\). Therefore
\[
E_k^{\mathrm{seq}}(\widetilde X_1,\dots,\widetilde X_k)
\subseteq
E_k^{\mathrm{iid}}(Y_1,\dots,Y_k),
\]
which gives
\[
\Pr[E_k^{\mathrm{seq}}]\le \Pr[E_k^{\mathrm{iid}}].
\]
The stated bounds now follow from Theorem~\ref{thm:dup-iid-simple}.
\end{proof}
\noindent
Lemma~\ref{lem:dup-advertise} follows immediately.
\section{General Sampling Bounds from Size Ratio and Total Variation}
\label{app:reweighting}

This appendix isolates the abstract analytic statement needed by the auditor.
The main theorem only uses two manifest-derived quantities: a size ratio
\(\lambda:=M/N\), where \(M=|T|\) and \(N=|\Ballots|\), and a total variation
bound \(\eta\) between the induced ballot-sampling law \(\pi\) and the
uniform ballot law \(U\). The results below assume that the auditor has a
lower bound on the size ratio and a total variation bound between the induced
sampling law and \(U\); the directional manifest calculations that supply
these quantities appear in Appendix~\ref{app:auditor-analysis}.

\subsection{Total variation and normalized reweighting}

\begin{definition}[Total variation distance]\label{def:tv}
Let \(P\) and \(Q\) be probability distributions on a finite set \(X\). The
total variation distance between \(P\) and \(Q\) is
\[
  d_{\mathrm{TV}}(P,Q)
  :=
  \frac12\sum_{x\in X}|P(x)-Q(x)|.
\]
\end{definition}

\begin{lemma}[Bounded functions are stable under total variation]\label{lem:tv-bounded}
Let \(f:X\to\mathbb{R}\) satisfy \(|f(x)|\le M_f\) for all \(x\in X\). Then for any
probability distributions \(P,Q\) on \(X\),
\[
  \bigl|\mathbb{E}[f(X_P)]-\mathbb{E}[f(X_Q)]\bigr|
  \le 2M_f\,d_{\mathrm{TV}}(P,Q),
\]
where \(X_P\sim P\) and \(X_Q\sim Q\).
\end{lemma}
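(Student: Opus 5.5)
The plan is to write the difference of expectations as a single sum over \(X\) and bound it termwise. First,
\[
  \mathbb{E}[f(X_P)]-\mathbb{E}[f(X_Q)]=\sum_{x\in X} f(x)\bigl(P(x)-Q(x)\bigr).
\]
Since \(|f(x)|\le M_f\) for every \(x\), the triangle inequality gives
\[
  \bigl|\mathbb{E}[f(X_P)]-\mathbb{E}[f(X_Q)]\bigr|\le \sum_{x\in X}|f(x)|\,|P(x)-Q(x)|\le M_f\sum_{x\in X}|P(x)-Q(x)|.
\]
By Definition~\ref{def:tv}, the last sum equals \(2\,d_{\mathrm{TV}}(P,Q)\). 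This yields the claimed bound \(2M_f\,d_{\mathrm{TV}}(P,Q)\).

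There is no real obstacle here, since \(X\) is finite and all sums are finite. The only point to note is that the constant \(2\) is not the sharpest possible. One can instead use \(\sum_x (P(x)-Q(x))=0\) to center \(f\), replacing it by \(f-c\) with \(c\) the midpoint of its range. That sharper form is not needed: the stated bound with \(M_f\) as the sup-norm suffices for the later applications, for example with \(M_f=2\) for discrepancies in \(\Sigma\). Those applications give the factor \(4\Gamma_{\mathrm{tv}}(p)\) appearing in the Informal Main Theorem.
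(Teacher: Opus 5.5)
Your proof is correct and is the same as the paper's: expand the difference of expectations as \(\sum_x f(x)\bigl(P(x)-Q(x)\bigr)\), apply the triangle inequality with \(|f|\le M_f\), and identify \(\sum_x|P(x)-Q(x)|\) with \(2\,d_{\mathrm{TV}}(P,Q)\). One correction to your closing remark: centering yields the bound \((\sup f-\inf f)\,d_{\mathrm{TV}}(P,Q)\), but for sup-norm-bounded \(f\) the constant \(2\) is already sharp (take \(f=M_f\) where \(P>Q\) and \(f=-M_f\) elsewhere), so centering gains nothing for discrepancies ranging over all of \(\Sigma\).
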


\begin{proof}
By the triangle inequality and the bound $|f(x)|\le M_f$:
\begin{align*}
  \bigl|\mathbb{E}[f(X_P)]-\mathbb{E}[f(X_Q)]\bigr|
  &=\Bigl|\sum_{x\in X}(P(x)-Q(x))f(x)\Bigr|
  \\&\le \sum_{x\in X}|P(x)-Q(x)|\,|f(x)|,
\end{align*}
which is at most \(2M_f d_{\mathrm{TV}}(P,Q)\).
\end{proof}

\begin{lemma}[General reweighting estimates]
\label{lem:general-reweight}
Let \(P\) be a probability distribution on a finite set \(X\), let
\(w:X\to(0,\infty)\), and define
\[
m:=\sum_{x\in X}P(x)w(x),
\qquad
Q(x):=\frac{P(x)w(x)}{m}.
\]
Then
\[
d_{\mathrm{TV}}(P,Q)
=
\frac{1}{2m}\sum_{x\in X}P(x)\,|w(x)-m|.
\]
Moreover,
\[
d_{\mathrm{TV}}(P,Q)
\le
\sum_{x\in X}P(x)\,|w(x)-1|.
\]
\end{lemma}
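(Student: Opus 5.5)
The identity will come from expanding the definition of total variation directly, and the inequality from a triangle-inequality estimate on $|w(x)-m|$ together with a bound on $|1-m|$.

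\textbf{The identity.} By Definition~\ref{def:tv},
\[
d_{\mathrm{TV}}(P,Q)=\frac12\sum_{x\in X}\Bigl|P(x)-\frac{P(x)w(x)}{m}\Bigr|.
\]
Since $m>0$ (as $w>0$ and $P$ is a probability distribution), we can factor out $P(x)/m\ge 0$ termwise. This gives
\[
d_{\mathrm{TV}}(P,Q)=\frac{1}{2m}\sum_{x\in X}P(x)\,|m-w(x)|,
\]
which is the claimed identity.

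\textbf{The inequality.} I will apply the triangle inequality
\[
|w(x)-m|\le |w(x)-1|+|1-m|
\]
and bound the second term using $\sum_x P(x)=1$:
\[
|1-m|=\Bigl|\sum_x P(x)(1-w(x))\Bigr|\le \sum_x P(x)|w(x)-1|=:S.
\]
Summing against $P$ then yields
\[
\sum_x P(x)|w(x)-m|\le S+|1-m|\le 2S,
\]
so $d_{\mathrm{TV}}(P,Q)\le S/m$.

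\textbf{Main obstacle: the factor $1/m$.} This is the one genuine difficulty, since $m$ may be smaller than $1$. I would handle it by normalizing differently. Note that $Q$ is unchanged if $w$ is replaced by $w/m$, but $S$ is not, so the argument should exploit freedom in the reference point rather than the scaling.

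Concretely, I would write
\[
2\,d_{\mathrm{TV}}(P,Q)=\sum_x \frac{P(x)}{m}\,|w(x)-m|
\]
and compare this with the alternative expression
\[
\sum_x |P(x)-Q(x)|\le \sum_x |P(x)-P(x)w(x)|+\sum_x |P(x)w(x)-Q(x)|.
\]
The first sum is exactly $S$. The second sum equals
\[
\sum_x P(x)w(x)\,\Bigl|1-\frac1m\Bigr|=m\cdot\frac{|m-1|}{m}=|m-1|\le S.
\]
Hence $2\,d_{\mathrm{TV}}(P,Q)\le 2S$, that is, $d_{\mathrm{TV}}(P,Q)\le S$. This route passes through the unnormalized measure $Pw$ and so avoids the $1/m$ factor entirely; I would present the proof of the inequality this way rather than via the identity.
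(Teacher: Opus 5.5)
Your proof is correct and, in its final form, is exactly the paper's argument. The identity comes from factoring $P(x)/m$ out of each term, and the inequality from the triangle inequality through the unnormalized measure $R(x)=P(x)w(x)$, with $\|P-R\|_1=\sum_x P(x)|w(x)-1|$ and $\|R-Q\|_1=|m-1|\le\sum_x P(x)|w(x)-1|$. Your first attempt via the identity only gives the weaker bound $\frac{1}{m}\sum_x P(x)|w(x)-1|$, so you were right to discard it.
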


\begin{proof}
Since \(Q(x)=P(x)w(x)/m\),
\begin{align*}
d_{\mathrm{TV}}(P,Q)
&=
\frac12\sum_{x\in X}\left|P(x)-Q(x)\right|\\
&=
\frac12\sum_{x\in X}P(x)\left|1-\frac{w(x)}{m}\right|\\
&=
\frac{1}{2m}\sum_{x\in X}P(x)\,|w(x)-m|.
\end{align*}
This proves the identity. For the inequality, define the finite measure
\(R(x):=P(x)w(x)\), so that \(\sum_xR(x)=m\) and \(Q=R/m\). By the triangle
inequality,
\[
\|P-Q\|_1\le \|P-R\|_1+\|R-Q\|_1.
\]
Now
\[
\|P-R\|_1=\sum_{x\in X}P(x)|1-w(x)|
\]
and
\[
\|R-Q\|_1
=
\left|1-\frac1m\right|\sum_{x\in X}R(x)
=|m-1|.
\]
Also,
\[
|m-1|
\le
\sum_{x\in X}P(x)|w(x)-1|.
\]
Hence
\[
2d_{\mathrm{TV}}(P,Q)=\|P-Q\|_1
\le
2\sum_{x\in X}P(x)|w(x)-1|,
\]
which proves the result.
\end{proof}

\begin{corollary}[Total variation under bounded reweighting]
\label{cor:sharp-reweight}
Let \(P\) be a probability distribution on a finite set \(X\), let
\(w:X\to(0,\infty)\), and define
\[
m:=\sum_{x\in X}P(x)w(x),
\qquad
Q(x):=\frac{P(x)w(x)}{m}.
\]
If, for some \(0<\alpha\le \beta\),
\[
\alpha\le w(x)\le \beta
\qquad\text{for all }x\in X,
\]
then
\[
d_{\mathrm{TV}}(P,Q)
\le
\frac{\sqrt{\beta}-\sqrt{\alpha}}
     {\sqrt{\beta}+\sqrt{\alpha}}.
\]
In particular, if
\[
\frac{1}{1+\eta}\le w(x)\le 1+\eta
\qquad\text{for all }x\in X,
\]
then
\[
d_{\mathrm{TV}}(P,Q)\le \frac{\eta}{2+\eta}.
\]
\end{corollary}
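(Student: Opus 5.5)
We use the exact identity from Lemma~\ref{lem:general-reweight},
\[
d_{\mathrm{TV}}(P,Q)=\frac{1}{2m}\sum_{x}P(x)\,|w(x)-m|, \qquad m=\sum_x P(x)w(x)\in[\alpha,\beta],
\]
and bound the right-hand side by a worst case over how $P$ distributes the values of $w$. The quantity $\sum_x P(x)|w(x)-m|$ depends only on the law of the random variable $W=w(X)$ with $X\sim P$. This law is supported on $[\alpha,\beta]$ and has mean $m$. So we need to maximize $\mathbb{E}|W-m|/(2m)$ over such laws.

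\textbf{Step 1: extremal two-point law.} For fixed mean $m$, the map $v\mapsto|v-m|$ is convex. Any law on $[\alpha,\beta]$ with mean $m$ is therefore dominated in convex order by the two-point law on $\{\alpha,\beta\}$ with the same mean. This is because every point $v\in[\alpha,\beta]$ can be split into $\alpha$ and $\beta$ while preserving the mean, and Jensen's inequality applies to the split. The dominating law puts mass $q=(m-\alpha)/(\beta-\alpha)$ on $\beta$. It gives
\[
\mathbb{E}|W-m|\le q(\beta-m)+(1-q)(m-\alpha)=\frac{2(m-\alpha)(\beta-m)}{\beta-\alpha}.
\]
Hence
\[
d_{\mathrm{TV}}(P,Q)\le g(m):=\frac{(m-\alpha)(\beta-m)}{m(\beta-\alpha)}.
\]
If $\alpha=\beta$, then $w$ is constant, $P=Q$, and the bound is trivially $0$.

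\textbf{Step 2: optimize over $m$.} Write $g(m)=\bigl(\alpha+\beta-m-\alpha\beta/m\bigr)/(\beta-\alpha)$. By AM--GM, $m+\alpha\beta/m\ge 2\sqrt{\alpha\beta}$, with equality at $m=\sqrt{\alpha\beta}$. Therefore
\[
g(m)\le\frac{\alpha+\beta-2\sqrt{\alpha\beta}}{\beta-\alpha}=\frac{(\sqrt\beta-\sqrt\alpha)^2}{(\sqrt\beta-\sqrt\alpha)(\sqrt\beta+\sqrt\alpha)}=\frac{\sqrt\beta-\sqrt\alpha}{\sqrt\beta+\sqrt\alpha},
\]
which is the first claim.

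\textbf{Step 3: specialization.} Take $\alpha=1/(1+\eta)$ and $\beta=1+\eta$. Multiplying the numerator and denominator by $\sqrt{1+\eta}$ gives
\[
\frac{(1+\eta)-1}{(1+\eta)+1}=\frac{\eta}{2+\eta}.
\]

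The only step that needs care is Step 1, the convex-order reduction. We would write it out directly to make it rigorous: for each $v\in[\alpha,\beta]$,
\[
|v-m|\le\frac{\beta-v}{\beta-\alpha}\,|\alpha-m|+\frac{v-\alpha}{\beta-\alpha}\,|\beta-m|,
\]
by convexity of $t\mapsto|t-m|$. Taking expectations and using $\mathbb{E}W=m$ then yields the two-point bound.
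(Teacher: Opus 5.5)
Your proposal is correct and follows essentially the same route as the paper: the identity from Lemma~\ref{lem:general-reweight}, the chord bound for \(|y-m|\) on \([\alpha,\beta]\) averaged using \(\sum_x P(x)w(x)=m\), and then maximization over \(m\) at \(m=\sqrt{\alpha\beta}\). Your AM--GM step simply makes explicit the maximization that the paper only asserts.
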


\begin{proof}
If \(\alpha=\beta\), then \(w\) is constant and \(P=Q\), so the claim is
immediate. Hence assume \(\alpha<\beta\).
By Lemma~\ref{lem:general-reweight},
\[
d_{\mathrm{TV}}(P,Q)
=
\frac{1}{2m}\sum_{x\in X}P(x)\,|w(x)-m|.
\]
Also \(m\in[\alpha,\beta]\), since \(m\) is a \(P\)-average of values in
\([\alpha,\beta]\). Fix \(m\) in this interval. Convexity of
\(y\mapsto |y-m|\) gives
\[
|y-m|
\le
\frac{\beta-y}{\beta-\alpha}(m-\alpha)
+
\frac{y-\alpha}{\beta-\alpha}(\beta-m).
\]
Averaging with respect to \(P\) and using
\(\sum_xP(x)w(x)=m\), we obtain
\[
\sum_{x\in X}P(x)|w(x)-m|
\le
\frac{2(\beta-m)(m-\alpha)}{\beta-\alpha}.
\]
Consequently,
\[
d_{\mathrm{TV}}(P,Q)
\le
\frac{(\beta-m)(m-\alpha)}{m(\beta-\alpha)}.
\]
The right-hand side is maximized at \(m=\sqrt{\alpha\beta}\), where it equals
\[
\frac{\sqrt{\beta}-\sqrt{\alpha}}
     {\sqrt{\beta}+\sqrt{\alpha}}.
\]
For the special case \(\alpha=1/(1+\eta)\) and \(\beta=1+\eta\), this
simplifies to \(\eta/(2+\eta)\).
\end{proof}

\begin{lemma}[Reweighting confined to an exceptional set]
\label{lem:exceptional-reweight}
Let \(P\) be a probability distribution, let \(B\) be an event with
\(P(B)=s\), and let
\[
  Q(x)=\frac{P(x)h(x)}{\sum_yP(y)h(y)}.
\]
If \(h=1\) outside \(B\) and
\[
  1-\ell\le h\le 1+u
  \qquad\text{on }B,
\]
where \(0\le\ell\le1\) and \(u\ge0\), then
\[
  d_{\mathrm{TV}}(P,Q)\le s\max\{\ell,u\}.
\]
\end{lemma}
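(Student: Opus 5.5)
We work directly from the identity in Lemma~\ref{lem:general-reweight}. Set $m:=\sum_y P(y)h(y)$. Since $h=1$ off $B$, we can write $m = 1 + \int_B (h-1)\,dP$. Let $\bar g := \frac{1}{s}\int_B (h-1)\,dP$ denote the average deviation on $B$ (when $s>0$; the case $s=0$ is trivial since then $Q=P$). Then $m = 1 + s\bar g$ with $-\ell\le \bar g\le u$, and in particular $m \ge 1-s\ell >0$ unless degenerate.

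Rather than using the crude second bound of Lemma~\ref{lem:general-reweight}, which would only give $s\max\{\ell,u\}$ up to a factor, I will compute the total variation exactly as $d_{\mathrm{TV}}(P,Q)=\sum_x (P(x)-Q(x))^+$. I split into two cases according to the sign of $m-1$.

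\textbf{Case $m\ge 1$.} Off $B$ we have $Q(x)=P(x)/m\le P(x)$. On $B$ the positive part of $P-Q$ is at most the mass where $h<m$. It is cleaner to use the complementary form $d_{\mathrm{TV}}=\sum_x (Q(x)-P(x))^+$. Off $B$ this contributes nothing, so
\[
d_{\mathrm{TV}} \le \sum_{x\in B} P(x)\Bigl(\frac{h(x)}{m}-1\Bigr)^+ \le \sum_{x\in B}P(x)\bigl(h(x)-1\bigr)^+ \le s u,
\]
using $m\ge1$.

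\textbf{Case $m<1$.} Now off $B$ we have $Q\ge P$, so we use $d_{\mathrm{TV}}=\sum_x (P(x)-Q(x))^+$, which is supported on $B$:
\[
d_{\mathrm{TV}} \le \sum_{x\in B}P(x)\Bigl(1-\frac{h(x)}{m}\Bigr)^+ \le \sum_{x\in B}P(x)\bigl(1-h(x)\bigr)^+ \le s\ell,
\]
since $h/m\ge h$ when $m<1$.

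Combining the two cases gives $d_{\mathrm{TV}}(P,Q)\le s\max\{\ell,u\}$. The only delicate point is selecting, in each case, the one-sided form of total variation whose support lies inside $B$. This works because of the identity $\sum_x(P-Q)^+=\sum_x(Q-P)^+$, which holds since both $P$ and $Q$ are probability distributions. With that choice made, the remaining steps are routine inequalities.
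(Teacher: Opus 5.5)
Your proof is correct and is essentially the paper's argument: both split on whether the normalizer $m=\sum_y P(y)h(y)$ is at least $1$ or below $1$. In each case both use the one-sided form of $d_{\mathrm{TV}}$ whose summand vanishes off $B$, which the paper writes as $\frac{1}{m}\sum_x P(x)(h(x)-m)_+$ and $\frac{1}{m}\sum_x P(x)(m-h(x))_+$ respectively.

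One correction to your aside: the second bound of Lemma~\ref{lem:general-reweight} loses no factor here. It gives $d_{\mathrm{TV}}(P,Q)\le\sum_{x\in B}P(x)|h(x)-1|\le s\max\{\ell,u\}$ immediately, so the case split is not even needed.
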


\begin{proof}
Write \(z:=\sum_xP(x)h(x)\) and \(D:=\max\{\ell,u\}\). If \(z\ge1\), then
\((h-z)_+\) vanishes outside \(B\), and on \(B\) it is at most \(u\). Hence
\[
  d_{\mathrm{TV}}(P,Q)
  =\frac{1}{z}\sum_xP(x)(h(x)-z)_+
  \le sD.
\]
If \(z<1\), then \((z-h)_+\) vanishes outside \(B\). Moreover,
\(h\ge1-\ell\ge z(1-\ell)\), so \((z-h)_+\le z\ell\) on \(B\). Thus
\[
  d_{\mathrm{TV}}(P,Q)
  =\frac{1}{z}\sum_xP(x)(z-h(x))_+
  \le sD.
\]
\end{proof}

\begin{lemma}[Lifting batch laws preserves total variation]
\label{lem:batch-ballot-tv}
Let \(\Ballots\) be partitioned into nonempty batches
\(\Ballots_1,\ldots,\Ballots_K\), with \(n_i:=|\Ballots_i|\). For a
probability law \(P\) on \([K]\), let \(\pi_P\) be the ballot law that first
draws batch \(i\) according to \(P\) and then draws uniformly from
\(\Ballots_i\). For any batch laws \(P,Q\),
\[
d_{\mathrm{TV}}(\pi_P,\pi_Q)=d_{\mathrm{TV}}(P,Q).
\]
In particular, if \(Q_i=n_i/N\), where \(N=\sum_i n_i\), then
\(\pi_Q=U\), the uniform law on \(\Ballots\).
\end{lemma}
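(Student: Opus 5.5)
The plan is to write down the two ballot laws explicitly and compare them ballot by ballot. Concretely, for a ballot $\ballot\in\Ballots_i$ we have
\[
\pi_P(\ballot)=\frac{P(i)}{n_i},\qquad \pi_Q(\ballot)=\frac{Q(i)}{n_i}.
\]
The key point is that both laws depend on $\ballot$ only through its batch $i=\batch(\ballot)$, with the same factor $1/n_i$. Each batch is nonempty, so $n_i\ge 1$ and these expressions are well defined.

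Next, compute the total variation distance directly from Definition~\ref{def:tv}. Group the sum over $\Ballots$ by batches:
\[
2\,d_{\mathrm{TV}}(\pi_P,\pi_Q)=\sum_{i=1}^K\sum_{\ballot\in\Ballots_i}\frac{|P(i)-Q(i)|}{n_i}=\sum_{i=1}^K |P(i)-Q(i)|.
\]
The inner sum has exactly $n_i$ identical terms, which cancels the $1/n_i$. The right-hand side equals $2\,d_{\mathrm{TV}}(P,Q)$, which gives the identity.

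For the ``in particular'' clause, substitute $Q_i=n_i/N$ into the formula for $\pi_Q$. Then $\pi_Q(\ballot)=(n_i/N)/n_i=1/N$ for every ballot, so $\pi_Q=U$.

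There is no substantive obstacle here. The only care needed is to note that the batches are nonempty; this is what makes $\pi_P$ a well-defined probability law.
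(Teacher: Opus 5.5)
Your proposal is correct and follows the paper's proof essentially line for line. Both write $\pi_P(\ballot)=P_i/n_i$ and $\pi_Q(\ballot)=Q_i/n_i$ for $\ballot\in\Ballots_i$, collapse each batch's $n_i$ identical terms to obtain $\frac12\sum_i|P_i-Q_i|$, and check $\pi_Q=U$ via $(n_i/N)/n_i=1/N$.
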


\begin{proof}
For every \(b\in\Ballots_i\), we have
\(\pi_P(b)=P_i/n_i\) and \(\pi_Q(b)=Q_i/n_i\). Therefore
\begin{align*}
d_{\mathrm{TV}}(\pi_P,\pi_Q)
&=\frac12\sum_{i=1}^K\sum_{b\in\Ballots_i}
  \left|\frac{P_i}{n_i}-\frac{Q_i}{n_i}\right|\\
&=\frac12\sum_{i=1}^K|P_i-Q_i|
=d_{\mathrm{TV}}(P,Q).
\end{align*}
For \(Q_i=n_i/N\), every ballot has probability
\((n_i/N)/n_i=1/N\), proving the final assertion.
\end{proof}

\subsection{Discrepancy under a nonuniform ballot law}
Fix a tabulation \(T\) and let \(\Ballots\) be the underlying ballot set. As in the
main text, let \(N:=|\Ballots|\) and \(M:=|T|\), and write
\[
  \lambda:=\frac{M}{N}.
\]
Let \(\disc_T(\ballot)\) denote the signed ballot discrepancy. As before,
\(\disc_T(\ballot)\in[-2,2]\) for every ballot \(\ballot\).

Let \(\kappa\) denote the excess identifier multiplicity rate, so that \(\kappa N\)
is the number of ballots discarded after keeping at most one ballot per label.

\begin{lemma}[Uniform-ballot lower bound]\label{lem:uniform-ballot}
Under the uniform ballot law \(U\),
\[
  \mathbb{E}[\disc_T(\ballot_U)]
  \ge
  \frac{\disc}{N}
  -2\kappa
  -(\lambda-1)_+.
\]
\end{lemma}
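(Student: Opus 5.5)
The plan is to prove the bound exactly by summing $\disc_T(\ballot)$ over all ballots and comparing the result with the tabulated sum $\winner^\tab-\loser^\tab$, using one representative ballot per identifier. Since $U$ is uniform,
\[
\mathbb{E}[\disc_T(\ballot_U)] = \frac{1}{N}\sum_{\ballot\in\Ballots}\disc_T(\ballot) = \frac{1}{N}\Bigl(S - (\winner^\act-\loser^\act)\Bigr),
\qquad S:=\sum_{\ballot\in\Ballots} c(\ballot).
\]
Here $c(\ballot)=\winner_r-\loser_r$ if $\id_\ballot$ appears in $T$ at row $r$, and $c(\ballot)=1$ otherwise. As $\disc=(\winner^\tab-\loser^\tab)-(\winner^\act-\loser^\act)$, it suffices to show
\[
S \ge (\winner^\tab-\loser^\tab) - 2\kappa N - (M-N).
\]
This gives the claim after dividing by $N$ and using $M/N-1\le(\lambda-1)_+$.

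The first step is the representative decomposition. For each identifier value in $\Labels_\Ballots$, including $\bot$ treated as one value, I would fix one representative ballot. By the definition of $\kappa$, there are $R:=N-\kappa N$ representatives and exactly $\kappa N$ non-representatives. Each non-representative has $c(\ballot)\ge -1$, so it contributes at least $-1$. Hence $S\ge \sum_{\mathrm{reps}} c(\ballot) - \kappa N$. This loses only $\kappa N$ and not $2\kappa N$, but we will need the second $\kappa N$ below.

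The second step compares the representatives with the CVR rows. Because CVR identifiers are distinct and representatives carry distinct identifiers, the representatives whose identifier appears in $T$ hit distinct rows. Let $u$ be the number of representatives whose identifier does not appear in $T$ (including $\bot$). Then the number of hit rows is $R-u$, and these representatives contribute exactly $\sum_{\text{hit } r}(\winner_r-\loser_r)$. The other $u$ representatives contribute $1$ each. The $M-(R-u)$ unhit rows each satisfy $\winner_r-\loser_r\le 1$, so
\[
\winner^\tab-\loser^\tab \le \sum_{\text{hit } r}(\winner_r-\loser_r) + M - R + u = \sum_{\mathrm{reps}} c(\ballot) + M - R.
\]

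Combining the two steps gives $S\ge (\winner^\tab-\loser^\tab) - (M-R) - \kappa N = (\winner^\tab-\loser^\tab) - (M-N) - 2\kappa N$, as required. The only delicate point is the bookkeeping of unlabeled ballots and unmatched identifiers. The key check is that an unmatched representative is charged $+1$ on the ballot side, which exactly covers the single extra unhit row it could otherwise leave on the tabulation side.
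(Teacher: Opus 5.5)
Your argument is correct and is essentially the paper's proof. Both sum the reported differences over all ballots, charge each excess (duplicate) ballot at least $-1$, charge each unhit CVR row at most $+1$ while crediting each unmatched ballot $+1$, and arrive at $\sum_{\ballot}\disc_T(\ballot)-\disc\ge N-M-2\kappa N$; the only difference is bookkeeping, since the paper counts excess multiplicity $d$ over CVR-matched identifiers and then uses $d\le\kappa N$, whereas you take one representative per identifier value (including $\bot$, consistent with the paper's definition of $\kappa$) and use $\kappa N$ directly.
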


%
%
%

\begin{proof}
Let $\mathcal R$ be the set of all CVR rows in $T$, so that
$|\mathcal R|=M$. Write $c_r:=\winner_r-\loser_r$ for the reported
vote difference in row $r$, and
$a_{\ballot}:=\winner_{\ballot}-\loser_{\ballot}$ for the actual
vote difference on ballot $\ballot$.

Let $H\subseteq\mathcal R$ be the set of rows whose identifiers
appear on physical ballots. For each $r\in H$, let $t_r\ge1$
be the number of ballots carrying that identifier. Define
\[
  h:=|H|,
  \qquad
  d:=\sum_{r\in H}(t_r-1),
  \qquad
  u:=N-h-d.
\]
Thus $u$ counts ballots whose identifiers are absent from the CVR,
including unlabeled ballots. Since $d$ counts excess multiplicity
over a subset of the physical identifiers, $d\le\kappa N$.

By the definition of ballot discrepancy, every unmatched physical
ballot is assigned reported vote difference $+1$. Therefore
\[
  \sum_{\ballot\in\Ballots}\disc_T(\ballot)
  =
  \sum_{r\in H}t_r c_r+u
  -\sum_{\ballot\in\Ballots}a_{\ballot}.
\]
Subtracting
\[
  \disc
  =
  \sum_{r\in\mathcal R}c_r
  -\sum_{\ballot\in\Ballots}a_{\ballot}
\]
gives
\begin{align*}
  \sum_{\ballot\in\Ballots}\disc_T(\ballot)-\disc
  &=
  \sum_{r\in H}(t_r-1)c_r
  +u-\sum_{r\in\mathcal R\setminus H}c_r\\
  &\ge -d+u-(M-h)\\
  &=N-M-2d,
\end{align*}
where the inequality uses $-1\le c_r\le1$.
Dividing by $N$ and using $d\le\kappa N$ yields
\begin{align*}
  \mathbb{E}[\disc_T(\ballot_U)]
  &\ge \frac{\disc}{N}+1-\lambda-2\kappa\\
  &\ge \frac{\disc}{N}-2\kappa-(\lambda-1)_+,
\end{align*}
as desired.
\end{proof}

\begin{theorem}[General discrepancy bound from size ratio and TV]\label{thm:general-ballot-law}
Let \(\pi\) be any probability distribution on \(\Ballots\). Then
\[
  \mathbb{E}[\disc_T(\ballot_\pi)]
  \ge
  \frac{\disc}{N}
  -2\kappa
  -(\lambda-1)_+
  -4\,d_{\mathrm{TV}}(\pi,U).
\]
If the reported outcome is wrong, then
\[
  \mathbb{E}[\disc_T(\ballot_\pi)]
  \ge
  \mu^{\tab}\lambda
  -2\kappa
  -(\lambda-1)_+
  -4\,d_{\mathrm{TV}}(\pi,U).
\]
Consequently, if
\[
  \frac{1}{1+\varepsilon_-}\le \lambda\le 1+\varepsilon_+
  \qquad\text{and}\qquad
  d_{\mathrm{TV}}(\pi,U)\le \gamma,
\]
then, for an invalid election,
\[
  \mathbb{E}[\disc_T(\ballot_\pi)]
  \ge
  \min\!\left\{
    \frac{\mu^{\tab}}{1+\varepsilon_-},
    \mu^{\tab}(1+\varepsilon_+) - \varepsilon_+
  \right\}
  -2\kappa-4\gamma.
\]
\end{theorem}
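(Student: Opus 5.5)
The bound for an arbitrary ballot law will be reduced to the uniform case in Lemma~\ref{lem:uniform-ballot}, paying for the change of law with Lemma~\ref{lem:tv-bounded}. I will prove the three claims in order.

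\textbf{First claim.} The ballot discrepancy $\disc_T(\ballot)$ takes values in $\Sigma=\{-2,\ldots,2\}$, so $|\disc_T|\le 2$. Lemma~\ref{lem:tv-bounded} with $M_f=2$ then gives
\[
  \mathbb{E}[\disc_T(\ballot_\pi)]\ge \mathbb{E}[\disc_T(\ballot_U)]-4\,d_{\mathrm{TV}}(\pi,U).
\]
Substituting the lower bound of Lemma~\ref{lem:uniform-ballot} for $\mathbb{E}[\disc_T(\ballot_U)]$ yields the first inequality directly.

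\textbf{Second claim.} For an invalid election, the identity after Definition~\ref{def:overall disc} gives
\[
  \disc=\mu^{\tab}M+\mu^{\act}N\ge \mu^{\tab}M,
\]
because $\mu^{\act}\ge0$. Hence $\disc/N\ge\mu^{\tab}\lambda$. Replacing $\disc/N$ by $\mu^{\tab}\lambda$ in the first inequality gives the second.

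\textbf{Third claim.} Define
\[
  g(\lambda):=\mu^{\tab}\lambda-(\lambda-1)_+ .
\]
It then suffices to show $g(\lambda)\ge\min\{\mu^{\tab}/(1+\varepsilon_-),\ \mu^{\tab}(1+\varepsilon_+)-\varepsilon_+\}$ on the interval $\lambda\in[1/(1+\varepsilon_-),\,1+\varepsilon_+]$. The function $g$ is piecewise linear.
\begin{itemize}
  \item For $\lambda\le1$, $g(\lambda)=\mu^{\tab}\lambda$, which is increasing. Its minimum is $\mu^{\tab}/(1+\varepsilon_-)$, attained at the left endpoint.
  \item For $\lambda\ge1$, $g(\lambda)=(\mu^{\tab}-1)\lambda+1$, which is nonincreasing because $\mu^{\tab}\le1$. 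Its minimum is at $\lambda=1+\varepsilon_+$, where it equals $\mu^{\tab}(1+\varepsilon_+)-\varepsilon_+$.
\end{itemize}
Since $g$ is concave, its minimum over the interval is the smaller of these two endpoint values. Combining this with $-2\kappa-4\gamma$, where $d_{\mathrm{TV}}(\pi,U)\le\gamma$, completes the proof.

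\textbf{Main obstacle.} The only delicate point is the sign bookkeeping in the second claim: the argument needs $\mu^{\act}\ge0$ as defined (it is an absolute value) and $M\ge 0$, both of which hold. The case analysis in the third claim must also correctly use $\mu^{\tab}\le1$, which follows from $\size^{\tab}\ge\winner^{\tab}$. Everything else is routine.
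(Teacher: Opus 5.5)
Your proposal is correct and follows the paper's proof essentially step for step. You apply Lemma~\ref{lem:tv-bounded} with $M_f=2$ together with Lemma~\ref{lem:uniform-ballot}, substitute $\disc/N\ge\mu^{\tab}\lambda$, and minimize the piecewise-linear $g(\lambda)=\mu^{\tab}\lambda-(\lambda-1)_+$ at the endpoints; you also make explicit two justifications the paper leaves implicit, namely the invalid-election identity with $\mu^{\act}\ge 0$ and the bound $\mu^{\tab}\le 1$ giving monotonicity on $[1,1+\varepsilon_+]$.
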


\begin{proof}
Apply Lemma~\ref{lem:tv-bounded} with \(f=\disc_T\) and \(M_f=2\) to compare the
expectation under \(\pi\) with the expectation under \(U\). The uniform lower
bound from Lemma~\ref{lem:uniform-ballot} then gives
\[
  \mathbb{E}[\disc_T(\ballot_\pi)]
  \ge
  \frac{\disc}{N}
  -2\kappa
  -(\lambda-1)_+
  -4\,d_{\mathrm{TV}}(\pi,U).
\]
If the reported outcome is wrong, replace \(\disc/N\) by \(\mu^{\tab}\lambda\).
To obtain the endpoint form, write
\[
  g(\lambda):=\mu^{\tab}\lambda-(\lambda-1)_+.
\]
This function is linear on \([1/(1+\varepsilon_-),1]\) and on
\([1,1+\varepsilon_+]\), so its minimum over
\([1/(1+\varepsilon_-),1+\varepsilon_+]\) is attained at one of the two
endpoints. Evaluating there gives
\begin{align*}
  g\!\left(\frac{1}{1+\varepsilon_-}\right)
  &= \frac{\mu^{\tab}}{1+\varepsilon_-}\\
  g(1+\varepsilon_+)
  &=
  \mu^{\tab}(1+\varepsilon_+)-\varepsilon_+,
\end{align*}
which yields the displayed bound.
\end{proof}

\section{Directional Manifest Analysis}
\label{app:auditor-analysis}

This section proves Theorem~\ref{thm:disc-lb-mostly-good-manifest} by combining the abstract reweighting bound from \appref{app:reweighting} with the directional batchwise theorem proved below. The duplicate-rate term is handled exactly as in the main text: once the duplicate detector certifies that \(\kappa\) is small enough, the remaining discrepancy loss is controlled by the manifest certificate.

\subsection{Directional manifest certificate}
\label{app:directional-certificate}

Let \(n_1,\dots,n_K>0\) be actual batch sizes and \(m_1,\dots,m_K>0\) be
reported batch sizes. Define
\[
N:=\sum_{i=1}^K n_i,
\quad
M:=\sum_{i=1}^K m_i,
\quad
r_i:=\frac{m_i}{M},
\quad
\nu_i:=\frac{n_i}{N}.
\]
Let \(\Bad\subseteq[K]\), and define its \emph{tabulated mass} by
\[
p:=\sum_{i\in\Bad}r_i.
\]
Assume directional parameters
\begin{align*}
0&\le \delta_{\mathrm{lower}}\le \Delta_{\mathrm{lower}}\le \infty,
\\
0&\le \delta_{\mathrm{upper}}\le \Delta_{\mathrm{upper}}<\infty.
\end{align*}
We say that the actual batch law \(\nu\) is a directional
\((p,\delta_{\mathrm{lower}},\delta_{\mathrm{upper}},\Delta_{\mathrm{lower}},\Delta_{\mathrm{upper}})\)-asymmetric manifest of the tabulated batch law \(r\) if, with
\[
w(i):=\frac{n_i}{m_i},
\]
the following bounds hold:
\[
\frac{1}{1+\delta_{\mathrm{lower}}}\le w(i)\le 1+\delta_{\mathrm{upper}}
\qquad (i\notin\Bad),
\]
and
\[
\frac{1}{1+\Delta_{\mathrm{lower}}}\le w(i)\le 1+\Delta_{\mathrm{upper}}
\qquad (1\le i\le K).
\]
Consequently,
\[
\nu_i=\frac{r_i\,w(i)}{\sum_{j=1}^K r_jw(j)}
\qquad (1\le i\le K),
\]
and the total tabulated mass of \(\Bad\) is \(p\).
For \(t\ge 0\), define
\[
  \Theta(t):=\frac{\sqrt{1+t}-1}{\sqrt{1+t}+1}.
\]

\begin{theorem}[Directional batchwise asymmetric manifest]
\label{thm:batchwise-smooth}
Under the directional asymmetric manifest certificate above, define
\begin{align*}
\varepsilon_{\mathrm{upper}}(p) &:= (1-p)\delta_{\mathrm{upper}}+p\Delta_{\mathrm{upper}},\\
\eta_{\mathrm{good}}
&:=(1+\delta_{\mathrm{lower}})(1+\delta_{\mathrm{upper}})-1,\\
\chi_{\mathrm{bad}}
&:=\max\left\{
  \Delta_{\mathrm{upper}},
  \frac{\Delta_{\mathrm{lower}}}{1+\Delta_{\mathrm{lower}}}
\right\},\\
\Gamma_{\mathrm{tv}}(p)
&:=\min\left\{
  1,
  \Theta(\eta_{\mathrm{good}})
  +\frac{(1+\delta_{\mathrm{lower}})p}{1+\delta_{\mathrm{lower}}p}
   \chi_{\mathrm{bad}}
\right\},\\
\tau_{\mathrm{lower}}(p)&:=
\left(
  \frac{1-p}{1+\delta_{\mathrm{lower}}}+\frac{p}{1+\Delta_{\mathrm{lower}}}
\right)^{-1}-1
\end{align*}
Then
\[
\frac{1}{1+\varepsilon_{\mathrm{upper}}(p)}
\le
\frac{M}{N}
\le
1+\tau_{\mathrm{lower}}(p),
\]
and
\[
d_{\mathrm{TV}}(r,\nu)
\le
\Gamma_{\mathrm{tv}}(p).
\]
When \(\Delta_{\mathrm{lower}}=\infty\), the ratio
\(\Delta_{\mathrm{lower}}/(1+\Delta_{\mathrm{lower}})\) is interpreted as \(1\).
\end{theorem}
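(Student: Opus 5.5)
Two bounds are needed: the size-ratio bound and the TV bound. Both follow from writing everything in terms of the weights $w(i)=n_i/m_i$ and the tabulated law $r$.

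\textbf{Size ratio.} Observe that
\[
N/M=\sum_i n_i/M=\sum_i r_i w(i),
\]
so $N/M$ is the $r$-average of $w$. For the upper bound on $N/M$, split the average into good and bad batches. The good batches carry mass $1-p$ with $w\le 1+\delta_{\mathrm{upper}}$, and the bad batches carry mass $p$ with $w\le 1+\Delta_{\mathrm{upper}}$. This gives
\[
N/M\le (1-p)(1+\delta_{\mathrm{upper}})+p(1+\Delta_{\mathrm{upper}})=1+\varepsilon_{\mathrm{upper}}(p),
\]
which is the left inequality after inverting. Likewise the lower bounds on $w$ give
\[
N/M\ge \frac{1-p}{1+\delta_{\mathrm{lower}}}+\frac{p}{1+\Delta_{\mathrm{lower}}}=(1+\tau_{\mathrm{lower}}(p))^{-1},
\]
with $1/(1+\infty)=0$. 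This step is routine.

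\textbf{TV bound, decomposition.} The trivial bound $d_{\mathrm{TV}}\le 1$ accounts for the $\min\{1,\cdot\}$. For the main term, I would decompose the reweighting into two stages and use the triangle inequality for $d_{\mathrm{TV}}$.

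Let $g(i)=w(i)$ for $i\notin\Bad$. Choose $g(i)$ for $i\in\Bad$ to be a value inside the good range, for example $g(i)=1/(1+\delta_{\mathrm{lower}})$. Set $\nu'\propto r g$. Then $g$ lies in $[1/(1+\delta_{\mathrm{lower}}),1+\delta_{\mathrm{upper}}]$ everywhere. Corollary~\ref{cor:sharp-reweight} with $\alpha=1/(1+\delta_{\mathrm{lower}})$ and $\beta=1+\delta_{\mathrm{upper}}$ gives
\[
d_{\mathrm{TV}}(r,\nu')\le \frac{\sqrt{\beta/\alpha}-1}{\sqrt{\beta/\alpha}+1}=\Theta(\eta_{\mathrm{good}}),
\]
since $\beta/\alpha=1+\eta_{\mathrm{good}}$.

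\textbf{TV bound, correction on $\Bad$.} Write $\nu\propto \nu' h$ with $h=w/g$. Then $h=1$ off $\Bad$, and on $\Bad$ we have $h=(1+\delta_{\mathrm{lower}})w$. Apply Lemma~\ref{lem:exceptional-reweight} with $P=\nu'$ and $s=\nu'(\Bad)$. Because $g$ takes its minimal value on $\Bad$, we get
\[
s=\frac{p\alpha}{p\alpha+\sum_{i\notin\Bad}r_i w(i)}\le \frac{p\alpha}{p\alpha+(1-p)\alpha}\cdot\text{(adjusted)}.
\]
I would redo this with care: the worst case takes $w$ as large as possible off $\Bad$ relative to $\alpha$, and I expect the result to be $s\le (1+\delta_{\mathrm{lower}})p/(1+\delta_{\mathrm{lower}}p)$ or a similar expression. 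The bounds on $h$ over $\Bad$ are
\[
h\le (1+\delta_{\mathrm{lower}})(1+\Delta_{\mathrm{upper}}),\qquad h\ge (1+\delta_{\mathrm{lower}})/(1+\Delta_{\mathrm{lower}}).
\]
These do not match $\chi_{\mathrm{bad}}=\max\{\Delta_{\mathrm{upper}},\Delta_{\mathrm{lower}}/(1+\Delta_{\mathrm{lower}})\}$ exactly. So I would instead pick $g$ on $\Bad$ to make the deviation symmetric. The alternative is to re-derive Lemma~\ref{lem:exceptional-reweight}'s argument directly, using $(h-z)_+\le u$ and $(z-h)_+\le z\ell$ with $u=\Delta_{\mathrm{upper}}$ and $\ell=\Delta_{\mathrm{lower}}/(1+\Delta_{\mathrm{lower}})$ relative to $g\equiv 1$ on $\Bad$. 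In that version, the factor $(1+\delta_{\mathrm{lower}})$ enters only through the mass bound $s$.

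\textbf{Main obstacle.} The difficulty is matching the constants exactly: the choice of $g$ on $\Bad$, and the bound on $\nu'(\Bad)$, which must be maximized over the good weights. My first attempt would take $g=1$ on $\Bad$, which gives the $\chi_{\mathrm{bad}}$ bounds on $h$ directly. I would then bound
\[
\nu'(\Bad)=\frac{p}{p+\sum_{i\notin\Bad}r_iw(i)}\le \frac{p}{p+(1-p)/(1+\delta_{\mathrm{lower}})}=\frac{(1+\delta_{\mathrm{lower}})p}{1+\delta_{\mathrm{lower}}p}.
\]
This matches the stated coefficient. In that case $g$ ranges over $[\alpha,\beta]\ni 1$, so Corollary~\ref{cor:sharp-reweight} still applies to the first stage. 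Adding the two stages yields $\Gamma_{\mathrm{tv}}(p)$.
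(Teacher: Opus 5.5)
Your final route is correct and coincides with the paper's proof, as does your size-ratio argument. That route takes $g\equiv 1$ on $\Bad$, bounds $d_{\mathrm{TV}}(r,\nu')\le\Theta(\eta_{\mathrm{good}})$ via Corollary~\ref{cor:sharp-reweight} on $[1/(1+\delta_{\mathrm{lower}}),1+\delta_{\mathrm{upper}}]\ni 1$, bounds $\nu'(\Bad)\le(1+\delta_{\mathrm{lower}})p/(1+\delta_{\mathrm{lower}}p)$, and finishes with Lemma~\ref{lem:exceptional-reweight}, the triangle inequality and the cap at $1$; your $g$ and $\nu'$ are exactly the paper's reference weights $v$ and intermediate law $q$. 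The detour with $g=1/(1+\delta_{\mathrm{lower}})$ on $\Bad$ and the unfinished ``(adjusted)'' inequality are not needed and should be deleted.
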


\begin{proof}
By definition,
\[
\sum_i r_iw(i)
=
\sum_i \frac{m_i}{M}\cdot\frac{n_i}{m_i}
=
\frac{N}{M}.
\]
Hence the normalized law induced by the weights is exactly
\[
\nu_i=\frac{r_iw(i)}{\sum_{j=1}^K r_jw(j)}
=
\frac{r_iw(i)}{N/M}
=
\frac{n_i}{N}.
\]
For the size ratio, the upper-side bounds give
\begin{align*}
\frac{N}{M}
&=
\sum_i r_iw(i)
\le
(1-p)(1+\delta_{\mathrm{upper}})+p(1+\Delta_{\mathrm{upper}})\\
&=
1+\varepsilon_{\mathrm{upper}}(p),
\end{align*}
so
\[
\frac{M}{N}\ge \frac{1}{1+\varepsilon_{\mathrm{upper}}(p)}.
\]
The lower-side bounds give
\begin{align*}
\frac{N}{M}
&=
\sum_i r_iw(i)\\
&\ge
\frac{1-p}{1+\delta_{\mathrm{lower}}}
+
\frac{p}{1+\Delta_{\mathrm{lower}}}
=
\frac{1}{1+\tau_{\mathrm{lower}}(p)},
\end{align*}
so
\[
\frac{M}{N}\le 1+\tau_{\mathrm{lower}}(p).
\]
For total variation, set
\[
  a:=\frac{1}{1+\delta_{\mathrm{lower}}},
  \qquad
  b:=1+\delta_{\mathrm{upper}},
\]
and introduce reference weights
\[
  v(i):=
  \begin{cases}
    w(i),&i\notin\Bad,\\
    1,&i\in\Bad.
  \end{cases}
\]
Let \(A:=\sum_i r_iv(i)\) and \(q_i:=r_iv(i)/A\). Since \(a\le v(i)\le b\),
the sharp bounded-reweighting estimate of Corollary~\ref{cor:sharp-reweight}
gives
\begin{align*}
  d_{\mathrm{TV}}(r,q)
  &\le
  \frac{\sqrt b-\sqrt a}{\sqrt b+\sqrt a}\\
  &=\Theta\!\left((1+\delta_{\mathrm{lower}})
                   (1+\delta_{\mathrm{upper}})-1\right)
  =\Theta(\eta_{\mathrm{good}}).
\end{align*}
Also,
\[
  A\ge a(1-p)+p
  =\frac{1+\delta_{\mathrm{lower}}p}{1+\delta_{\mathrm{lower}}},
\]
and hence
\[
  q(\Bad)=\frac{p}{A}
  \le\frac{(1+\delta_{\mathrm{lower}})p}
           {1+\delta_{\mathrm{lower}}p}.
\]
The law \(\nu\) is obtained from \(q\) by reweighting with
\(h(i):=w(i)/v(i)\). This reweighting is \(1\) off \(\Bad\), while on
\(\Bad\),
\[
  1-\frac{\Delta_{\mathrm{lower}}}{1+\Delta_{\mathrm{lower}}}
  \le h(i)\le 1+\Delta_{\mathrm{upper}}.
\]
Lemma~\ref{lem:exceptional-reweight} therefore gives
\[
  d_{\mathrm{TV}}(q,\nu)
  \le
  \frac{(1+\delta_{\mathrm{lower}})p}
       {1+\delta_{\mathrm{lower}}p}\,\chi_{\mathrm{bad}}.
\]
The triangle inequality, together with the universal bound
\(d_{\mathrm{TV}}\le1\), proves the claim. This joint construction is needed
because lower and upper exceptional weights share the same normalizing constant.
\end{proof}

\paragraph{Symmetric specialization.}
When the body assumes the same upper and lower directional bounds,
\begin{align*}
\delta_{\mathrm{lower}}&=\delta_{\mathrm{upper}}=\delta,\\
\Delta_{\mathrm{lower}}&=\Delta_{\mathrm{upper}}=\Delta,
\end{align*}
the directional quantities collapse to the symmetric expressions
\begin{align*}
\varepsilon_{\mathrm{upper}}(p)&=\varepsilon(p),\\
\tau_{\mathrm{lower}}(p)&=\tau(p),\\
\eta_{\mathrm{good}}&=(1+\delta)^2-1,\\
\chi_{\mathrm{bad}}&=\Delta.
\end{align*}
Consequently,
\[
  \Gamma_{\mathrm{tv}}(p)
  =\min\left\{
    1,
    \frac{\delta}{2+\delta}
    +\frac{(1+\delta)p\Delta}{1+\delta p}
  \right\},
\]
whose uncapped form is the bound used in the body.

Applications of this symmetric parameterization to polling and comparison
audits are given in \appref{app:asymmetric-manifests}; the direct-selection
application follows next.

\subsection{Ballot-discrepancy bound for direct selection}
\label{app:direct-selection-discrepancy}

\begin{theorem}[Ballot-discrepancy lower bound under directional asymmetric manifest]
\label{thm:disc-lb-mostly-good-manifest}
Let \(\Ballots\) be partitioned into batches \(1,\dots,K\), with actual sizes
\(n_1,\dots,n_K>0\) and reported sizes \(m_1,\dots,m_K>0\). Let
\[
N:=\sum_{i=1}^K n_i=|\Ballots|,
\qquad
M:=\sum_{i=1}^K m_i=|T|.
\]
Assume the tabulation \(T\) is uniquely labeled, and let \(\ballot\) be the random ballot
obtained by first choosing batch \(i\) with probability \(r_i=m_i/M\) and then
choosing a ballot uniformly from that batch.

Suppose that the reported outcome is wrong and that the actual batch law \(\nu\)
is a directional \((p,\delta_{\mathrm{lower}},\delta_{\mathrm{upper}},\Delta_{\mathrm{lower}},\Delta_{\mathrm{upper}})\)-asymmetric manifest of the tabulated batch law
\(r\).

Then
\begin{align*}
\Exp[\disc_T(\ballot)]
&\ge
\min\!\left\{
  \frac{\mu^\tab}{1+\varepsilon_{\mathrm{upper}}(p)},
  \;
  \begin{aligned}\mu^\tab(1+\tau_{\mathrm{lower}}(p)) \\- \tau_{\mathrm{lower}}(p)\end{aligned}
\right\}\\
&-2\kappa
-4\Gamma_{\mathrm{tv}}(p).
\end{align*}
\end{theorem}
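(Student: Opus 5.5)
The plan is to obtain the statement as a direct composition of Theorem~\ref{thm:general-ballot-law} with Theorem~\ref{thm:batchwise-smooth}. Concretely, I take \(\pi\) to be the ballot law induced by the tabulated batch law \(r\): pick batch \(i\) with probability \(r_i=m_i/M\), then a uniform ballot in \(\Ballots_i\). In the notation of Lemma~\ref{lem:batch-ballot-tv}, this is \(\pi=\pi_r\).

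First I identify the uniform ballot law. Since \(\nu_i=n_i/N\), the final assertion of Lemma~\ref{lem:batch-ballot-tv} gives \(\pi_\nu=U\). The same lemma then gives
\[
d_{\mathrm{TV}}(\pi,U)=d_{\mathrm{TV}}(\pi_r,\pi_\nu)=d_{\mathrm{TV}}(r,\nu).
\]
By Theorem~\ref{thm:batchwise-smooth} this is at most \(\Gamma_{\mathrm{tv}}(p)\), so I set \(\gamma:=\Gamma_{\mathrm{tv}}(p)\).

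Second, the same theorem gives the size-ratio bounds
\[
\frac{1}{1+\varepsilon_{\mathrm{upper}}(p)}\le \lambda=\frac{M}{N}\le 1+\tau_{\mathrm{lower}}(p).
\]
So I apply the endpoint form of Theorem~\ref{thm:general-ballot-law} with \(\varepsilon_-:=\varepsilon_{\mathrm{upper}}(p)\) and \(\varepsilon_+:=\tau_{\mathrm{lower}}(p)\). The reported outcome is assumed wrong, so the invalid-election branch applies. That branch yields
\[
\min\left\{\frac{\mu^\tab}{1+\varepsilon_{\mathrm{upper}}(p)},\ \mu^\tab(1+\tau_{\mathrm{lower}}(p))-\tau_{\mathrm{lower}}(p)\right\}-2\kappa-4\Gamma_{\mathrm{tv}}(p),
\]
which is exactly the claimed bound.

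The uniqueness of CVR labels is what Lemma~\ref{lem:uniform-ballot} needs: each row \(r\) is matched by \(t_r\) ballots, and \(d\le\kappa N\).

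The only step needing care is the bookkeeping of directions. The lower bound on \(M/N\) comes from the \emph{upper} weight bound, and the upper bound on \(M/N\) from the lower one, and these must be matched correctly to \(\varepsilon_\pm\). I will also confirm that the piecewise-linear minimization in Theorem~\ref{thm:general-ballot-law} covers the case where the interval lies entirely on one side of \(1\). Taking the minimum of the two endpoint values is still a valid lower bound there, since \(g\) is concave (slope \(\mu^\tab\) below \(1\), slope \(\mu^\tab-1\) above), so its minimum on an interval is always at an endpoint.
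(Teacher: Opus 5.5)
Your proposal is correct and follows essentially the same route as the paper: you identify $\pi_\nu=U$ and transfer the batch-level TV bound to ballots via Lemma~\ref{lem:batch-ballot-tv}, take the size-ratio and TV bounds from Theorem~\ref{thm:batchwise-smooth}, and substitute them into the endpoint form of Theorem~\ref{thm:general-ballot-law} with $\varepsilon_-=\varepsilon_{\mathrm{upper}}(p)$ and $\varepsilon_+=\tau_{\mathrm{lower}}(p)$. Your concavity remark on $g(\lambda)=\mu^\tab\lambda-(\lambda-1)_+$ is a small but useful addition, because the paper's proof of Theorem~\ref{thm:general-ballot-law} justifies the endpoint minimization only by piecewise linearity.
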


\begin{proof}
Let \(\pi_r\) denote the ballot law induced by the tabulated batch distribution
\(r\), and let \(\pi_\nu\) denote the ballot law induced by the actual batch
distribution \(\nu\). By Lemma~\ref{lem:batch-ballot-tv},
\[
d_{\mathrm{TV}}(\pi_r,\pi_\nu)=d_{\mathrm{TV}}(r,\nu).
\]
By Theorem~\ref{thm:batchwise-smooth},
\[
d_{\mathrm{TV}}(r,\nu)\le \Gamma_{\mathrm{tv}}(p)
\]
and
\[
\frac{1}{1+\varepsilon_{\mathrm{upper}}(p)}
\le
\frac{M}{N}
\le
1+\tau_{\mathrm{lower}}(p).
\]

The abstract discrepancy comparison from \appref{app:reweighting} then yields
\begin{align*}
\Exp[\disc_T(\ballot)]
&\ge
\min\!\left\{
  \frac{\mu^\tab}{1+\varepsilon_{\mathrm{upper}}(p)},
  \;
  \begin{aligned}
  \mu^\tab(1+\tau_{\mathrm{lower}}(p)) \\- \tau_{\mathrm{lower}}(p)\end{aligned}
\right\}\\
&-2\kappa
-4\Gamma_{\mathrm{tv}}(p).
\end{align*}
\end{proof}

\subsection{Upper-Bound-Only Specialization}
\label{app:upper-bound-manifests}

We now isolate the special case in which the declared manifest is only required
to be an upper bound on the true batch sizes. This is the regime
\[
  \Delta_{\mathrm{upper}} = 0
  \qquad\text{and}\qquad
  \Delta_{\mathrm{lower}} = \infty.
\]
In this setting, Theorem~\ref{thm:batchwise-smooth} already gives the stated
TV expression under the convention for
\(\Delta_{\mathrm{lower}}=\infty\). We state the test-specific one-sided
result separately because it also records the size and per-ballot sampling
guarantees used by the audit.

Let
\[
\begin{aligned}
  N &:= \sum_{i=1}^K n_i,
  &\qquad M &:= \sum_{i=1}^K m_i,\\
  u_i &:= \frac{n_i}{N},
  & r_i &:= \frac{m_i}{M}.
\end{aligned}
\]
As usual, \(u\) is the true size-proportional batch law and \(r\) is the
declared-size batch law. The corresponding ballot law \(\pi_r\) first draws a
batch \(i\) according to \(r_i\) and then draws a ballot uniformly from the
actual batch \(\Ballots_i\). Let \(U\) denote the uniform law on all physical
ballots.

Fix \(\delta\ge0\). The \emph{upper-bound condition} is
\[
  0<n_i\le m_i
  \qquad (1\le i\le K).
\]
Under this condition, define
\[
  \Bad_\delta
  :=
  \left\{i\in[K]:n_i<\frac{m_i}{1+\delta}\right\}.
\]
The statistical manifest test used below draws \(k\) batch indices
independently from \(r\), observes the true size of each sampled batch, and
accepts only if none of the sampled indices lies in \(\Bad_\delta\).

\begin{definition}[One-sided contaminated thinning]
\label{def:one-sided-thinning}
Let \(P\) and \(Q\) be probability distributions on a finite set \(X\). We say
that \(Q\) is a \((p,\delta)\)-one-sided contaminated thinning of \(P\) if
there is a function \(w:X\to(0,1]\) and a set \(\Bad\subseteq X\) such that
\[
  Q(x)=\frac{P(x)w(x)}{\sum_{y\in X}P(y)w(y)}
  \qquad\text{for all }x\in X,
\]
\[
  \frac{1}{1+\delta}\le w(x)\le 1
  \qquad\text{for all }x\notin\Bad,
\]
and
\[
  P(\Bad)\le p.
\]
\end{definition}

\begin{theorem}[Upper-bound manifest certificate]
\label{thm:upper-bound-certificate}
Assume the upper-bound condition.  Fix
\(\alpha_{\mathrm{ub}}\in(0,1)\) and \(p^\star\in(0,1)\).  If
\[
  k \ge
  \left\lceil
    \frac{\log(1/\alpha_{\mathrm{ub}})}{p^\star}
  \right\rceil,
\]
then, except with probability at most \(\alpha_{\mathrm{ub}}\), acceptance of
the statistical manifest test implies all of the following:
\begin{enumerate}[label=(\roman*)]
  \item The bad set has small declared mass:
  \[
    r(\Bad_\delta)\le p^\star.
  \]

  \item The true batch law \(u\) is a
  \((p^\star,\delta)\)-one-sided contaminated thinning of the declared batch
  law \(r\).

  \item The total actual size is controlled by
    \[
    \frac{1-p^\star}{1+\delta} M
    \le
    N
    \le
    M.
    \]

  \item The induced ballot law satisfies
  \[
    \TV(\pi_r,U)
    =
    \TV(r,u)
    \le
    \Gamma_{\mathrm{ub}}(p^\star,\delta),
    \]
    where
    \[
      \Gamma_{\mathrm{ub}}(p,\delta)
      :=
      \min\left\{
        1,\,
        \frac{\sqrt{1+\delta}-1}{\sqrt{1+\delta}+1}
        +
        \frac{(1+\delta)p}{1+\delta p}
      \right\}.
    \]

  \item Every physical ballot has sampling probability at least \(1/M\) under
  \(\pi_r\):
  \[
    \pi_r(b)\ge \frac{1}{M}
    \qquad\text{for every }b\in\Ballots.
  \]
\end{enumerate}
\end{theorem}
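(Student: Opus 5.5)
The proof splits into one probabilistic step and several deterministic consequences of the conclusion $r(\Bad_\delta)\le p^\star$.

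\textbf{Step 1: the probabilistic part, item (i).} Suppose $r(\Bad_\delta)>p^\star$. Each of the $k$ independent batch draws from $r$ lands in $\Bad_\delta$ with probability $r(\Bad_\delta)$. The test accepts only if no draw does, so
\[
  \Pr[\text{accept}] \le (1-p^\star)^k \le e^{-kp^\star} \le \alpha_{\mathrm{ub}},
\]
by the choice of $k$. This is the same argument as for $\Psi$ in the main text. Hence, outside an event of probability at most $\alpha_{\mathrm{ub}}$, acceptance implies $r(\Bad_\delta)\le p^\star$. We condition on this event; items (ii)--(v) then follow from it, and (v) even holds unconditionally.

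\textbf{Step 2: items (ii) and (iii).} Take $w(i):=n_i/m_i\in(0,1]$, which uses the upper-bound condition. As in the proof of Theorem~\ref{thm:batchwise-smooth}, $\sum_i r_iw(i)=N/M$, so $u_i=r_iw(i)/\sum_j r_jw(j)$. Off $\Bad_\delta$ the definition gives $w(i)\ge 1/(1+\delta)$. Together with $r(\Bad_\delta)\le p^\star$, this is exactly a $(p^\star,\delta)$-one-sided contaminated thinning, proving (ii).

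For (iii), $N\le M$ is immediate from $n_i\le m_i$. For the lower bound, drop the bad batches:
\[
  \frac{N}{M}=\sum_i r_iw(i)\ge \frac{1-r(\Bad_\delta)}{1+\delta}\ge \frac{1-p^\star}{1+\delta}.
\]

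\textbf{Step 3: item (iv).} By Lemma~\ref{lem:batch-ballot-tv}, $\TV(\pi_r,U)=\TV(r,u)$, since $\pi_u=U$. I would then apply Theorem~\ref{thm:batchwise-smooth} with parameters $\delta_{\mathrm{upper}}=\Delta_{\mathrm{upper}}=0$, $\delta_{\mathrm{lower}}=\delta$, $\Delta_{\mathrm{lower}}=\infty$ and $p=r(\Bad_\delta)$.

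With these parameters, $\eta_{\mathrm{good}}=\delta$ and $\chi_{\mathrm{bad}}=\max\{0,1\}=1$ under the $\infty$ convention. The resulting bound is
\[
  \Gamma_{\mathrm{tv}}
  =\min\left\{1,\ \Theta(\delta)+\frac{(1+\delta)p}{1+\delta p}\right\},
  \qquad
  \Theta(\delta)=\frac{\sqrt{1+\delta}-1}{\sqrt{1+\delta}+1},
\]
which matches $\Gamma_{\mathrm{ub}}(p,\delta)$.

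Finally, $p\mapsto (1+\delta)p/(1+\delta p)$ is increasing; its derivative is $(1+\delta)/(1+\delta p)^2>0$. So replacing $p$ by $p^\star$ gives $\Gamma_{\mathrm{ub}}(p^\star,\delta)$.

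\textbf{Main obstacle.} The one place needing care is confirming that Theorem~\ref{thm:batchwise-smooth} genuinely applies with $\Delta_{\mathrm{lower}}=\infty$. In that theorem, $\tau_{\mathrm{lower}}$ becomes $(1-p)^{-1}(1+\delta)-1$ and may be large, but that quantity is not used here. The TV part relies on Lemma~\ref{lem:exceptional-reweight} with $\ell=1$; the lemma allows $\ell=1$, since $h>0$ holds on $\Bad_\delta$. I would re-trace that lemma's proof in this case: $h=w\le 1$, and one must check $(z-h)_+\le z$. This holds because $h>0$, so the bound goes through.

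\textbf{Step 4: item (v).} Every ballot $b\in\Ballots_i$ has
\[
  \pi_r(b)=\frac{r_i}{n_i}=\frac{m_i}{Mn_i}\ge \frac{1}{M},
\]
since $m_i\ge n_i$. This is deterministic and needs no conditioning on acceptance.
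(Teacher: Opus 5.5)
Your proposal is correct and follows the paper's proof. Both use the same $(1-p^\star)^k\le e^{-kp^\star}$ acceptance bound, the same weights $w_i=n_i/m_i$ for (ii)--(iii), and the same computation $\pi_r(b)=m_i/(Mn_i)\ge 1/M$ for (v). The only difference is in (iv): you invoke Theorem~\ref{thm:batchwise-smooth} with $\Delta_{\mathrm{lower}}=\infty$, which the paper itself notes is possible. The paper instead inlines the identical intermediate-weight argument, using Lemma~\ref{lem:general-reweight} rather than Lemma~\ref{lem:exceptional-reweight} for the bad-set step.
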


\begin{proof}
If \(p=r(\Bad_\delta)\ge p^\star\), then a single draw from \(r\) avoids
\(\Bad_\delta\) with probability at most \(1-p^\star\).  Hence the probability
that all \(k\) samples avoid \(\Bad_\delta\), and therefore that the test
accepts, is at most
\[
  (1-p^\star)^k
  \le
  \exp(-kp^\star)
  \le
  \alpha_{\mathrm{ub}}.
\]
Thus, except with probability at most \(\alpha_{\mathrm{ub}}\), acceptance
implies \(r(\Bad_\delta)\le p^\star\).

Define
\[
  w_i:=\frac{n_i}{m_i}.
\]
By assumption, \(0<w_i\le 1\).  For every \(i\notin\Bad_\delta\), the test
condition defining goodness gives
\[
  \frac{1}{1+\delta}\le w_i\le 1.
\]
Also,
\[
  \sum_j r_jw_j
  =
  \sum_j \frac{m_j}{M}\frac{n_j}{m_j}
  =
  \frac{N}{M}.
\]
Therefore
\[
  \frac{r_iw_i}{\sum_j r_jw_j}
  =
  \frac{(m_i/M)(n_i/m_i)}{N/M}
  =
  \frac{n_i}{N}
  =
  u_i,
\]
so \(u\) is a \((p^\star,\delta)\)-one-sided contaminated thinning of \(r\).

The same identity yields
\[
  \frac{N}{M}
  =
  \sum_i r_iw_i.
\]
Since \(w_i\le 1\) for all \(i\), \(N/M\le 1\).  Since \(w_i\ge 1/(1+\delta)\)
outside a set of \(r\)-mass at most \(p^\star\), and \(w_i>0\) on the bad set,
\[
  \frac{N}{M}
  =
  \sum_i r_iw_i
  \ge
  \sum_{i\notin\Bad_\delta}r_iw_i
  \ge
  \frac{1-p^\star}{1+\delta}.
\]

Next we bound total variation.  Let \(a:=1/(1+\delta)\).  Define an
intermediate weight
\[
  v_i :=
  \begin{cases}
    w_i, & i\notin\Bad_\delta,\\
    1, & i\in\Bad_\delta,
  \end{cases}
\]
and let
\[
  A:=\sum_i r_iv_i,
  \qquad
  q_i:=\frac{r_iv_i}{A}.
\]
Because \(a\le v_i\le 1\) for every \(i\),
Corollary~\ref{cor:sharp-reweight} gives
\[
  d_{\mathrm{TV}}(r,q)
  \le
  \frac{1-\sqrt{a}}{1+\sqrt{a}}
  =
  \frac{\sqrt{1+\delta}-1}{\sqrt{1+\delta}+1}.
\]
Now \(u\) is obtained from \(q\) by a further reweighting that is equal to \(1\)
off \(\Bad_\delta\) and equal to \(w_i\le 1\) on \(\Bad_\delta\).
Lemma~\ref{lem:general-reweight} gives
\[
  d_{\mathrm{TV}}(q,u)
  \le
  q(\Bad_\delta).
\]
Write $p:=r(\Bad_\delta)\le p^\star$. Since $v_i=1$ on
$\Bad_\delta$ and $v_i\ge 1/(1+\delta)$ elsewhere,
\[
  A=\sum_i r_iv_i
  \ge \frac{1-p}{1+\delta}+p
  =\frac{1+\delta p}{1+\delta}.
\]
Consequently,
\[
  q(\Bad_\delta)
  =\frac{p}{A}
  \le\frac{(1+\delta)p}{1+\delta p}
  \le\frac{(1+\delta)p^\star}{1+\delta p^\star},
\]
where the last inequality follows because
$x\mapsto (1+\delta)x/(1+\delta x)$ is increasing on $[0,1]$.
The triangle inequality proves the displayed bound on \(d_{\mathrm{TV}}(r,u)\), capped by
\(1\) since total variation is always at most \(1\).

Finally, for any ballot \(b\in\Ballots_i\),
\[
  \pi_r(b)
  =
  r_i\frac{1}{n_i}
  =
  \frac{m_i}{Mn_i}
  =
  \frac{1}{Mw_i}.
\]
Since \(w_i\le 1\), this is at least \(1/M\).  If \(i\notin\Bad_\delta\), then
\(w_i\ge 1/(1+\delta)\), so \(\pi_r(b)\le (1+\delta)/M\).
\end{proof}

\subsection{Bounded One-Sided Specialization}
\label{app:bounded-one-sided}

The upper-bound-only model is the limit of the following bounded one-sided
specialization as \(\Delta\to\infty\). The finite-\(\Delta\) version gives a
sharper TV bound.

\begin{corollary}[Bounded one-sided specialization]
\label{cor:bounded-one-sided}
Suppose that
\[
  \delta_{\mathrm{lower}}=\delta,\qquad
  \delta_{\mathrm{upper}}=\Delta_{\mathrm{upper}}=0,\qquad
  \Delta_{\mathrm{lower}}=\Delta.
\]
Set
\begin{align*}
  z_{\rightarrow}(p,\delta,\Delta)
  &:=
  \frac{1-p}{1+\delta}+\frac{p}{1+\Delta},
  \\
  \tau_{\rightarrow}(p,\delta,\Delta)
  &:=
  z_{\rightarrow}(p,\delta,\Delta)^{-1}-1,
  \\
  \Gamma_{\rightarrow}(p,\delta,\Delta)
  &:=
  \min\left\{
      \Theta(\Delta),
      \Theta(\delta)
      +\frac{(1+\delta)p}{1+\delta p}
       \frac{\Delta-\delta}{1+\Delta}
    \right\}.
\end{align*}
Then the following sharper TV bound is available in the bounded one-sided case:
\[
  \frac{N}{M}
  \ge
  z_{\rightarrow}(p,\delta,\Delta),
  \qquad
  \frac{M}{N}
  \le
  1+\tau_{\rightarrow}(p,\delta,\Delta),
\]
and
\[
  d_{\mathrm{TV}}(\pi_r,U)
  =
  d_{\mathrm{TV}}(r,u)
  \le
  \Gamma_{\rightarrow}(p,\delta,\Delta).
\]
\end{corollary}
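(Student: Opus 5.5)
The statement is Corollary~\ref{cor:bounded-one-sided}, and I plan to prove it by redoing the argument of Theorem~\ref{thm:batchwise-smooth} in the special case $\delta_{\mathrm{upper}}=\Delta_{\mathrm{upper}}=0$, $\delta_{\mathrm{lower}}=\delta$, $\Delta_{\mathrm{lower}}=\Delta$. The only point where I will sharpen that argument is the exceptional-set reweighting step. Set $w(i)=n_i/m_i$. The hypotheses become $w(i)\in[1/(1+\delta),1]$ off $\Bad$ and $w(i)\in[1/(1+\Delta),1]$ everywhere. By the identity $\sum_i r_iw(i)=N/M$, the lower bound on $N/M$ is immediate: average the lower weight bounds with masses $1-p$ and $p$, which gives $z_{\rightarrow}$. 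Inverting this gives $M/N\le 1+\tau_{\rightarrow}$. The identity $d_{\mathrm{TV}}(\pi_r,U)=d_{\mathrm{TV}}(r,u)$ follows from Lemma~\ref{lem:batch-ballot-tv}, using that $u$ lifts to $U$.

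For the TV bound I will prove each term of the minimum separately.

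\textbf{First term.} All weights lie in $[1/(1+\Delta),1]$. Corollary~\ref{cor:sharp-reweight} with $\alpha=1/(1+\Delta)$ and $\beta=1$ then gives
\[
\frac{1-\sqrt{\alpha}}{1+\sqrt{\alpha}}=\frac{\sqrt{1+\Delta}-1}{\sqrt{1+\Delta}+1}=\Theta(\Delta).
\]

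\textbf{Second term.} I will use the intermediate law $q$ built from the weights $v(i)=w(i)$ off $\Bad$ and $v(i)$ on $\Bad$ chosen at the top of the good range. Taking $v(i)=1$ as in the earlier proofs, Corollary~\ref{cor:sharp-reweight} with $\alpha=1/(1+\delta)$, $\beta=1$ gives $d_{\mathrm{TV}}(r,q)\le\Theta(\delta)$. The bound $A\ge(1-p)/(1+\delta)+p$ then yields
\[
q(\Bad)\le\frac{(1+\delta)p}{1+\delta p}.
\]
Next comes the second reweighting, $h=w/v$, which equals $1$ off $\Bad$. If I keep $v=1$ on $\Bad$, then $h\in[1/(1+\Delta),1]$ there, and Lemma~\ref{lem:exceptional-reweight} only gives the factor $\Delta/(1+\Delta)$, which is weaker than the claimed factor $(\Delta-\delta)/(1+\Delta)$.

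The refinement is therefore to choose $v(i)=1/(1+\delta)$ on $\Bad$, the bottom of the good range. Then $v$ still lies in $[1/(1+\delta),1]$, so the bound $d_{\mathrm{TV}}(r,q)\le\Theta(\delta)$ survives. On $\Bad$ the ratio $h=(1+\delta)w$ lies in $[(1+\delta)/(1+\Delta),\,1+\delta]$. This has downward deviation exactly $(\Delta-\delta)/(1+\Delta)$, but it also has an upward deviation of $\delta$, which breaks the $\max\{\ell,u\}$ form of Lemma~\ref{lem:exceptional-reweight}.

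This is the main obstacle: the two choices of $v$ give the correct deviation but the wrong direction or the wrong bound on $q(\Bad)$, and a single choice must deliver both. I would try three remedies in order. First, choose $v$ on $\Bad$ per batch as $\max\{w(i),1/(1+\delta)\}$. This gives $h\le1$, with a downward deviation of at most $1-(1+\delta)/(1+\Delta)=(\Delta-\delta)/(1+\Delta)$, and it still keeps $v$ inside the good range. Second, bound $q(\Bad)$ for this $v$. The weights of $v$ are now smaller on $\Bad$, so the estimate on $A$ shrinks, but $q(\Bad)=\sum_{\Bad}r_iv(i)/A$ also has a smaller numerator; because the ratio is monotone in the $\Bad$-weights, it is maximized when $v=1$ on $\Bad$. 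That extremal case recovers the bound $(1+\delta)p/(1+\delta p)$. Third, apply Lemma~\ref{lem:exceptional-reweight} with $\ell=(\Delta-\delta)/(1+\Delta)$ and $u=0$, then use the triangle inequality $d_{\mathrm{TV}}(r,u)\le d_{\mathrm{TV}}(r,q)+d_{\mathrm{TV}}(q,u)$. Taking the minimum with the first term finishes the proof.
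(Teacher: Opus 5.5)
Your argument is correct and is essentially the paper's proof. The paper also takes the global bound $\Theta(\Delta)$ from Corollary~\ref{cor:sharp-reweight}, and for the second term it uses exactly your intermediate weight $v(i)=\max\{w(i),1/(1+\delta)\}$, bounds $q(\Bad)\le p/(a(1-p)+p)$ with $a=1/(1+\delta)$, and applies Lemma~\ref{lem:exceptional-reweight} with $h=w/v\in[(1+\delta)/(1+\Delta),1]$ on $\Bad$ together with the triangle inequality. Your ``three remedies'' are not alternatives but the three consecutive steps of that one argument, so you can drop the exploratory detours through $v=1$ and $v=1/(1+\delta)$ on $\Bad$.
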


\begin{proof}
The size bound follows by averaging the lower weight bounds. For TV, the
global interval \(w(i)\in[1/(1+\Delta),1]\) and
Corollary~\ref{cor:sharp-reweight} give
\[
  d_{\mathrm{TV}}(r,u)\le\Theta(\Delta).
\]
For the second bound, put \(a:=1/(1+\delta)\), define
\(v(i):=\max\{w(i),a\}\), and let \(q\) be the normalization of \(r_iv(i)\).
Then \(a\le v(i)\le1\), so
\(d_{\mathrm{TV}}(r,q)\le\Theta(\delta)\). Moreover, \(v=w\) off
\(\Bad\), and
\[
  q(\Bad)
  \le\frac{p}{a(1-p)+p}
  =\frac{(1+\delta)p}{1+\delta p}.
\]
The remaining reweighting \(h=w/v\) equals \(1\) off \(\Bad\) and satisfies
\[
  1-\frac{\Delta-\delta}{1+\Delta}\le h\le1
\]
on \(\Bad\). Lemma~\ref{lem:exceptional-reweight} and the triangle inequality
give the second term in the definition of \(\Gamma_{\rightarrow}\). Taking
the better of the global and contaminated bounds proves the claim.
\end{proof}

\subsection{Comparison of TV bounds}
\label{sec:compare-three-tv-bounds}

Throughout this section, we write the manifest parameters in directional form
\[
  (\delta_{\mathrm{lower}},\delta_{\mathrm{upper}},
   \Delta_{\mathrm{lower}},\Delta_{\mathrm{upper}}),
\]
with the symmetric case in the body recovered by setting
\[
  \delta_{\mathrm{lower}}=\delta_{\mathrm{upper}}=\delta
  \qquad\text{and}\qquad
  \Delta_{\mathrm{lower}}=\Delta_{\mathrm{upper}}=\Delta.
\]
The three manifest models considered in this paper differ only in the allowed
weight intervals:
\begin{enumerate}
\item the symmetric (SYM) model, where
  \(\delta_{\mathrm{lower}}=\delta_{\mathrm{upper}}=\delta\) and
  \(\Delta_{\mathrm{lower}}=\Delta_{\mathrm{upper}}=\Delta\);
\item the upper-bound-only (UB) model, where
  \(\delta_{\mathrm{lower}}=\delta\),
  \(\delta_{\mathrm{upper}}=\Delta_{\mathrm{upper}}=0\), and
  \(\Delta_{\mathrm{lower}}=\infty\); and
\item the bounded one-sided (OS) model, where
  \(\delta_{\mathrm{lower}}=\delta\),
  \(\delta_{\mathrm{upper}}=\Delta_{\mathrm{upper}}=0\), and
  \(\Delta_{\mathrm{lower}}=\Delta\).
\end{enumerate}

\paragraph{TV bounds achieved by the three models.}
For compactness, write
\[
  C(p,\delta):=\frac{(1+\delta)p}{1+\delta p}.
\]
The exact bound expressions established above, and used in the numerical
comparisons, are
\begin{align*}
  \Gamma_{\mathrm{SYM}}(p,\delta,\Delta)
  &:=
  \min\left\{
    1,\,
    \frac{\delta}{2+\delta}+C(p,\delta)\Delta
  \right\},\\
  \Gamma_{\mathrm{UB}}(p,\delta)
  &:=
  \min\left\{
    1,\,
    \Theta(\delta)+C(p,\delta)
  \right\},\\
  \Gamma_{\mathrm{OS}}(p,\delta,\Delta)
  &:=
  \min\left\{
    \Theta(\Delta),\,
    \Theta(\delta)
    +C(p,\delta)\frac{\Delta-\delta}{1+\Delta}
  \right\}.
\end{align*}\
The OS model gives a uniformly no-worse TV bound than either of the other two
models. Indeed,
\[
  \Theta(\delta)\le\frac{\delta}{2+\delta}
\]
and
\[
  \frac{\Delta-\delta}{1+\Delta}\le\Delta,
  \qquad
  \frac{\Delta-\delta}{1+\Delta}\le1.
\]
Consequently,
\begin{align*}
  \Gamma_{\mathrm{OS}}(p,\delta,\Delta)
  &\le \Gamma_{\mathrm{SYM}}(p,\delta,\Delta),\\
  \Gamma_{\mathrm{OS}}(p,\delta,\Delta)
  &\le \Gamma_{\mathrm{UB}}(p,\delta).
\end{align*}
The first improvement comes from ruling out upward reweighting, which reduces
the good-batch term from \(\delta/(2+\delta)\) to \(\Theta(\delta)\). The second
comes from imposing a finite lower bound on bad-batch weights: the exceptional
set is charged only for the additional thinning permitted between \(\delta\)
and \(\Delta\), rather than being charged as though all of its mass could
disappear. The independent cap \(\Theta(\Delta)\) records that even if
\(p=1\), all weights remain in the global interval \([1/(1+\Delta),1]\).

The upper-bound-only and two-sided bounds are not uniformly ordered. When
\(\Delta\ge1\), the displayed upper-bound-only bound is no larger than the
corresponding two-sided bound. For \(0\le\Delta<1\), the upper-bound-only
bound is smaller exactly when
\[
  \frac{(1+\delta)p}{1+\delta p}(1-\Delta)
  \le
  \frac{\delta}{2+\delta}-\Theta(\delta),
\]
ignoring the harmless cap at \(1\). Thus the upper-bound-only model has a
better good-batch constant but a worse exceptional-set term when \(\Delta<1\).

For small \(p\), \(\delta\), and \(\Delta\), the three bounds have the
first-order behavior
\begin{align*}
  \Gamma_{\mathrm{SYM}}(p,\delta,\Delta)
  &=\frac{\delta}{2}+p\Delta
    +O(\delta^2+\delta p\Delta),\\
  \Gamma_{\mathrm{UB}}(p,\delta)
  &=\frac{\delta}{4}+p
    +O(\delta^2+\delta p),\\
  \Gamma_{\mathrm{OS}}(p,\delta,\Delta)
  &=\min\left\{
      \frac{\Delta}{4},
      \frac{\delta}{4}+p(\Delta-\delta)
    \right\}\\
    &+O(\delta^2+\Delta^2+\delta p\Delta).
\end{align*}
In particular, the bounded one-sided model charges the bad set in proportion
to the \emph{additional} distortion \(\Delta-\delta\). 

\section{Applications to Polling and Comparison Audits}
\label{app:asymmetric-manifests}

This appendix records the directional special cases that specialize the
batchwise theorem in \appref{app:auditor-analysis}. The body uses the
case where the upper and lower directional error bounds agree,
\begin{align*}
\delta_{\mathrm{lower}}&=\delta_{\mathrm{upper}}=\delta,\\
\Delta_{\mathrm{lower}}&=\Delta_{\mathrm{upper}}=\Delta,
\end{align*}
so that the directional quantities collapse to the symmetric expressions used
in the main text. \appref{app:auditor-analysis} keeps the two sides
separate so that the symmetric case is just one specialization of the same
proof.

Statistical manifests yield a directional ``$(p, \delta_{\mathrm{lower}}, \delta_{\mathrm{upper}}, \Delta_{\mathrm{lower}}, \Delta_{\mathrm{upper}})$-contaminated'' accounting of batch sizes: every true batch size lies between $1/(1+\Delta_{\mathrm{lower}})$ and $1+\Delta_{\mathrm{upper}}$ times its reported size, and batches comprising all but a $p$-fraction of the reported ballot mass satisfy the tighter bounds $1/(1+\delta_{\mathrm{lower}})$ and $1+\delta_{\mathrm{upper}}$. 

\subsection{Polling}

A standard polling audit is defined by introducing the
function $A: \Ballots \rightarrow \R$ given by $A(\ballot)=\winner_\ballot-\loser_\ballot$.
Then
\[
  \Exp[A(\ballot_U)] = \frac{\winner^{\act}-\loser^{\act}}{|\Ballots|},
\]
where $\ballot_U$ is drawn from the uniform distribution $U$ on $\Ballots$.
Standard results from probability theory guarantee that if $\ballot_1, \ldots, \ballot_k$
are drawn uniformly (with replacement) from $\Ballots$, then the empirical
mean converges to the expected value,
\[
  \frac{1}{k}\sum_i A(\ballot_i) \approx \Exp_{\ballot \leftarrow \Ballots}[A(\ballot)],
\]
and, furthermore, the error in this approximation satisfies suitable tail
bounds; for example,
\[
  \Pr\left[\left|\frac{1}{k}\sum_i A(\ballot_i) - \Exp_{\ballot \leftarrow \Ballots}[A(\ballot)]\right| \geq \lambda\right] \leq 2\exp\left(-\frac{\lambda^2k}{2M^2}\right),
\]
where $M = \max \{|A(\ballot)|\}$.

By Lemma~\ref{lem:batch-ballot-tv}, a total-variation bound between the
tabulated and actual batch laws is exactly the same bound between their induced
ballot laws. The natural question is therefore what
can be said about this expectation when sampling is performed with the distorted
distribution. Specifically, observe that if ballots are instead drawn from a
distorted distribution $\pi$ on $\Ballots$, the same tail bounds apply to the
expected value
\[
  \Exp[A(\ballot_\pi)],
\]
where $\ballot_\pi$ denotes a ballot drawn from $\pi$ (rather than the uniform
distribution). In light of Lemma~\ref{lem:tv-bounded},
\[
  \left| \Exp[A(\ballot_\pi)] - \Exp[A(\ballot_U)] \right| \leq 2\cdot M \cdot d_{\mathrm{TV}}(\pi,U).
\]
Recalling Theorem~\ref{thm:batchwise-smooth}, this distance in total variation
$d_{\mathrm{TV}}(\pi,U)$ is no more than the joint conservative bound
\[
  \Gamma_{\mathrm{tv}}(p)
  =
  \min\left\{
    1,
    \Theta(\eta_{\mathrm{good}})
    +c_{\mathrm{bad}}(p)\chi_{\mathrm{bad}}
  \right\}.
\]
Here
\begin{align*}
  \eta_{\mathrm{good}}
  &:=(1+\delta_{\mathrm{lower}})(1+\delta_{\mathrm{upper}})-1,\\
  c_{\mathrm{bad}}(p)
  &:=\frac{(1+\delta_{\mathrm{lower}})p}
           {1+\delta_{\mathrm{lower}}p},\\
  \chi_{\mathrm{bad}}
  &:=\max\left\{
      \Delta_{\mathrm{upper}},
      \frac{\Delta_{\mathrm{lower}}}{1+\Delta_{\mathrm{lower}}}
    \right\}.
\end{align*}
Here the lower and upper bounds enter jointly because both kinds of reweighting
affect the same normalizing constant.
In the symmetric case used in the body, this reduces to the familiar
\[
  d_{\mathrm{TV}}(\pi,U)
  \leq
  \min\left\{1,
  \frac{\delta}{2+\delta}+\frac{(1+\delta)p\Delta}{1+p\delta}\right\}.
\]
This provides an immediate means for adapting a conventional polling audit to a
setting with a distorted sampling distribution given by a statistical manifest.

\balance
\subsection{Comparison}

A standard comparison audit draws CVR rows uniformly
from the tabulated CVR. A conclusion of Fuller, Harrison, and Russell~\cite{harrison2022adaptive}
is that this process is risk-limiting so long as the total number of CVR rows is
at least the total number of ballots. (In particular, duplicated identifiers
cannot compromise risk.) In this setting, the only quantity of direct relevance
is the total number of ballots, an upper bound for which is directly provided by
Theorem~\ref{thm:batchwise-smooth} in the context of a directional manifest.
Specifically,
\[
  |\Ballots| \leq \size^{\tab}\bigl(1+\varepsilon_{\mathrm{upper}}(p)\bigr),
\]
where
\[
  \varepsilon_{\mathrm{upper}}(p)=(1-p)\delta_{\mathrm{upper}}+p\Delta_{\mathrm{upper}}.
\]
Lindeman et al.~\cite{lindeman2012bravo} suggested adding dummy rows and marking them when sampled with the maximum discrepancy. This is a risk-limiting audit with parameters that depend on the gap between the size of the CVR and the upper bound on the size of the ballot population. As discussed in the body, for standard parameterization of Kaplan-Markov, obtaining a comparison audit that terminates requires $\varepsilon_{\mathrm{upper}}(p)< \mu/5$. Kaplan-Markov is designed for nonzero values to be rare, so these added items can greatly inflate the number of samples required to complete.

Instead, for a CVR $\cvr$ and a row $r$ of $\cvr$, consider the value $\disc_{\cvr}^{\Ballots}(r)$ defined to be the minimum---taken over all ballots $\ballot$ that match the CVR row identifier---of $\disc_\cvr(\ballot)$; following Harrison, Fuller, and Russell~\cite{harrison2022adaptive}, in the event that there is no matching ballot (or this is a ``dummy'' row of the CVR), define the discrepancy as though the row $r$ is matched with a ballot holding a vote for the loser, which is to say that $\loser_\ballot - \winner_\ballot = 1$. It is easy to argue~\cite{harrison2022adaptive} that by this convention
\[
 \Exp[\disc_{\cvr}^{\Ballots}(r)] \geq \mu^\tab,
\]
for any invalid election whenever the CVR has at least as many rows as there are ballots. (Here $r$ is drawn uniformly from the CVR rows.)

Assuming, for simplicity, that the ballots are organized into a single batch, write $\epsilon=\varepsilon_{\mathrm{upper}}(p)$ and let $\cvr'$ be a tabulated CVR expanded by adding $\size^{\tab}\cdot \epsilon$ additional dummy rows. As described above, this ensures that the size of the CVR is, at least, the total number of ballots. The added rows have no reported votes, so the reported margin is unchanged while the diluted margin of the expanded CVR is $\mu^\tab/(1+\epsilon)$. Then
\begin{align*}
\frac{\mu^\tab}{1+\epsilon} &\leq \Exp_{r}[\disc_{\cvr'}^{\Ballots}(r)]\\
&= \Exp_{r\mid r \in \text{ Added Rows}}[\disc_{\cvr'}^{\Ballots}(r)]\Pr[r \in \text{ Added Rows}] \\
&\quad+ \Exp_{r\mid r \in \cvr}[\disc_{\cvr}^{\Ballots}(r)]\Pr[r \in \cvr]
\end{align*}
and therefore
\[
\frac{\mu^\tab}{1+\epsilon} \leq \frac{\epsilon}{1+\epsilon} + \frac{\Exp_{r\mid r \in \cvr}[\disc_{\cvr}^{\Ballots}(r)]}{1+\epsilon}.
\]
So instead of computing on the expanded $\cvr'$, we use Kaplan-Markov to test whether
\[
\Exp_{r \in \cvr}[\disc_{\cvr}^{\Ballots}(r)] \geq \mu^\tab - \epsilon.
\]
This provides satisfactory results as long as $\mu^\tab > \varepsilon_{\mathrm{upper}}(p)$.
As discussed in the body, our technique can be seen as controlling the margin ``sacrificed'' by adding these rows so they do not unduly cause the audit to output $\inconclusive$.

\fi


\fi
\end{document}